\newif\ifshort
	\newenvironment{claimO}[1][]{
		\if\relax\detokenize{#1}\relax
		\expandafter\@firstoftwo
		\else
		\expandafter\@secondoftwo
		\fi
		{\begin{claim}[$\star$]}{\begin{claim}[#1, $\star$]}
			}
			{
			\end{claim}
		}
	\newenvironment{lemmaO}[1][]{
		\if\relax\detokenize{#1}\relax
		\expandafter\@firstoftwo
		\else
		\expandafter\@secondoftwo
		\fi
		{\begin{lemma}[$\star$]}{\begin{lemma}[#1, $\star$]}
			}
			{
			\end{lemma}
		}
						\newenvironment{proofO}[1][Proof]{\begin{proof}[#1]}{\end{proof}}
						\newenvironment{claimproofO}[1]{\begin{claimproof}[#1]}{\end{claimproof}}
\newenvironment{lcases}
  {\left\lbrace\begin{aligned}}
  {\end{aligned}\right.}
\newtheorem{theorem}{Theorem}
\newtheorem{lemma}{Lemma}
\newtheorem{claim}{Claim}
\newtheorem{observation}{Observation}
\theoremstyle{definition}
\newtheorem{definition}{Definition}
\theoremstyle{definition}
\newtheorem{question}{Open Question}
\newcommand{\cqed}{\renewcommand{\qedsymbol}{$\lrcorner$}\qed}
\newenvironment{claimproof}{\noindent \emph{Proof of Claim~\theclaim.}}{\hfill\cqed\medskip}
\newcommand{\EF}{\textsc{EF}\xspace}
\newcommand{\EFone}{\textsc{EF1}\xspace}
\newcommand{\EFX}{\textsc{EFX}\xspace}
\newcommand{\EFXandPO}{\textsc{EFX+PO}\xspace}
\newcommand{\PO}{\textsc{PO}\xspace}
\newcommand{\fPO}{f\textsc{PO}\xspace}
\newcommand{\MNW}{\textsc{MNW}\xspace}
\title{EFX and PO Allocation Exists for Two Types of Goods}
\author{
  Vladimir Davidiuk\\
  St. Petersburg State University\\
  \And
  Yuriy Dementiev\\
  ITMO University\\
  \And
  Artur Ignatiev\\
  ITMO University\\
  \And
  Danil Sagunov\\
  ITMO University\\
}
\date{}
\begin{document}

\maketitle

\begin{abstract}


We study the problem of fairly and efficiently allocating indivisible goods among agents with additive valuations.
We focus on envy-freeness up to any good (\EFX)---an important fairness notion in fair division of indivisible goods. A central open question in this field is whether \EFX allocations always exist for any number of agents. While prior work has established \EFX existence for settings with at most three distinct valuations \cite{three_types_agents_EC} and for two types of goods \cite{two_types_goods}, the general case remains unresolved.

In this paper, we extend the existent knowledge by proving that \EFX allocations satisfying Pareto optimality (\PO) always exist and can be computed in quasiliniear time when there are two types of goods, given that the valuations are positive. This result strengthens the existing work of \cite{two_types_goods}, which only guarantees the existence of \EFX allocations without ensuring Pareto optimality.
Our findings demonstrate a fairly simple and efficient algorithm constructing an \EFXandPO allocation.

\end{abstract}

\section{Introduction}
Fair division of indivisible goods is a core research area in algorithmic game theory and computational social choice, focusing on the equitable allocation of discrete items—such as property, licenses, or humanitarian aid—among agents with heterogeneous preferences. 
The online platform Spliddit (spliddit.org) offers practical implementations of fair division algorithms, addressing real-world allocation challenges including rent distribution among roommates, taxi fare splitting, and fair assignment of goods between individuals.
A central goal is to achieve fairness and effectiveness guarantees that balance efficiency and equity, even when exact solutions are theoretically or computationally out of reach.

The gold standard of fairness, envy-freeness (EF) \cite{EF_foley1966resource}, ensures no agent prefers another’s allocation over their own.
Yet, EF allocations often fail to exist for indivisible goods, and even when they do, identifying them is NP-complete as shown by \cite{lipton} even with two agents.
This limitation has spurred the study of meaningful relaxations. 
The initial relaxation of envy-freeness is the concept of envy-freeness up to one good (\EFone), which was informally introduced by \cite{lipton} and later formally defined by \cite{Budish}. Under \EFone, an agent $i$ may envy agent $j$, provided there is at least one good in $j$’s bundle such that, if removed, $i$’s envy toward $j$ would disappear.
\EFone allocations always exist for indivisible goods under additive valuations and can be computed in polynomial time using algorithms like the envy cycle elimination method \cite{lipton}.
\cite{Caragiannis_2019} proved that Round-Robin algorithms leads to an \EFone allocation.
As proven by \cite{Caragiannis_2019}, allocations maximizing the Nash Social Welfare (MNW) are necessarily Pareto-efficient and envy-free up to one good.
Among these, envy-freeness up to any good (\EFX), introduced by \cite{Caragiannis_2019}, has emerged as a compelling alternative. 
\EFX softens the strictness of EF: while an agent may envy another’s bundle, this envy vanishes upon the removal of any single good from the envied bundle.
The existence of \EFX is still open for $n$ agents with arbitrary additive valuations.

Recent years have seen significant progress in answering this question, though a general solution remains elusive. 
The first positive existence result was shown by \cite{PlautR20} for when the agents have identical valuations, they showed that the lexicographic minimal (lexmin) solution guarantees \EFX in this case.
For two agents, \cite{PlautR20} showed that the cut and choose algorithm guarantees \EFX. 
They also extended their analysis beyond identical valuations, demonstrating that the Envy-Cycle Elimination algorithm also guarantees \EFX allocations for ordered instances---cases where agents share identical preference rankings over goods but may differ in their cardinal valuations.
\cite{ChaudhuryGM24_3agent_EFX} extended this result to three agents, while  \cite{two_types_agents} and \cite{three_types_agents_EC} proved existence for settings with two and three types of agents, respectively. 
On the other side, \cite{two_types_goods} demonstrated that \EFX allocations exist when goods belong to just two distinct types, even providing a partial characterization of when exact envy-freeness (\EF) is achievable in such cases.
A previous paper established the existence of EFX and PO allocations under lexicographic preferences \cite{EFX_lexicographic}.
Moreover, \cite{EFX_bivalued} proved that an EFX and fPO allocation exists and can be computed in polynomial time for bivalued instances.

Pareto optimality is a fundamental efficiency concept in fair division, ensuring no agent can improve their allocation without harming others.
\cite{mnw_efx} showed that when goods take on at most two possible values, the Maximum Nash Welfare (\MNW) solution guarantees \EFX and Pareto optimality (\PO), also they show a polynomial-time algorithm for \EFX based on matching. 
However, \cite{PlautR20} also revealed a sobering limitation: if agents can assign zero value to goods, \EFX and \PO may be incompatible. This leaves an open problem: Does an \EFX and \PO allocation always exist when all valuations are positive?

\paragraph{Our Contribution.}

We significantly advance the understanding of fair and efficient allocations for two types of goods. First, we prove that EFX+PO allocation always exist when all agents have positive utilities, while demonstrating that the stronger EFX+fPO combination remains impossible. This improves upon \cite{two_types_goods}, who established EFX existence without Pareto optimality.

We provide an efficient algorithmic solution: our approach computes such an allocation in $\mathcal{O}(n \log n + \log m)$ time for $n$ agents and $m$ goods.
It works as fast as $\mathcal{O}(\log n + \log m)$ time, if the agents are ordered by their relative valuations of good types. Additionally, we introduce a class of \emph{proper} allocations and prove that every \emph{proper} allocation is Pareto optimal, offering new structural insights into the interplay between fairness and efficiency.

\paragraph{Example: EFX and fPO are not always compatible.} 
Consider a simple example with two types of goods (croissant and coffee) and two agents (Alice ($1$) and Bob ($2$)). There are two croissants ($a$) and two coffees ($b$), with valuations $v_1(a) = v_2(a) = 1$ and $v_1(b) = 10$ and $v_2(b) = 9$. 
In the unique \EFX allocation, each agent receives one croissant and one coffee.
While this allocation is envy-free up to any good, it fails to be fractionally Pareto optimal (fPO): a fractional allocation where Alice gets $1.1$ coffee and Bob gets $0.9$ coffee plus two croissant would strictly improve the allocation, Alice has the same utility, but Bob increases his utility ($11$ for Alice and $10.1$ for Bob).
However, in the indivisible setting, this EFX allocation is Pareto optimal (PO), as no discrete reallocation can improve one agent's utility without harming the other.
This demonstrates that there may not be allocations that satisfy \EFX and \fPO simultaneously.
In our turn, we show that an \EFXandPO allocation always exists.

\begin{table}[ht]
    \centering
    \includegraphics[scale=0.3]{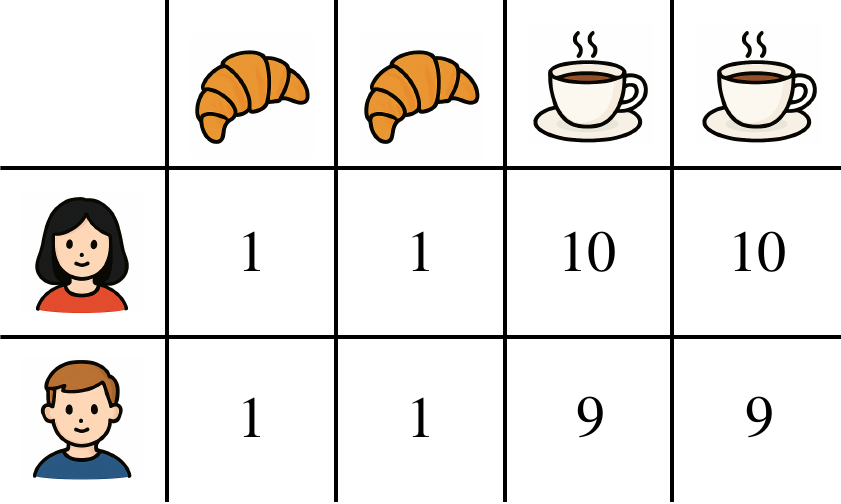}
    \caption{Example with two types of goods.}\label{fig:example}
\end{table}

\paragraph{Additional Related Work.}
The fair division of indivisible goods has witnessed remarkable theoretical and algorithmic progress in recent years, as documented in several comprehensive surveys \cite{overview23, aziz2022algorithmic,complexity_results_survey}.

One promising direction involves \EFX relaxation by allowing a small subset of goods to remain unassigned---often interpreted as charitable donations---while maintaining fairness guarantees.
While trivially satisfying envy-freeness by leaving all goods unallocated is meaningless, meaningful progress has been made in bounding both the number of discarded items and their welfare impact.
\cite{Gravin_donating} demonstrated that an \EFX allocation exists for a subset of goods while preserving at least half of the Maximum Nash Welfare.
\cite{ChaudhuryKMS21} developed an algorithm for computing partial \EFX allocations where: (i) at most $n-1$ goods remain unassigned, and (ii) no agent strictly prefers the set of unallocated goods to their own bundle, while \cite{BergerCFF22} and \cite{Mahara24} later reduced this to $n-2$ goods in general and just one good for four agent instances.
\cite{GhosalHN025} generalized these results, proving that for agents with at most $k$ distinct valuations, \EFX allocations exist with $k-2$ unassigned goods.

Approximate \EFX notions have also gained attention, where an allocation is $\alpha$-EFX if no agent envies another after scaling the other’s bundle without any good by $\alpha$, formally if for every pair of agents $i$ and $j$: $v_i(X_i) \geq \alpha \cdot v_i(X_j\setminus g)$ for any $g \in X_j$.
\cite{PlautR20} established the existence of $0.5$-\EFX allocations, and this bound was later improved to $0.618$ by \cite{AmanatidisMN20}. Most recently, \cite{AmanatidisFS24} showed that $\frac{2}{3}$-EFX allocations exist for up to seven agents or when agents have no more than three distinct valuations.
Building on this line of work, \cite{ChaudhuryGMMM21} elegantly bridged the concepts of approximation and charity, demonstrating that a $(1-\varepsilon)$-\EFX allocation can be efficiently computed while donating only a sublinear number of goods, all while preserving high Nash welfare for $\varepsilon \in (0, 0.5]$.

\paragraph{Setting.}
The setting consists of a set of $n$ agents $N = [n] = \{1, \ldots, n\}$, and a multiset  $M$ of indivisible goods that contains copies of two different goods: $g_1$ with multiplicity $m_1$ and $g_2$ with multiplicity $m_2$. 
Each agent $i \in N$ has a valuation function $v_i\colon 2^M \to \mathbb{R}_{>0}$, which quantifies the utility that $i$ derives from any subset of goods.
We focus on additive valuations, meaning the value of a subset of goods is simply the sum of the values of its individual items. Formally, for any subset $S \subseteq M$, the valuation function satisfies $v_i(S) = \sum_{g\in S} v_i(g)$. 
An allocation $X = (X_1, X_2, \dots, X_n)$ is a partition of $M$ into $n$ disjoint subsets, called bundles, where each agent $i$ receives the bundle $X_{i}$.
We match allocation $X$ with the multiplicity of goods in each agent's bundle $\{(x_{i,1}, x_{i,2}) \mid i\in [n]\}$, so we can assume that agents utility equals $v_i(X_i) = x_{i,1}\cdot v_{i,1} + x_{i,2}\cdot v_{i,2}$, where $v_{i,j}=v_i(\{g_j\})$ for each $i\in [n], j\in [2]$.

\begin{definition}[\EFX]
    An allocation $X$ is envy-free up to any good (\EFX) if, for every pair of agents $i, j \in N$, it holds that $v_i(X_i) + v_i(g) \geq v_i(X_j)$ for any $g\in X_j$.

   We say that agent $i$ envies agent $j$ up any good if $v_i(X_i) + \min_{g \in X_j} v_i(g) < v_i(X_j)$.
\end{definition}

\begin{definition}[\PO]
    An allocation $X$ is Pareto optimal (\PO) if there is no allocation $Y$ such that $v_i(Y_i) \geq v_i(X_i)$ for all $i \in N$ and $v_j(Y_j) > v_j(X_j)$ for some $j\in N$. Equivalently, we will say that such an allocation is not Pareto dominated by any other allocation.
\end{definition}

\section{The Setup}


To show our main result, we have to define several crucial concepts.
In this section, we give necessary definitions and formulate fundamental properties of our allocation structures.
They serve as a backbone of our algorithm.

Most of the properties in this section are given without proofs to ease the understanding of the construction and the algorithm itself.
Complete proofs of such properties and claims can be found in subsequent sections.

\paragraph{Input preprocessing.}
Our first goal is to make the input data much more convenient and reflecting the actual allocation properties. 

By $n$ we denote the number of agents, by $m_1$ and $m_2$ we denote the total quantities of items of first and second types respectively.
We naturally assume that $n\ge 1$ and $m_1,m_2\ge 1$.
Agents are identified with integer numbers in $[n]$.
For $i\in [n]$, $j\in [2]$, let $v_{i,j}$ denote the utility of an item of type $j$ as estimated by agent $i$.

We then divide the agents in two groups, the first group consists of agents $i$ that have $v_{i,2}\le v_{i,1}$ (agents that prefer a good of type $1$ to a good of type $2$ or value them equally),
the second consists of agents $i$ that have $v_{i,2}\ge v_{i,1}$\footnote{We do not use $v_{i,2}>v_{i,1}$ here for purpose.}.
Note that there may be several ways to divide the agents, choose any of them arbitrarily.
Let $n_1, n_2$ denote the obtained group sizes.
If it appears that $\frac{m_1}{n_1}< \frac{m_2}{n_2}$, we interchange the good types and the groups (that is why we defined the groups symetrically).
Then the inequality $\frac{m_1}{n_1} \geq \frac{m_2}{n_2}$ holds. If $n_2 = 0$, we consider that $\frac{m_2}{n_2} = \infty$.
This means, in particular, that $n_2>0$.

We assume that $v_{i,j}>0$ for each $i\in[n], j\in [2]$, as motivated in the introduction.
Then, we normalize the agent utilities: we replace $(v_{i,1},v_{i,2})$ with $(1,v_{i,2}/v_{i,1})$ for each agent $i\in [n]$.
Finally, we re-enumerate the agents in the way that $v_{i,2}\le v_{i+1,2}$ holds for each $i\in[n-1]$.
By definition of $n_1$ and $n_2$, we have that $\forall i \in [n_1]: v_{i,2}\le 1$ and $v_{n_1+1,2}\ge 1$.

The other thing crucial to our further constructions is that the total number of goods is at least the number of agents.

\begin{observation}\label{obs:too-few-items}
    If $m_1+m_2\le n$, give one good of the first type to each agent $i$ with $i\le m_1$.
    Give one good of the second type to each agent $i$ with $i\ge n-m_2+1$.
    This allocation is EFX+PO.
\end{observation}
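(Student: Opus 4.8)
## Proof Proposal

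The plan is to verify the two claimed properties separately: first that the described allocation is well-defined, then that it is \EFX, and finally that it is \PO. The well-definedness is the place to start: since $m_1 + m_2 \le n$, the set of agents $\{1,\dots,m_1\}$ receiving a type-$1$ good and the set $\{n-m_2+1,\dots,n\}$ receiving a type-$2$ good are disjoint (their index ranges cannot overlap because $m_1 < n - m_2 + 1$ follows from $m_1+m_2\le n$), every good is handed out exactly once, and each agent gets at most one good. So $X$ is a genuine allocation in which every bundle has size $0$ or $1$.

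For \EFX, the key observation is that when every bundle contains at most one good, envy ``up to any good'' is essentially vacuous: if agent $j$ holds a single good $g$, then $X_j \setminus \{g\} = \emptyset$, so the \EFX condition $v_i(X_i) + v_i(g) \ge v_i(X_j) = v_i(g)$ reduces to $v_i(X_i) \ge 0$, which holds trivially (indeed $v_i$ is positive). If $X_j = \emptyset$ there is nothing to check. Hence the allocation is \EFX regardless of which agents receive which goods, so this part requires no case analysis on the valuations at all.

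For \PO, I would argue that no allocation can Pareto-dominate $X$ because $X$ already gives every good to \emph{some} agent and the total is at most $n$, so in any allocation $Y$ each agent still receives at most... — more carefully: since $m_1+m_2 \le n$ and valuations are positive, any Pareto improvement $Y$ would have to give some agent $j$ strictly more value while not decreasing anyone's value. But the sum of all agents' values is bounded, and more to the point, I would show directly that $X$ maximizes the product (or is even welfare-maximizing in the relevant sense) — actually the cleanest route is: observe that every good is allocated in $X$, and the specific assignment matches low-index agents (who, after normalization, value type-$1$ goods relatively more, i.e.\ $v_{i,2}\le v_{i,1}$ for small $i$) with type-$1$ goods and high-index agents (who value type-$2$ goods relatively more) with type-$2$ goods; this is a bipartite assignment that maximizes total welfare $\sum_i v_i(X_i)$, and any welfare-maximizing allocation is \PO. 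Since in $Y$ every agent with a good in $X$ must keep at least as much value and someone strictly gains, $Y$ would have strictly larger total welfare, contradicting maximality.

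The main obstacle is the \PO argument: I need to confirm that the particular index-based pairing really is welfare-maximal. Because utilities are normalized so that $v_{i,1}=1$ for all $i$ and $v_{i,2}$ is nondecreasing in $i$, and because $i\le n_1$ implies $v_{i,2}\le 1$ while $i > n_1$ implies $v_{i,2}\ge 1$, assigning the $m_1$ type-$1$ goods to the $m_1$ smallest-index agents and the $m_2$ type-$2$ goods to the $m_2$ largest-index agents is an application of the rearrangement inequality / a standard exchange argument: swapping a type-$2$ good from a lower-index agent to a higher-index agent never decreases welfare. One subtlety to handle is the interaction with the earlier normalization and the $\frac{m_1}{n_1}\ge\frac{m_2}{n_2}$ convention — I would check that $m_1 \le n_1$ or handle the overflow case where some agents in the first group get a type-$2$ good — but since here $m_1+m_2\le n$ is so restrictive, the index ranges are comfortably disjoint and the exchange argument goes through cleanly. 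I expect this whole observation to admit a short proof, consistent with its being stated as an \texttt{observation} rather than a lemma.
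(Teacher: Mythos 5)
Your well-definedness and \EFX arguments are fine and coincide with the paper's: the index ranges are disjoint because $m_1+m_2\le n$, every bundle has size at most one, and removing any good from a singleton bundle leaves the empty set, so \EFX holds vacuously.

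The \PO argument has a gap. You assert that $X$ ``maximizes total welfare $\sum_i v_i(X_i)$'' and conclude that a Pareto-dominating $Y$ would contradict maximality. But $X$ is welfare-maximal only within the class of allocations giving each agent at most one good; over \emph{all} allocations it generally is not. After normalization the total welfare of any allocation is $m_1+\sum_i x_{i,2}\,v_{i,2}$, so giving all $m_2$ type-$2$ goods to agent $n$ yields $m_1+m_2 v_{n,2}$, which strictly exceeds the welfare of $X$ whenever $v_{n-m_2+1,2}<v_{n,2}$. Hence ``contradicting maximality'' does not follow until you first show that the dominating allocation $Y$ itself gives at most one good per agent --- and this is precisely where positivity of the valuations is needed a second time (beyond your \EFX use of it): each of the $m_1+m_2$ agents holding a good in $X$ has positive utility there, so must receive at least one good in $Y$; since there are only $m_1+m_2$ goods, $Y$ gives exactly one good to each such agent and nothing to anyone else. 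The paper makes exactly this step explicit. Once it is inserted, your exchange/rearrangement observation (the $m_2$ largest values of $v_{i,2}$ belong to agents $n-m_2+1,\ldots,n$ by the sorting) is the same computation the paper performs, and the proof closes.
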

\begin{proof}
The observed allocation $X$ is simply EFX, since each agent receives at most one good.

To see that the allocation is Pareto-optimal, assume the contrary, and there is an allocation $Y$ that Pareto-dominates $X$.
First, if an agent received one item in $X$ (its utility in $X$ is positive), he should receive at least one item in $Y$.
Hence, each agent gets at most one item in $Y$.
The total utility over all agents in $Y$ equals
$|S_1|+\sum_{i\in S_2} v_{i,2}=m_1+\sum_{i\in S_2} v_{i,2},$
where $S_1$ and $S_2$ are the sets of agent numbers that get a good of type $1$ and type $2$ in $Y$ respectively.

As soon as $Y$ Pareto-dominates $X$, the total utility of all agents in $X$ should be less than $m_1+\sum_{i\in S_2} v_{i,2}$.
This is impossible because $\sum_{i\in S_2}v_{i,2}$ is upper-bounded by $\sum_{i=n-m_2+1}^n v_{i,2}$ --- the total utility provided by goods of the second type in $X$.
\end{proof}

We are ready to summarize our restrictions on the input, that all hold true after the input is preprocessed.
We provide them below.

\paragraph{Preprocessed input restrictions}
\begin{itemize}
    \item For each $i\in[n]$, $v_{i,1}=1$ and $v_{i,2}>0$;
    \item For each $i\in [n-1]$, $v_{i,2}\le v_{i+1,2}$;
    \item $n_1+n_2=n$ and $v_{n_1,2}\le 1$ and $v_{n_1+1,2}\ge 1$;
    \item $m_1/n_1\ge m_2/n_2$ and $m_2>0$ and $n_2>0$;
    \item $m_1+m_2\ge n$.
\end{itemize}

Thoughout the rest of the paper, we always assume that the input satisfies the above constraints.
Our results might not hold true if these constraints are not satisfied.


\paragraph{Proper allocations.}

In our work, the Pareto-optimality is achieved in all of the constructed allocations, both intermediate ones (that might not be EFX) and the resulting one (that is necessary EFX).
All of our allocations fall under the notion given below.

\begin{definition}[Proper allocation]\label{def:proper-distribution}
An allocation of goods $\{(x_{i,1},x_{i,2})\}$ is \emph{proper}, if there exists $t\in[n]$
such that
\begin{itemize}
    \item for each $i\in[t-1]$, $x_{i,2}=0$, and
    \item for each $i\in [t+1, n]$, $x_{i,1}< v_{t,2}$.
\end{itemize}
\end{definition}

For example, the allocation in \Cref{obs:too-few-items} is proper (choose $t=m_1$).
As announced above, we will prove that all proper allocations are PO,
as formulated below.

\begin{theorem}\label{thm:proper-is-po}
    If allocation is proper, then it is Pareto-optimal.
\end{theorem}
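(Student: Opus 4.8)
The plan is to argue by contradiction: suppose a proper allocation $X = \{(x_{i,1}, x_{i,2})\}$ with threshold index $t$ is Pareto-dominated by some allocation $Y = \{(y_{i,1}, y_{i,2})\}$. I want to exploit the two structural features of properness: agents $1, \dots, t-1$ hold only type-$1$ goods, and agents $t+1, \dots, n$ hold strictly fewer than $v_{t,2}$ type-$1$ goods each (in fact, since the $x_{i,1}$ are integers and $v_{t,2}$ may be small, this is a genuine constraint). Recall the normalization $v_{i,1} = 1$ for all $i$, and the monotonicity $v_{1,2} \le v_{2,2} \le \dots \le v_{n,2}$, with $v_{i,2} \le 1$ for $i \le n_1$ and $\ge 1$ for $i > n_1$. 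The key economic intuition is that a proper allocation already sorts the "type-$2$-loving" agents to the right and gives them the type-$2$ goods, while type-$1$ goods are concentrated on the left; any Pareto improvement would have to reshuffle goods against these gradients, which additivity forbids.

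First I would set up the accounting. Since $Y$ Pareto-dominates $X$, we have $v_i(Y_i) \ge v_i(X_i)$ for all $i$ and strict inequality somewhere; summing, $\sum_i v_i(Y_i) > \sum_i v_i(X_i)$. I would then try to show directly that \emph{no} allocation can beat $X$ on this weighted-by-owner sum, or more precisely, exhibit a local exchange argument. The cleanest route: consider the bipartite "transfer" between $X$ and $Y$ — goods that move from one agent to another. Decompose the difference into elementary transfers and show each either decreases someone's utility below their $X$-level or is utility-neutral but then a further transfer is forced. Concretely, if some agent $i < t$ receives a type-$2$ good in $Y$ that it did not have in $X$, then to keep $v_i(Y_i) \ge v_i(X_i)$ it must be that this is net-beneficial; but since $i \le n_1$ forces $v_{i,2} \le 1 = v_{i,1}$, trading away a type-$1$ good for a type-$2$ good is not strictly helpful for $i$, and trading nothing away violates feasibility (the good came from someone). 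I would trace the displaced good: whoever loses it must be compensated, and the compensation chain, ordered by the agent indices, must terminate — and at termination we either contradict feasibility or find an agent strictly worse off.

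The second half handles the right side: an agent $i > t$ has $x_{i,1} < v_{t,2}$, meaning it holds few type-$1$ goods in $X$. In $Y$, if it gains type-$1$ goods, those come at the expense of left agents or agent $t$ itself; since left agents value type-$1$ at least as much as type-$2$ (indeed agent $t$ has $v_{t,2} \ge$ the amount being compared), such a transfer cannot be a Pareto improvement. I expect the main obstacle to be making the "chase the displaced good" argument rigorous without enumerating cases — specifically, proving that the exchange graph between $X$ and $Y$ has no augmenting structure. A robust way to close this is to find explicit \emph{supporting prices}: set the price of a type-$1$ good to $1$ and the price of a type-$2$ good to $v_{t,2}$, and show that under these prices every agent's $X$-bundle is a utility-maximizing affordable bundle (agents $i \le t$ weakly prefer type-$1$ since $v_{i,2} \le v_{t,2}$ by monotonicity when $i \le n_1$, and the budget each agent spends in $X$ is at least what it would need for any bundle weakly preferred to $X_i$). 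Exhibiting such a price vector makes $X$ a competitive equilibrium outcome, which is automatically Pareto-optimal by the first welfare theorem; the remaining work is the routine verification that the proper-allocation inequalities are exactly what is needed for the budget constraints to hold. The delicate point in that verification is the boundary agent $t$ itself and agents with $v_{i,2}$ close to $v_{t,2}$, where the integrality $x_{i,1} < v_{t,2}$ (strict, integer left-hand side) is used to rule out a same-price swap.
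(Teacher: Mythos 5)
Your price-based argument is correct and complete in outline, but it is a genuinely different packaging from the paper's proof, so a comparison is worth recording. The paper argues directly on the difference $\Delta_{i,j}=y_{i,j}-x_{i,j}$: it first normalizes the dominating allocation so that no agent gains goods of both types (\Cref{lem:reasonable-po}), then proves two aggregate exchange-rate inequalities (\Cref{claim:given-one-lb-two} and \Cref{claim:given-two-lb-one}) saying that agents acquiring type-$1$ goods must surrender type-$2$ goods at rate at least $v_{t,2}$ and vice versa, and closes by conservation of goods forcing both inequalities to be equalities. Your supporting prices $p=(1,v_{t,2})$ encode exactly the same exchange rate, and the per-agent budget inequalities you need to verify ($p(Y_i)\ge p(X_i)$ whenever $v_i(Y_i)\ge v_i(X_i)$, strict under strict preference) are precisely the paper's two claims written agent-by-agent; your version has the advantage of not needing the normalization lemma at all, since the price comparison holds for an arbitrary dominating bundle. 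Two cautions. First, you cannot literally invoke ``the first welfare theorem'': in its standard (divisible-goods) form it would yield \fPO, and the paper's croissant-and-coffee example exhibits a proper allocation that is \emph{not} \fPO --- indeed, at your prices the bundle of an agent $i>t$ need not be utility-maximizing over \emph{fractional} affordable bundles. What you actually need is the elementary integral version (sum the budget inequalities over agents and compare to $p(M)$), whose proof is three lines but must be stated for integral dominating allocations only. Second, your phrasing of the budget condition (``the budget each agent spends in $X$ is at least what it would need for any bundle weakly preferred to $X_i$'') states the inequality in the wrong direction; the condition to verify is that every weakly preferred integral bundle costs \emph{at least} $p(X_i)$. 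You do correctly locate where properness and integrality enter: for an agent $i>t$ acquiring extra type-$2$ goods, the bound $x_{i,1}<v_{t,2}$ together with $z_2-x_{i,2}\ge 1$ is exactly what closes the case, mirroring the second case in the proof of \Cref{claim:given-two-lb-one}. The first half of your proposal (chasing displaced goods along a compensation chain) is not rigorous as written, but since the price argument supersedes it, this does not affect the verdict.
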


\paragraph{Prioritized equitable allocations.}

The following notion is basic to our allocation constructions.
It formally defines the process of dividing the identical goods between specific agents in (almost) equal parts, where some specific agents should receive the greater part.

The notion below defines this process formally.
Note that we use it for \emph{partial} allocation of goods, while \emph{complete} allocations are formed via performing several partial allocations consecutively.

\begin{definition}[Prioritized equitable allocation, PEA]
    For integers $c\in [2], q\in [m_t], s \in [n]$, and a sequence of distinct agents $a_1, a_2,\ldots, a_s\in [n]$, we say that we give  $q$ goods of type $c$ \emph{equitably prioritized} to the agents $a_1, a_2, \ldots, a_s$, if
    \begin{itemize}
        \item for each $i\in [s]$, agent $a_i$ receives either $\lfloor q/s\rfloor$ or $\lceil q/s\rceil$ goods of type $c$;
        \item for each $1\le i < j \le s$, agent $a_j$ receives at least as many goods of type $c$ as $a_i$.
    \end{itemize}
\end{definition}

Equivalently, if we give $q$ goods of type $c$ to $s$ agents $a_1,a_2,\ldots, a_s$ equitably prioritized, and $r$ is the remainder of division of $q$ by $s$, then agents $a_1, a_2, \ldots, a_{s-r}$ receive $\lfloor q/s\rfloor$ goods each, and agents $a_{s-r+1},\ldots, a_{s}$ receive $\lceil q/s\rceil$ goods each.

\paragraph{Split allocations.}

We move on to the central allocation construction of our algorithm, the \emph{split allocations}.
Simply speaking, in a split allocation, only a specified agent $t$ (the ``split point'') can receive goods of both types simultaneously.
Agents $i<t$ or agents $i>t$ receive only goods of the first type or the second type respectively.

A split allocation is uniquely identified by two integers: $t\in[n]$ and $k\in [m_2]$ --- the number of goods of the second type given to agent $t$.
These integers should fit under specific constraints.
We define all valid pairs of integers $\mathcal{T}\subset [n]\times [m_2]$:
$$\mathcal{T}=\left\{(t,k): v_{t,2}\ge 1, t<n, k\le \frac{m_2}{n-t+1}\right\}\cup \{(n,m_2)\}.$$ 

Note that $\mathcal{T}$ is not empty since $n_2>0$ and $m_2>0$.
We now give formal definition to split allocations.

\begin{table}
\centering
{\small
\begin{tabular}{|c|c|c|}
    \hline
    \multicolumn{3}{|c|}{$(t,k)$-split-allocation as PEA sum} \\
    \hline
    \begin{tabular}[c]{@{}c@{}} Type\\$(c)$\end{tabular}  & 
    \begin{tabular}[c]{@{}c@{}} Quantity\\$(q)$\end{tabular}  &  
    \begin{tabular}{c}Agent seq. \\ $(a_1,\ldots, a_s)$\end{tabular}
     \\
    \hline
      $2$ & $k$ & $t$ \\
    \hline
     $2$& $m_2-k$ & $t+1,\ldots, n$ \\
    \hline
     $1$ & $\min\{m_1, \lceil kv_{t,2}\rceil\cdot  (t-1)\}$ & $1,\ldots, t-1$ \\
    \hline
    $1$ & $\max\{m_1 - \lceil kv_{t,2}\rceil \cdot (t-1)\}$ & $1,\ldots, t$ \\
    \hline
\end{tabular}}
\caption{Expessing $(t,k)$-split-allocation as a series of prioritized equitable allocations.}\label{fig:split-via-pea}
\end{table}

\begin{definition}[Split allocations]\label{def:split-distribution-not-numerical}
For $(t,k)\in\mathcal{T}$, the \emph{$(t,k)$-split-allocation} is defined by
\begin{enumerate}
    \item Give $k$ goods of the second type to agent $t$.
    \item Give $(m_2-k)$ remaining goods of the second type equitably prioritized to agents $t+1, t+2,\ldots, n$.
    \item Let $p=\lceil k\cdot v_{t,2} \rceil$.
    Give $\min\{p(t-1), m_1\}$ goods of the first type equitably prioritized to agents $1,2,\ldots, t-1$.
    \item Give remaining $\max\{0,m_1-p(t-1)\}$ goods of the first type equitably prioritized to agents $1,2,\ldots, t$.
\end{enumerate}
\end{definition}

We summarize that the $(t,k)$-split-allocation is expressed as a sequence of four prioritized equitable allocations in \Cref{fig:split-via-pea}.
Note that the agent $t$ can receive $0$ goods of the first type, as demonstrated in \Cref{fig:split-distrib-a}.

\begin{table}[ht]
    \small
    \centering
    \begin{tabular}{r|c|c|c|c|c|c|c|c|c|}
        & \multicolumn{1}{c}{$1$} & \multicolumn{1}{c}{$\ldots$}  & \multicolumn{1}{c}{$t-1$}  & \multicolumn{1}{c}{$t$} & \multicolumn{1}{c}{$t+1$}  & \multicolumn{1}{c}{$\ldots$}  & \multicolumn{1}{c}{$n$} \\
        \hline
         $1$ & $\lfloor \frac{m_1}{t-1}\rfloor$ & $\ldots$ & $\lceil \frac{m_1}{t-1}\rceil$ & $0$ & $0$ & $\ldots$ & $0$ \\
         \cline{2-8}
         $2$ & $0$&$\ldots$&$0$&$k$&$\lfloor \frac{m_2-k}{n-t}\rfloor$&$\ldots$&$\lceil \frac{m_2-k}{n-t}\rceil$\\
         \cline{2-8}
    \end{tabular}
    \caption{Structure of $(t,k)$-split-allocations with $m_1\le p(t-1)$. The number in $i$\textsuperscript{th} column and $j$\textsuperscript{th} row equals $x_{i,j}$, the number of goods of type $j$ given to agent $i$.}
    \label{fig:split-distrib-a}
\end{table}

The following simple observation explains why split allocations are PO.

\begin{observation}
    For each $(t,k)\in \mathcal{T}$, the $(t,k)$-split-allocation is proper.
\end{observation}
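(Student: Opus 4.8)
The plan is to show that any $(t,k)$-split-allocation satisfies the two bullet points of \Cref{def:proper-distribution} with the split point being exactly the agent $t$ from the pair $(t,k)\in\mathcal{T}$. The first bullet is immediate: by steps (1)--(4) of \Cref{def:split-distribution-not-numerical}, goods of type $2$ are handed only to agents $t, t+1, \ldots, n$, so every agent $i$ with $i < t$ receives $x_{i,2}=0$, which is exactly the first condition.

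The real content is the second bullet: for every agent $i \in [t+1, n]$ we must verify $x_{i,1} < v_{t,2}$. Here I would split into the two cases $(t,k) \in \mathcal{T} \setminus \{(n,m_2)\}$ and $(t,k) = (n,m_2)$. In the degenerate case $t = n$ the range $[t+1,n]$ is empty, so there is nothing to check. In the main case, agents $t+1, \ldots, n$ receive no goods of type $1$ at all (steps (3) and (4) distribute type-$1$ goods only among $1, \ldots, t-1$ and $1, \ldots, t$), so $x_{i,1} = 0$ for all such $i$. Then the condition $x_{i,1} < v_{t,2}$ reduces to $0 < v_{t,2}$, which holds because all valuations are positive (indeed $(t,k) \in \mathcal{T}$ forces $v_{t,2} \ge 1$ anyway). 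Hence the choice $t$ works.

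I expect the only subtlety — and it is minor — is bookkeeping about which agents receive which type of good across the four PEA steps, making sure no type-$2$ good slips to an agent with index below $t$ and no type-$1$ good reaches an agent with index above $t$; this is a direct read-off of \Cref{def:split-distribution-not-numerical} and \Cref{fig:split-via-pea}. There is no genuine obstacle here: once the allocation is written out, both conditions of properness hold essentially by inspection, with positivity of $v_{t,2}$ (guaranteed by the preprocessing and by membership in $\mathcal{T}$) supplying the strict inequality in the second condition.
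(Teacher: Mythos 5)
Your proof is correct and is exactly the direct verification the paper has in mind (the paper states this observation without proof precisely because it follows by inspection of \Cref{def:split-distribution-not-numerical}): the split point of the properness definition is the same $t$, agents below $t$ get no type-$2$ goods, agents above $t$ get no type-$1$ goods, and $v_{t,2}\ge 1>0$ (with the $t=n$ case vacuous) gives the strict inequality.
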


\paragraph{Envy direction.}
Our algorithm will encounter several split allocations.
If at least one of split allocations it encounters is EFX, the algorithm will be fine.

But if the $(t,k)$-split-allocation is not EFX for $(t,k)\in \mathcal{T}$, what can we say about the envy?
We will show that our design of split allocations is such that it is not possible that agent $i<t$ is envious and agent $j>t$ is envious simultaneously in the $(t,k)$-split-allocation.
We proceed to formal notions.

\begin{definition}[Left-envious (LE) and right-envious (RE) allocations]
We say that an allocation $X$ is \emph{left-envious}, or \emph{LE} (\emph{right-envious}, or \emph{RE}), if $X$ is not EFX, and there exists $i,j$ such that agent $j$ envies (up to any item) agent $i$ for $i < j$ ($i > j$).
\end{definition}

 We will prove that no $(t,k)$-split-allocation can have envy in both directions.

\begin{restatable}{theorem}{envyOneDirection}\label{thm:envy-one-direction}
    For each $(t,k)\in \mathcal{T}$, the $(t,k)$-split-allocation cannot be LE and RE simultaneously.
\end{restatable}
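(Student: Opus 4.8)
The plan is to argue by contradiction: suppose the $(t,k)$-split-allocation $X$ is simultaneously LE and RE. Then there is a right-side agent $j>t$ who envies (up to any good) some agent $i<j$, and a left-side agent $j'<t$ who envies (up to any good) some agent $i'>j'$. The first move is to locate the ``worst'' bundles involved. Since within the block $\{t+1,\dots,n\}$ goods of type $2$ are distributed equitably prioritized, and all these agents hold only type-$2$ goods, envy among them up to any good is impossible (removing one good from the larger bundle makes it no larger than the smaller one by the $\lfloor q/s\rfloor,\lceil q/s\rceil$ structure). Hence a right-envious agent $j>t$ must envy either agent $t$ or some agent $i<t$. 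Symmetrically, on the left the agents $1,\dots,t-1$ hold only type-$1$ goods distributed equitably prioritized, so a left-envious agent $j'<t$ cannot envy another agent in $\{1,\dots,t-1\}$ up to any good; the envied agent $i'$ must therefore be agent $t$ or some agent $i'>t$. So both the LE and the RE witnessed envy must ``cross'' the split point and, in the most restrictive case, involve agent $t$.

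The second step is to extract the two key numerical inequalities and play them against $(t,k)\in\mathcal T$. For the RE case, an agent $j>t$ holding $x_{j,2}\le \lceil (m_2-k)/(n-t)\rceil$ goods of type $2$ envies up to any good either agent $t$ (who has $k$ goods of type $2$ and possibly some type-$1$ goods) or some left agent $i<t$ (who has $x_{i,1}\ge \lfloor m_1/(t-1)\rfloor\ge p=\lceil k v_{t,2}\rceil$ goods of type $1$, in the branch where $m_1\ge p(t-1)$; in the other branch agent $t$ also holds some type-$1$ goods). Using $v_{j,2}\ge v_{t,2}\ge 1$ and the bound $k\le m_2/(n-t+1)$ from the definition of $\mathcal T$, I would show $v_j(X_j)=x_{j,2}v_{j,2}$ is already large enough that after removing one good from the envied bundle no envy remains — in other words, the RE condition forces $x_{j,2}$ to be small, which via $k\le m_2/(n-t+1)$ forces a lower bound on $k$ relative to $m_2-k$ that is consistent only with specific values. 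Symmetrically, the LE condition forces a bound going the ``other way'': a left agent $j'<t$ with $x_{j',1}\ge \lfloor m_1/(t-1)\rfloor$ type-$1$ goods envying agent $t$ (utility $\ge k v_{t,2}$, which is close to $p$) or a right agent forces $m_1/(t-1)$ to be small relative to $k v_{t,2}$. The crucial point is that the choice $p=\lceil k v_{t,2}\rceil$ and the allocation of $\min\{p(t-1),m_1\}$ type-$1$ goods to the first $t-1$ agents is engineered precisely so that agent $t$'s type-$2$-value $k v_{t,2}$ is comparable to each left agent's type-$1$ count: this makes the LE inequality and the RE inequality point in opposite directions, and their conjunction collapses.

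The third step is the case analysis on which agent is envied, since ``up to any good'' behaves differently depending on whether the envied bundle is mixed (only agent $t$) or pure. I would handle four combinations — (RE envies $t$ / RE envies $i<t$) $\times$ (LE envies $t$ / LE envies $i'>t$) — but the pure-pure subcases reduce to comparing $x_{j,2}v_{j,2}$ against $x_{i,1}$ and $x_{j',1}$ against $x_{i',2}v_{j',2}$, where the equitable-prioritized structure pins the relevant counts to within $1$; the subcases touching agent $t$ additionally use $v_{t,2}\ge 1$ and the $\mathcal T$-bound on $k$. In each combination I would derive a contradiction of the form ``$m_1/(t-1)$ is both $>$ something and $\le$ that same something (up to rounding)''. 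It may be cleanest to first dispose of the degenerate boundary $(t,k)=(n,m_2)$ separately (there the right block is empty, so RE is vacuous), and likewise $t=1$ or $t-1=0$ edge cases where a block is empty.

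The main obstacle I anticipate is the ceiling/floor bookkeeping: because $p=\lceil k v_{t,2}\rceil$ and the split uses both $\lfloor m_1/(t-1)\rfloor$ and $\lceil m_1/(t-1)\rceil$, the off-by-one slack in ``envy up to any good'' has to be tracked carefully so that the two opposing inequalities genuinely contradict rather than merely being tight. I expect the argument to hinge on the precise claim that each agent $i<t$ receives at least $p$ goods of type $1$ whenever $m_1\ge p(t-1)$ (so that $v_i(X_i)\ge p\ge k v_{t,2}$, matching agent $t$'s value), and on the symmetric claim controlling the right block; getting these two bounds to interlock without a gap is the delicate part, and it is exactly where the design choice of $\min\{p(t-1),m_1\}$ versus the ``remaining'' goods in step 4 of \Cref{def:split-distribution-not-numerical} does the work.
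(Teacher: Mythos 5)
Your overall strategy coincides with the paper's: assume LE and RE hold simultaneously, show every envy must cross the split point $t$, and then derive two numerical inequalities that point in opposite directions. However, the proposal leaves the decisive quantitative step unexecuted, and two facts it relies on are missing or misstated. First, your bound $x_{i,1}\ge\lfloor m_1/(t-1)\rfloor\ge p$ for left agents is only valid in case (b) of \Cref{def:split-distribution}, i.e.\ when $m_1\ge p(t-1)$; the paper first proves (\Cref{lem:le-case-b}) that the LE hypothesis itself \emph{forces} $m_1\ge p(t-1)$, which is what legitimizes that bound. You do not derive this, and your remark that ``in the other branch agent $t$ also holds some type-$1$ goods'' is backwards: in the branch $m_1<p(t-1)$ agent $t$ holds \emph{no} type-$1$ goods, and that branch is exactly the one you would need to rule out. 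Second, the paper then shows (\Cref{lem:envy-case-b-right}) that under $m_1\ge p(t-1)$ there is no envy at all between agents $i<t$ and agent $t$, which collapses your four-way case analysis to a single configuration: both the LE and the RE witness are an envy between some $i\le t$ and some $j>t$ strictly. Without these two lemmas your case analysis does not close.

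The remaining gap is the one you yourself flag as ``the delicate part'': you never exhibit the common quantity that the two envies bound strictly from opposite sides. The paper's contradiction is not about $m_1/(t-1)$ but about $p+q_1$ versus $q_2\cdot v_{t,2}$, where $q_1=\lfloor(m_1-p(t-1))/t\rfloor$ (division by $t$, not $t-1$) and $q_2=\lfloor(m_2-k)/(n-t)\rfloor$, with a separate sub-claim for the remainder $r_1=0$ versus $r_1>0$ so that both inequalities come out \emph{strict}; the reduction of all agent-specific valuations $v_{i,2},v_{j,2}$ to the single pivot $v_{t,2}$ via monotonicity is what makes the two bounds refer to the same quantity. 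As written, your sketch concedes that the bounds might ``merely be tight'' up to rounding, which is precisely the possibility that must be excluded. (A minor but pervasive issue: your usage of LE and RE is swapped relative to the paper's definition --- LE means some agent envies an agent to its \emph{left} --- so your ``RE case'' describes the paper's LE witness and vice versa.)
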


There are two natural ``extremal'' cases of split allocations.
One case is the $(n,m_2)$-split-allocation, where all goods of the second type are given to agent $n$.
This allocation maximizes the total utility provided by goods of the second type.
The other case is when goods of the second type receive the leftmost positions possible for a split allocation.
This case is the $(t,\lfloor m_2/(n-t+1)\rfloor)$-split-allocation, where $t\in [n]$ minimum possible such that $v_{t,2}\ge 1$ and $m_2\ge n-t+1$ (there can be no $(t',k')\in \mathcal{T}$ with $t'<t$).
Following this logic, we define orderings of split allocations.

\begin{definition}[Complete ordering of split allocations]
A complete linear ordering $\prec$ on the set $\mathcal{T}$ is defined by $$(t_1, k_1)\prec (t_2,k_2)$$ if and only if $t_1 < t_2$ or $t_1=t_2$ \emph{and} $k_1>k_2$.
\end{definition}

The two split allocations described above are naturally maximal and minimal elements in $\mathcal{T}$ with respect to $\prec$. 
Our second result on envy directions uncovers that these two extremal allocations have specific envy directions (if they are not EFX).

\begin{restatable}{theorem}{envyLeftRightBounds}\label{thm:left-and-right-bounds}
    Let $(t_L,k_L)$ be the minimum element in $\mathcal{T}$ with respect to $\prec$.
    Let $(t_R,k_R)$ be the maximum element in $\mathcal{T}$ with respect to $\prec$.
    Both of the following is true:
    \begin{enumerate}
        \item The $(t_L, k_L)$-split-allocation is either EFX or LE,
        \item The $(t_R, k_R)$-split-allocation is either EFX or RE.
    \end{enumerate}
\end{restatable}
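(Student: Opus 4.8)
The plan is to prove the two items separately, and in each to establish the slightly stronger claim that one particular direction of envy cannot occur. Recall that if an allocation $X$ is not \EFX then some agent $j$ envies some agent $i\ne j$ up to any good; if $i<j$ this is an \emph{LE} witness and if $i>j$ an \emph{RE} witness. So for item~1 it suffices to show that in the $(t_L,k_L)$-split-allocation no agent $j$ envies up to any good an agent $i$ with $i>j$ (``no right-envy''), and for item~2 that in the $(t_R,k_R)$-split-allocation no agent $j$ envies up to any good an agent $i$ with $i<j$ (``no left-envy''). In both cases I would argue by casework on the positions of $j$ and $i$ relative to the split point, using the decomposition of a split allocation into the four prioritized equitable sub-allocations of \Cref{fig:split-via-pea}. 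Two elementary observations recur throughout: two bundles consisting of a single identical good type can be compared directly by their sizes, and prioritized equitable allocation guarantees that within each sub-allocation a higher-indexed agent never receives fewer goods than a lower-indexed one, so there is no right-envy between agents whose bundles are assembled from the same sub-allocations.

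\textbf{Item 2.} Here $(t_R,k_R)=(n,m_2)$: agent $n$ holds all $m_2$ goods of type~$2$ plus $x_{n,1}=\max\{0,\lceil(m_1-p(n-1))/n\rceil\}$ goods of type~$1$, where $p=\lceil m_2v_{n,2}\rceil$, and every agent $i<n$ holds only type-$1$ goods. If the envier $j$ is not agent $n$, then $j$ and $i<j$ hold only type-$1$ goods and prioritization gives $v_j(X_j)=x_{j,1}\ge x_{i,1}=v_j(X_i)$, ruling out envy. The one real case is $j=n$: there I would show (splitting on whether $m_1\le p(n-1)$) that $x_{i,1}\le x_{n,1}+p$, and then use $p=\lceil m_2v_{n,2}\rceil\le m_2v_{n,2}+1$ to obtain $v_n(X_i)=x_{i,1}\le x_{n,1}+m_2v_{n,2}+1=v_n(X_n)+1$; since every good of $X_i$ is worth exactly $1$ to agent~$n$, this excludes envy up to any good. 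The point is that the choice $p=\lceil m_2v_{n,2}\rceil$ baked into the split construction makes agent~$n$'s type-$2$ value large enough to absorb the extra type-$1$ goods of the others.

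\textbf{Item 1.} Write $t=t_L$, $k=k_L$, $p=\lceil kv_{t,2}\rceil$; since $v_{t,2}\ge 1$ we have $p\ge k\ge 1$. Agents $1,\dots,t-1$ hold only type-$1$ goods, agents $t+1,\dots,n$ hold only type-$2$ goods with each getting $k$ or $k+1$ of them, and $x_{t,2}=k$. The within-block cases ($j$ and $i$ both below $t$, or both above $t$) follow from prioritization. Agent $t$ itself cannot right-envy: for $i>t$, envy up to any good would force $x_{t,1}+(k+1)v_{t,2}<x_{i,2}v_{t,2}\le(k+1)v_{t,2}$, i.e.\ $x_{t,1}<0$. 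The substantive case is an envier $j<t$ against agent $t$ or an agent to its right. Here I would lean on three structural facts about the $\prec$-minimal split point: (a)~$x_{j,1}\ge k$ for every $j<t$; (b)~if $x_{t,1}>0$ then $x_{j,1}\ge x_{t,1}+p-1$ for every $j<t$, because each such $j$ then gets exactly $p$ goods in step~3 and at least $x_{t,1}-1$ more in the prioritized step~4 over $\{1,\dots,t\}$; (c)~if some $j<t$ has $v_{j,2}>1$ then $k=1$, because then the smallest index with valuation $\ge 1$ lies strictly below $t$, which forces $t=n-m_2+1$ and hence $k=\lfloor m_2/m_2\rfloor=1$. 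Granting these, and using the monotonicity $v_{j,2}\le v_{t,2}$, a short subcase analysis on the sign of $x_{t,1}$ and on whether $v_{j,2}\le 1$ gives $v_j(X_j)+\min_{g\in X_i}v_j(g)\ge v_j(X_i)$ in every subcase, so there is no right-envy and the allocation is \EFX or LE. (In the boundary case $t=1$ there is no agent below the split point; in the case $t=n$ we have $(t_L,k_L)=(t_R,k_R)$, and item~1 and item~2 together then show that this allocation is \EFX.)

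\textbf{Main obstacle.} The delicate step is fact~(a), $x_{j,1}\ge k_L$ for $j<t_L$, which is exactly where the preprocessed input restrictions are used. Minimality of $t_L$ yields a dichotomy: either $t_L\le n_1+1$, in which case $k_L\le m_2/(n-t_L+1)\le m_2/n_2\le m_1/n_1\le m_1/(t_L-1)$; or $t_L=n-m_2+1$, in which case $k_L=1\le m_1/(n-m_2)$ since $m_1+m_2\ge n$. Either way $k_L\le m_1/(t_L-1)$, so if $m_1\le p(t_L-1)$ agent $j$ already receives $\lfloor m_1/(t_L-1)\rfloor\ge k_L$ goods in step~3, and otherwise it receives at least $p\ge k_L$. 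Establishing this dichotomy cleanly (and the handful of boundary checks it entails, including $n_2>0$ so that the divisions are legitimate) is the fiddly part; once it is in hand, the remaining inequalities in all cases of both items are routine.
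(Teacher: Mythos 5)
Your overall strategy is the same as the paper's: reduce each item to showing that one envy direction is entirely absent, then do casework on the positions of the envier and the envied agent relative to the split point, with the preprocessing inequalities $m_1/n_1\ge m_2/n_2$ and $m_1+m_2\ge n$ entering exactly where you place them, namely in the dichotomy ``$t_L\le n_1+1$ or $t_L=n-m_2+1$'' that yields $k_L\le m_1/(t_L-1)$ and hence your fact~(a). Your item-2 argument is correct and is essentially a direct computation of what the paper obtains by combining \Cref{lem:le-case-b} and \Cref{lem:envy-case-b-right}; your item-1 casework mirrors the paper's three cases ($i<t$ vs.\ $t$, $i<t$ vs.\ $j>t$, and $t$ vs.\ $j>t$).

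There is one subcase of item~1 where the toolkit you list does not close the argument: an envier $j<t_L$ with $v_{j,2}>1$ against an agent $i>t_L$. Your facts give $k_L=1$ (fact~(c)) and $x_{j,1}\ge k_L=1$ (fact~(a)), and your structural bound $x_{i,2}\le k_L+1=2$ turns the envy condition into $x_{j,1}+v_{j,2}<x_{i,2}\,v_{j,2}\le 2v_{j,2}$, i.e.\ $x_{j,1}<v_{j,2}$, which is not contradicted by $x_{j,1}\ge 1$ when $v_{j,2}$ is large. The missing observation is that the hypothesis of fact~(c) forces $t_L=n-m_2+1$, hence $m_2-k_L=n-t_L$ and every agent to the right of $t_L$ receives \emph{exactly} $k_L=1$ good of type~$2$ rather than $k_L+1$; the envy condition then reads $x_{j,1}<0$ and the contradiction is immediate. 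This is easily repaired, but it is precisely one of the ``routine inequalities'' your sketch defers, and it does not follow from facts (a)--(c) as stated. Separately, your parenthetical treatment of the boundary case $t_L=n$ (``item~1 and item~2 together then show that this allocation is \EFX'') is circular as a proof of item~1; it is harmless only because your main casework, with the right-of-$t$ cases vacuous, already covers $t_L=n$, so you should say that instead.
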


\paragraph{Reallocation.}
To get intuition behind the final concept, assume that none of the split allocations is EFX.
Then \Cref{thm:left-and-right-bounds} guarantees that the $(t_L,k_L)$-split-allocation and the $(t_R,k_R)$-split-allocation are left-envious and right-envious respectively.
Since all split allocations are either LE or RE, there should be $(t,k)\in \mathcal{T}$ such that the $(t,k)$-split-allocation is LE, while for its immediate successor (w.r.t.\ $\prec$) $(t',k')\in \mathcal{T}$, the $(t',k')$-split-allocation is RE.
That is, $(t,k)$ and $(t',k')$ form a point where the ``envy direction changes''.

This can be, for example, $(t,k)\in \mathcal{T}$ for $t<n$ and $k\ge 2$, and $(t,k-1)\in \mathcal{T}$, such that the $(t,k)$-split-allocation is LE and the $(t,k-1)$-split-allocation is RE.

Our final allocation construction slightly transforms such $(t,k)$-split-allocations by redistributing some goods of the first type from agents to the left of $t$ to agents to the right of $t$.
We now define this construction formally (see also \Cref{fig:reallocation-pea}).

\begin{table}
\centering{\small
\begin{tabular}{|c|c|c|}
    \hline
    \multicolumn{3}{|c|}{$(t,k)$-reallocation as PEA sum} \\
    \hline
     \begin{tabular}[c]{@{}c@{}} Type\\$(c)$\end{tabular}  & 
    \begin{tabular}[c]{@{}c@{}} Quantity\\$(q)$\end{tabular}  &  
    \begin{tabular}{c}Agent seq. \\ $(a_1,\ldots, a_s)$\end{tabular}
     \\
    \hline
      $2$ & $k$ & $t$ \\
    \hline
     $2$& $m_2-k$ & $t+1,\ldots, n$ \\
    \hline
     $1$ & $\lceil dv_{t,2}\rceil \cdot (t-1)$ & $1,\ldots, t-1$ \\
    \hline
    $1$ & $\lceil dv_{t,2}\rceil-p$ & $t$ \\
    \hline
    \multirow{3}{*}{$1$} & \multirow{3}{*}{$m_1-\lceil dv_{t,2}\rceil \cdot t+p$} & $1,\ldots, \ell$ \\

    & & or  \\
    & & $1,\ldots,t-1,t+1,\ldots,\ell,t$ \\
    \hline
\end{tabular}}
\caption{Expressing $(t,k)$-reallocation as a series of prioritized equitable allocations.}\label{fig:reallocation-pea}
\end{table}

\begin{definition}[$(t,k)$-reallocation]
For $(t,k)\in \mathcal{T}$ such that $t<n$ and the $(t,k)$-split-allocation $X$ satisfies $$m_1\ge \lceil d\cdot v_{t,2} \rceil\cdot t -p,$$ where $p=\lceil kv_{t,2}\rceil$ and $d=x_{t+1,2}=\lfloor (m_2-k)/(n-t)\rfloor$, the \emph{$(t,k)$-reallocation} is obtained by 

\begin{enumerate}
    \item Start from $X$, but take away all items of the first type from each agent.
    \item Let $\ell\in [n]$ be maximum possible such that $x_{\ell,2}=d$.
    \item Give $\lceil d\cdot v_{t,2}\rceil$ items of the first type to each agent in $[t-1]$.
    \item Give $\lceil d\cdot v_{t,2}\rceil-p$ first-type items to agent $t$.
    \item Give the remaining $m_1-\lceil d\cdot v_{t,2}\rceil +p$ items of the first type equitably prioritized to:
    \begin{itemize}
        \item Agents $1,2,\ldots, \ell$, in case $\lceil d v_{t,2}\rceil - p>(d-k)\cdot v_{t,2}$;
        \item Agents $1,2,\ldots, t-1,t+1,\ldots,\ell,t$,
        otherwise.
    \end{itemize}
\end{enumerate}
\end{definition}

Note that the definition of $(t,k)$-reallocations forces additional constraints on $(t,k)$.
We will prove the following lemma, that guarantees that $(t,k)$ satisfies these (and other) constraints, if it corresponds to the point of ``envy direction change''.


\begin{restatable}{lemma}{envyDirectionChange}\label{lem:envy-direction-change}
    Let $(t,k),(t',k')\in\mathcal{T}$ be such that the $(t,k)$-split-allocation is left-envious, but $(t',k')$-split-allocation is right-envious.
    Moreover, there is no $(t'',k'')\in\mathcal{T}$ that
    satisfies $(t,k)\prec (t'',k'')\prec(t',k')$.
    Then $t<n$ and
    \begin{multline*}
    \lceil x_{t+1,2}\cdot v_{t,2}\rceil \cdot t-\lceil kv_{t,2}\rceil + 1\le m_1\le  \lceil x_{t+1,2}\cdot v_{t,2}\rceil \cdot t-\lceil (k-1) v_{t,2}\rceil.
    \end{multline*}
\end{restatable}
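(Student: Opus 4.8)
The plan is to derive the two displayed inequalities as, respectively, a lower bound on $m_1$ forced by left-enviousness of the $(t,k)$-split-allocation and an upper bound forced by right-enviousness of its immediate $\prec$-successor.

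I would first dispatch the structural part. Because $(t',k')$ is a $\prec$-successor of $(t,k)$, the pair $(t,k)$ is not the $\prec$-maximum of $\mathcal{T}$; since $(n,m_2)$ is the unique element of $\mathcal{T}$ whose first coordinate is $n$, this already gives $t<n$, and hence $v_{t,2}\ge 1$ and $k\le m_2/(n-t+1)$ hold by membership in $\mathcal{T}$. From the latter I get $d:=x_{t+1,2}=\lfloor (m_2-k)/(n-t)\rfloor\ge k$, and from $v_{t,2}\ge1$ I get $\lceil k v_{t,2}\rceil\ge \lceil (k-1)v_{t,2}\rceil+1$, so the asserted interval is nonempty. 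Next I would pin down the successor: I claim it is $(t,k-1)$ with $k\ge 2$. Indeed, the only alternative is that $(t',k')$ is the largest-$k$ allocation of the next split point (the first element of the next $t$-block in $\prec$); but such largest-$k$ allocations are \EFX or left-envious by the same reasoning behind the first part of \Cref{thm:left-and-right-bounds}, hence never right-envious, contradicting that $(t',k')$ is RE. Thus $(t',k')=(t,k-1)$, which is what makes $k-1$ appear in the upper bound.

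The logical backbone is \Cref{thm:envy-one-direction}: since the $(t,k)$-split-allocation is LE it cannot be RE. Both bounds then arise from one and the same comparison---the envy of the split agent $t$ toward the block of second-type agents on its right---which explains why $v_{t,2}$ (and not $v_{t+1,2}$) governs both ends. Expanding ``agent $t$ envies a right agent up to any good'' with the bundle sizes of \Cref{def:split-distribution-not-numerical} (equivalently, \Cref{fig:split-via-pea}), the condition reduces to comparing agent $t$'s first-type holding $x_{t,1}$ with a multiple of $v_{t,2}$ calibrated by the number of second-type goods of the least-served right agent $t+1$, namely $d$. Negating this comparison for the not-RE allocation $(t,k)$ gives a lower bound on $x_{t,1}$; asserting it for the RE allocation $(t,k-1)$ gives an upper bound on its first-type holding.

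To convert these $x_{t,1}$-inequalities into $m_1$-bounds I would use the explicit prioritized-equitable counts. In the relevant range one has $m_1>\lceil k v_{t,2}\rceil(t-1)$, so agent $t$ lies in the last prioritized equitable allocation and $x_{t,1}=\lceil (m_1-\lceil k v_{t,2}\rceil(t-1))/t\rceil$; solving the ceiling inequality for $m_1$ yields $m_1\ge \lceil d v_{t,2}\rceil t-\lceil k v_{t,2}\rceil+1$, and the parallel computation for $(t,k-1)$ (replacing $\lceil k v_{t,2}\rceil$ by $\lceil (k-1)v_{t,2}\rceil$) yields $m_1\le \lceil d v_{t,2}\rceil t-\lceil (k-1)v_{t,2}\rceil$. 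I expect the main obstacle to be exactly this floor/ceiling bookkeeping: identifying the correct right witness so that the floor $d=x_{t+1,2}$, rather than the ceiling $x_{n,2}$, is the calibrating quantity, and checking that passing from $(t,k)$ to $(t,k-1)$ leaves $x_{t+1,2}$ unchanged, i.e.\ $\lfloor (m_2-k+1)/(n-t)\rfloor=\lfloor (m_2-k)/(n-t)\rfloor$, so that the same $\lceil d v_{t,2}\rceil t$ governs both ends. I would control this rounding using $v_{t,2}\ge1$ and $d\ge k$, and use the consecutiveness hypothesis to exclude the divisibility case in which the right-block distribution shifts, since that is precisely the case that would otherwise misalign the two bounds.
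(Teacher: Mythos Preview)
Your key structural claim---that the $\prec$-successor $(t',k')$ must equal $(t,k-1)$---is false, and this is where the argument breaks. You justify it by saying that ``largest-$k$ allocations are \EFX\ or left-envious by the same reasoning behind the first part of \Cref{thm:left-and-right-bounds}'', but that theorem concerns only the global $\prec$-minimum $(t_L,k_L)$; its proof uses that $t_L$ is the \emph{smallest} index with $v_{t_L,2}\ge 1$ and $m_2\ge n-t_L+1$, and this minimality is what drives the case analysis there. It does not transfer to an arbitrary first-of-block element $(t+1,\lfloor m_2/(n-t)\rfloor)$. In fact the algorithm itself performs binary search over $t$ precisely by testing these first-of-block allocations and branching on whether they are LE or RE, so they certainly \emph{can} be RE. The paper's proof of the lemma accordingly treats the successor in three separate cases: $t'=t$ (your case, with $k'=k-1$), $t'=t+1<n$ (with $k'=\lfloor m_2/(n-t)\rfloor$), and $t'=n$; in the latter two one has $k=1$, and the upper bound becomes $m_1\le\lceil x_{t+1,2}v_{t,2}\rceil\,t$.

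A second, more subtle issue is the derivation of the bounds themselves. You propose to obtain both ends from the single comparison ``agent $t$ envies a right agent'', negated for $(t,k)$ and asserted for $(t',k')$. But RE of the successor only says \emph{some} $i\le t'$ envies \emph{some} $j>t'$, not that $t'$ itself does; the paper distinguishes the subcases $i=t'$ and $i<t'$, and in the latter the relevant inequality is $y_{i,1}<(y_{j,2}-1)v_{i,2}\le x_{t+1,2}\,v_{t,2}$ (using $v_{i,2}\le v_{t,2}$), which is not the comparison you wrote. Symmetrically, the lower bound in the paper comes directly from LE of $(t,k)$---some $j>t$ envies some $i\le t$---and uses $v_{j,2}\ge v_{t,2}$ to force $x_{i,1}>\lceil x_{t+1,2}v_{t,2}\rceil$; your ``not RE'' route would instead invoke $v_{i,2}\le v_{t,2}$ with the inequality going the wrong way, and it is not clear it recovers the ceiling. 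Finally, your worry about whether $x_{t+1,2}$ stays fixed when passing to the successor is handled in the paper not by showing equality but by showing the weaker bound $\max_{j>t'} y_{j,2}\le x_{t+1,2}+1$, which is all that the envy inequalities need.
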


The final of our results is the following theorem.
If combined with \Cref{lem:envy-direction-change}, it grants an EFX+PO allocation.

\begin{restatable}{theorem}
{reallocationefxpo}\label{thm:reallocation-efx-po}
    Let $X$ be the $(t,k)$-reallocation for some $(t,k)\in \mathcal{T}$.
    If $$m_1 \le \lceil x_{t+1,2}\cdot v_{t,2}\rceil \cdot t-\lceil (k-1)\cdot v_{t,2}\rceil,$$
    then $X$ is EFX+PO.
\end{restatable}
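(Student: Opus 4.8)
\textbf{Proof plan for Theorem~\ref{thm:reallocation-efx-po}.}
The plan is to split the claim into its two halves: Pareto-optimality and EFX. For PO, I would show that the $(t,k)$-reallocation $X$ is a \emph{proper} allocation in the sense of \Cref{def:proper-distribution}, and then invoke \Cref{thm:proper-is-po}. Concretely, I would use the same split point $t$ as the witness: by construction agents $1,\dots,t-1$ receive no goods of type $2$ (step~1 only redistributes type-$1$ goods, and in the underlying split allocation $x_{i,2}=0$ for $i<t$), which gives the first bullet of properness. For the second bullet I must verify $x_{i,1}<v_{t,2}$ for every $i>t$. In the reallocation, agents $i>t$ receive at most $\lceil (m_1-\lceil dv_{t,2}\rceil+p)/\ell\rceil$ or a similar equitable share of type-$1$ goods from step~5 (only when they fall into the prefix $1,\dots,\ell$); the hypothesis $m_1\le \lceil d v_{t,2}\rceil\cdot t - \lceil (k-1)v_{t,2}\rceil$ is exactly what bounds this leftover quantity so that each such agent gets strictly fewer than $v_{t,2}$ type-$1$ goods. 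This is the routine-calculation part: plug in the equitable-share formula and the defining inequality and check the arithmetic, paying attention to the two cases in step~5 (whether agent $t$ is a "high-priority" recipient or not).

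For EFX, I would proceed pairwise over ordered pairs $(i,j)$, grouping by the position of $i$ and $j$ relative to $t$. The three essential regimes are: (a) both $i,j\le t$ (all holding only type-$1$ goods plus possibly $t$'s type-$2$ goods) — here the equitable-prioritized distribution guarantees bundle sizes differ by at most one type-$1$ good, so removing any single good from $X_j$ kills the envy; (b) both $i,j>t$, where agents hold $d$ or $\lceil(m_2-k)/(n-t)\rceil$ copies of type~$2$ plus a small equitable share of type~$1$ — again near-equitable, so EFX holds directly; (c) the cross cases, $i\le t<j$ or $j\le t<i$, which are the substantive ones. For $i\le t$ envying $j>t$: agent $i$ values type~$1$ at $1$ and type~$2$ at $v_{i,2}\le v_{t,2}$; I would compare $v_i(X_i)$ (roughly $\lceil dv_{t,2}\rceil$ or more type-$1$ goods for $i<t$) against $v_i(X_j)=x_{j,1}+x_{j,2}v_{i,2}\le x_{j,1}+d\,v_{i,2}$, and show that after removing the cheapest good from $X_j$ the inequality holds; the key quantitative input is that step~3 deliberately hands each left agent $\lceil dv_{t,2}\rceil \ge d\,v_{t,2}\ge d\,v_{i,2}$ type-$1$ goods, which is designed to dominate the $d$ type-$2$ goods of a right agent. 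For $j>t$ envying $i\le t$: since $j$ values type~$2$ highly ($v_{j,2}\ge v_{t,2}\ge 1$) and holds $d$ of them, and $i$ holds at most $\lceil dv_{t,2}\rceil$ type-$1$ goods worth $\lceil dv_{t,2}\rceil$ to $j$, one removal of a type-$1$ good from $X_i$ should suffice; I would use $d\cdot v_{j,2}\ge (\lceil dv_{t,2}\rceil - 1)$, i.e.\ the ceiling rounds up by less than one.

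The case I expect to be the main obstacle is the cross-envy between the split agent $t$ and the agents immediately to its right (positions $t+1,\dots,\ell$), together with the interaction of the two sub-cases in step~5 of the reallocation. Agent $t$'s bundle is genuinely mixed — $k$ goods of type~$2$ worth $kv_{t,2}$ plus $\lceil dv_{t,2}\rceil - p$ type-$1$ goods (with $p=\lceil kv_{t,2}\rceil$, so the type-$1$ part is roughly $\lceil dv_{t,2}\rceil - kv_{t,2}$) plus possibly one more from step~5 — and a right-agent's bundle is $d$ type-$2$ goods plus an equitable type-$1$ share. Showing $t$ does not envy such an agent up to any good, and vice versa, is exactly where the upper-bound hypothesis $m_1\le \lceil dv_{t,2}\rceil t - \lceil(k-1)v_{t,2}\rceil$ and the case split "$\lceil dv_{t,2}\rceil - p \gtrless (d-k)v_{t,2}$" must be used in tandem; I would handle it by writing $m_2-k = d(n-t) + r$ with $0\le r<n-t$, tracking which right agents get $d$ versus $d+1$ type-$2$ goods and which get the extra leftover type-$1$ good, and reducing to a finite check of the worst-case bundle pair. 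Once that boundary case is pinned down, the remaining pairs follow from monotonicity in $i$ (smaller $v_{i,2}$ makes left agents less envious of right agents) and from the near-equitability within each side.
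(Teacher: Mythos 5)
Your overall decomposition is the same as the paper's: Pareto optimality is obtained by exhibiting $X$ as a \emph{proper} allocation with witness $t$ (the only nontrivial part being $x_{i,1}<v_{t,2}$ for $i>t$, which the paper also extracts from the hypothesis $m_1\le\lceil dv_{t,2}\rceil t-\lceil(k-1)v_{t,2}\rceil$ via the equitable-share formula and a case split on whether $v_{t,2}$ is an integer), and EFX is obtained by an exhaustive pairwise case analysis keyed on the positions of $i,j$ relative to $t$ and $\ell$, with the mixed bundle of agent $t$ versus the agents in $(t,\ell]$ and $(\ell,n]$ treated via the sub-case split on $\lceil dv_{t,2}\rceil-p$ versus $(d-k)v_{t,2}$. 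You correctly identify where the upper bound on $m_1$ enters and which cross-cases carry the real content.

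There is, however, one step in your plan that fails as written. In your case (a) you assert that for $i,j\le t$ the bundles ``differ by at most one type-$1$ good, so removing any single good from $X_j$ kills the envy.'' This is true for $i,j<t$, but false when one of the two agents is $t$ itself: step~3 gives every agent in $[t-1]$ exactly $\lceil dv_{t,2}\rceil$ type-$1$ goods while step~4 gives agent $t$ only $\lceil dv_{t,2}\rceil-p$ of them, so the type-$1$ counts differ by $p=\lceil kv_{t,2}\rceil$, which is unbounded. Ruling out envy in this pair requires the compensation argument of \Cref{lem:envy-case-b-right}: agent $t$'s $k$ type-$2$ goods are worth $kv_{t,2}>p-1$ to $t$ (so $t$ does not envy $i<t$ up to one type-$1$ good), and are worth $kv_{i,2}\le kv_{t,2}\le p$ to any $i<t$ (so $i$ does not envy $t$). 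The paper handles exactly this pair as a separate final case, noting that $m_1\ge p(t-1)$ holds for any $(t,k)$-reallocation so the argument of \Cref{lem:envy-case-b-right} transfers. Your plan needs this extra case; with it, the rest goes through along the lines you describe (modulo the arithmetic you defer, e.g.\ the bound $x_{i,1}\le\lceil dv_{t,2}\rceil+v_{t,2}$ for $i<t$ rather than $\lceil dv_{t,2}\rceil$, since step~5 can add more than one extra good to a left agent).
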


As discussed before in this section, the Pareto-optimality will follow from the fact that any $(t,k)$-reallocation is proper.
We will prove this fact as a part of the proof of \Cref{thm:reallocation-efx-po}.

\section{The Algorithm}

We summarize the constructions and properties of the previous section into an algorithm that constructs the desired allocation.
It proves, in particular, that EFX+PO allocations  always exist for two types of goods, given that all agents' utilities are positive.

\begin{algorithm}[h]
	\SetKwIF{IfOr}{OrIf}{ElseOr}{if}{or}{or}{}{}
	$t_L\gets$ smallest integer in $[n]$ with $(t_L,*)\in \mathcal{T}$\label{line:tl}\; 
    $X_L\gets $ the $(t_L,\lfloor \frac{m_2}{n-t_L+1} \rfloor)$-split-allocation\label{line:xl}\;
    \lIf{$X_L$ is EFX}{\Return{$X_L$}}
    $t_R\gets n$\;
    $X_R\gets $ the $(t_R,m_2)$-split-allocation\label{line:xrcheck}\;
    \lIf{$X_R$ is EFX}{\Return{$X_R$}}
    \tcc{binary search to find $t$}
    \While{$t_R-t_L>1$}{
        $t_M\gets \lfloor (t_L+t_R)/2\rfloor$\label{line:beg-bs-t}\;
        $X_M\gets$ the $(t_M,\lfloor\frac{m_2}{n-t_M+1}\rfloor)$-split-allocation\;
        \lIf{$X_M$ is EFX}{\Return{$X_M$}}
        \leIf{$X_M$ is LE}{$t_L\gets t_M$}{$t_R\gets t_M$}\label{line:end-bs-t}
    }
    $t\gets t_L$ and $k_L\gets \lfloor\frac{m_2}{n-t+1}\rfloor$ and $k_R\gets 1$\label{line:after-first-bs}\;
    \eIf{$(t,k_R)$-split-allocation is RE}{
     \tcc{binary search to find $k$}
        \While{$k_L-k_R>1$\label{line:beg-bs-k}}{
        $k_M\gets \lfloor (k_L+k_R)/2\rfloor$\;
        $X_M\gets$ the $(t,k_M)$-split-allocation\;
        \lIf{$X_M$ is EFX}{\Return{$X_M$}}
        \leIf{$X_M$ is LE}{$k_L\gets k_M$}{$k_R\gets k_M$}
\label{line:end-bs-k}
    }
            $k\gets k_L$\;
    }{

    $X_R\gets $ the $(t,k_R)$-split-allocation\label{line:some}\;
    \lIf{$X_R$ is EFX}{\Return{$X_R$}}     \tcc{LE changes to RE between $(t,k_R)$ and $(t+1,\lfloor m_2/(n-t)\rfloor)$}   
        $k\gets k_R$\;\label{line:other}
    }

    \Return{the $(t,k)$-reallocation.\label{line:final-line}}
	\caption{The algorithm constructing EFX+PO allocation on preprocessed inputs.}
	\label{alg:efxpo}
\end{algorithm}

\begin{theorem}
Given preprocessed input, an EFX+PO allocation can be found in $\mathcal{O}(\log n + \log  m)$ time.
\end{theorem}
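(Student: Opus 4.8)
The plan is to establish two claims about \Cref{alg:efxpo}: that every allocation it can output is \EFXandPO, and that on preprocessed input it halts in $\mathcal{O}(\log n+\log m)$ time. I would first dispose of the easy exits. Whenever the algorithm returns $X_L$, $X_R$ (the $(n,m_2)$-split allocation), some $X_M$ inside a binary search, or the split allocation built on \Cref{line:some}, it returns a split allocation that has just tested \EFX; since every split allocation is proper, \Cref{thm:proper-is-po} upgrades it to \PO, so the output is \EFXandPO. The only remaining output is the $(t,k)$-reallocation on \Cref{line:final-line}, which is where the work lies, and termination of both loops is immediate since each shrinks its search interval.

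For the reallocation case I would propagate the loop invariants. By \Cref{thm:left-and-right-bounds} the minimal split allocation $X_L$ is \EFX or LE and the maximal one $X_R=(n,m_2)$-split is \EFX or RE; having passed both early checks, $X_L$ is LE and $X_R$ is RE. By \Cref{thm:envy-one-direction} each split allocation the algorithm inspects is \emph{exactly one} of \EFX, LE, RE, so a probe is either returned (\EFX) or moves the correct endpoint, and the $t$-loop keeps the invariant ``the $(t_L,\lfloor m_2/(n-t_L+1)\rfloor)$-split allocation is LE and the $(t_R,\lfloor m_2/(n-t_R+1)\rfloor)$-split allocation is RE'' (with the convention that the latter is $(n,m_2)$-split at $t_R=n$); the probed split points are valid because the set of valid split points $t<n$ is upward closed once agents are sorted by $v_{i,2}$. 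The loop ends with $t_R=t_L+1$, and I set $t:=t_L$. (A loop that makes no iteration already has the invariant from its base case; in particular $t_L=n$ forces $X_L=X_R$, which cannot be non-\EFX without contradicting \Cref{thm:envy-one-direction}, so it is returned earlier.) Now either $(t,1)$-split is RE, and the $k$-loop, by the same reasoning, ends with $k_R=k_L-1$ and produces $\prec$-adjacent split allocations $(t,k)$-split LE and $(t,k-1)$-split RE; or $(t,1)$-split is not RE, hence \EFX (returned on \Cref{line:some}) or LE, in which case $(t,1)=(t,k)$-split is LE and $\prec$-adjacent to the RE split allocation $(t+1,\lfloor m_2/(n-t)\rfloor)$-split. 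Either way the $(t,k)$-split allocation used on \Cref{line:final-line} is the LE member of a $\prec$-adjacent LE/RE pair.

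I would then invoke \Cref{lem:envy-direction-change} on that pair: it gives $t<n$ and $\lceil x_{t+1,2}\cdot v_{t,2}\rceil\cdot t-\lceil k\cdot v_{t,2}\rceil+1\le m_1\le\lceil x_{t+1,2}\cdot v_{t,2}\rceil\cdot t-\lceil (k-1)\cdot v_{t,2}\rceil$, where $x_{t+1,2}=\lfloor(m_2-k)/(n-t)\rfloor=d$ is read off the $(t,k)$-split allocation. The lower bound yields $m_1\ge\lceil d\cdot v_{t,2}\rceil\cdot t-p$ with $p=\lceil k\cdot v_{t,2}\rceil$, which is exactly the feasibility condition in the definition of the $(t,k)$-reallocation, so that object is well defined; the upper bound is precisely the hypothesis of \Cref{thm:reallocation-efx-po}, which certifies the $(t,k)$-reallocation is \EFXandPO. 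This closes correctness in all cases.

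For the running time, outside the loops the algorithm constructs and tests $\mathcal{O}(1)$ split allocations; the $t$-search runs $\mathcal{O}(\log(t_R-t_L))=\mathcal{O}(\log n)$ iterations and the $k$-search runs $\mathcal{O}(\log(k_L-k_R))=\mathcal{O}(\log m_2)=\mathcal{O}(\log m)$ iterations; the final output is the implicit $\mathcal{O}(1)$-size description of the allocation. The step I expect to be the real obstacle is showing each iteration and each test costs $\mathcal{O}(1)$: one must store every split allocation and the reallocation only through their constantly many defining parameters and block boundaries (each an explicit floor/ceiling expression), and then argue that, because the agents are sorted by $v_{i,2}$ and the allocation is proper, testing \EFX reduces to comparing a fixed constant number of ``extremal'' agent pairs --- the holders of the lightest and heaviest bundles on each side of the split point together with agent $t$ --- while the same comparisons reveal whether a non-\EFX allocation is LE or RE. Granting that, every iteration costs $\mathcal{O}(1)$ and the total is $\mathcal{O}(\log n+\log m)$; pinning down that constant-size set of relevant pairs is the one genuinely fiddly part, the rest being bookkeeping over the invariants above.
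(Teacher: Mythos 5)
Your proposal follows essentially the same route as the paper: early exits via \Cref{thm:proper-is-po}, binary-search invariants maintained through \Cref{thm:envy-one-direction} and \Cref{thm:left-and-right-bounds}, the $\prec$-adjacent LE/RE pair fed into \Cref{lem:envy-direction-change} and \Cref{thm:reallocation-efx-po}, and a constant-time envy test per probe. The one step you flag as fiddly --- testing \EFX by comparing only extremal agents --- is exactly what the paper's in-proof claim establishes (envy from an agent interior to a constant-bundle segment propagates to a segment boundary because the $v_{i,2}$ are sorted), so your sketch is the right argument and the proposal is correct.
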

\begin{proof}
We present an algorithm that is based on envy directions of split allocations.
Its pseudocode can be found in \Cref{alg:efxpo}, and we will refer to its lines throughout the proof.

We have to clarify first why the algorithm is able to work in logarithmic time, given that even just outputting an allocation ($2n$ integers) takes linear time.
In the following claim, we explain why this is possible; moreover, we can even determine whether a split allocation is EFX in constant time.

    \begin{claim}
        Given $(t,k)\in \mathcal{T}$, we can determine in $\mathcal{O}(1)$ time whether the $(t,k)$-split-allocation is EFX, left-envious or right-envious. 
    \end{claim}
    \begin{claimproof}
        By definition of split allocations, there are $b\le 5$ distinct bundles that agents receive in the $(t,k)$-split-allocation $X$.
        Additionally, there are $b$ contiguous segments of agent numbers $[a_1, a_2-1], [a_2, a_{3}-1], \ldots, [a_b, n]$, where $a_1=1$ and $a_1< a_2 <\ldots < a_b$.
        Agents within the same segment receive exactly the same bundle.
        Note that $b$ and $a_1, a_2,\ldots, a_b$ are easily computable in $\mathcal{O}(1)$ time from $(t,k)$ following \Cref{def:split-distribution-not-numerical}, as well as the bundles themselves.
        
        Moreover, if there is envy (up to any item) from agent $i$ to agent $j$, and
        agents $i-1$, $i$, $i+1$ receive exactly the same bundle, then there is necessary envy (up to any item) from agent $i-1$ to agent $j$ or from agent $i+1$ to agent $j$.

        To see the last paragraph, assume agent $i$ envies agent $j$ up to any item.
        Equivalently, there is a choice of integers $z_1,z_2\in\{0,1\}$ such that $z_1+z_2=1$ and $z_c\le x_{j,c}$, and
        $$x_{i,1}+x_{i,2}\cdot v_{i,2}<(x_{j,1}-z_1)+(x_{j,2}-z_2)\cdot v_{i,2},$$
        or, equivalently,
        \begin{equation}\label{eq:consecutive-envy}
        x_{i,1}-x_{j,1}+z_1 < (x_{j,2}-x_{i,2}-z_2)\cdot v_{i,2}.
        \end{equation}
        Since $v_{i-1,2}\le v_{i,2} \le v_{i+1,2}$ and $x_{i-1,c}=x_{i,c}=x_{i+1,c}$ for each $c \in [2]$, we have that \eqref{eq:consecutive-envy} holds if we replace $i$ with either $i+1$ (if the right part is non-negative), or $i-1$ (if the right part is negative).
        Equivalently, either agent $i+1$ or agent $i-1$ envies agent $j$ up to any item.

        It follows that there is no need to check whether an agent $i$ with $a_s<i < a_{s+1}-1$ envies (up to any item) any other agent.
        The only agents we have to check are the agents with numbers in $\{a_1, a_   2-1, a_2, \ldots, a_{b-1}, a_b, n\}.$
        There are $b$ distinct bundles in $X$, so for each agent we have to make $b-1$ comparisons (whether an agent $i$ envies an agent with this specific bundle).
        
        In total, one should make only a total of $2b(b-1)$ comparisons between agents and bundles.
        Each comparison is done in constant time.
    \end{claimproof}

    We move on to the description of the algorithm itself.
    The algorithm first evaluates $t_L$ from the statement of \Cref{thm:left-and-right-bounds} (Line \ref{line:tl}).
    Since $$t_L=\max\{\min\{t \in[n]:  v_{t,2}\ge 1\}, n-m_2+1\},$$ $t_L$ is evaluated in $\mathcal{O}(\log n)$ time via lower bound binary search over $(v_{1,2},v_{2,2},\ldots, v_{n,2})$.

    The other integers from the statement of \Cref{thm:left-and-right-bounds} are computable in $\mathcal{O}(1)$ time, and the algorithm checks whether the $(t_L,k_L)$- or the $(t_R,k_R)$-split-allocation is EFX (Lines \ref{line:xl}-\ref{line:xrcheck}).
    If the algorithm does not return here, from \Cref{thm:left-and-right-bounds} we know that the allocations are LE and RE respectively.

    The algorithm then aims to find $(t,k)\in \mathcal{T}$ satisfying \Cref{lem:envy-direction-change}.
    This is done via two binary searches.
    In the first binary search, the algorithm finds $t$, and in the second one --- the algorithm finds $k$ (given that $t$ is known).
    During these searches, the algorithm might encounter a split allocation that is EFX.
    If this happens, the algorithm returns such allocation as its final answer.

    We move on to discussion of the first binary search.
    To perform this search, the algorithm treats $t_L$ and $t_R$ as binary search bounds and modify them correspondingly, until it reaches $t_R-t_L=1$.
    It keeps the following invariant: the $(t_L,\lfloor m_2/(n-t_L+1)\rfloor)$-split-allocation is LE, and  the $(t_R,\lfloor m_2/(n-t_R+1)\rfloor)$-split-allocation is RE.
    In a single iteration of the binary search, the algorithm takes $t_M$ in the middle between $t_L$ and $t_R$, checks whether the $(t_M,\lfloor m_2/(n-t_M+1)\rfloor)$-split-allocation is EFX, LE, or RE, and returns the correct allocation, puts $t_L$ equal to $t_M$, or puts $t_R$ equal to $t_M$ correspondingly (Lines \ref{line:beg-bs-t}-\ref{line:end-bs-t}).

    After the binary search, the algorithm puts $t:=t_L$.
    We know that the desired ``envy direction change'' point has form $(t,k)$ for some integer $k$ between $\lfloor m_2/(n-t+1)\rfloor$ and $1$ (because the successor of $(t,1)$ w.r.t.\ to $\prec$ is RE).
    If $(t,1)$ is right-envious allocation, then the algorithm performs binary search over $k$ starting with $k_L=\lfloor m_2/(n-t+1)\rfloor$ and $k_R=1$ in exactly the same way as in the first binary search (Line \ref{line:beg-bs-k}-\ref{line:end-bs-k}).  An only little difference from the first binary search is that $k_L>k_R$ (instead of $k_L<k_R$), as imposed by definition of $\prec$.


    If the $(t,1)$-split-allocation is not left-envious, then its either EFX (and forms the correct solution), or $k=1$ is viable choice for \Cref{lem:envy-direction-change} (Lines \ref{line:some}-\ref{line:other}).

    In either of the two cases, the algorithm finds $(t,k)$ viable for \Cref{lem:envy-direction-change} (or encounters an EFX split allocation and returns before).
    The algorithm constructs the $(t,k)$-reallocation and returns it as a final solution (Line \ref{line:final-line}).
    This allocation is EFX+PO by \Cref{thm:reallocation-efx-po}.

    Similarly to split allocations, the $(t,k)$-reallocation is found in $\mathcal{O}(1)$ time, since there is at most a constant number of bundles given to contiguous segments of agents.
\end{proof}

\section{Pareto Optimality of Proper Allocation}

In this section we prove \Cref{thm:proper-is-po}, stating that any proper allocation (\Cref{def:proper-distribution}) is PO.

We start with the following simple lemma that allows natural restrictions on a Pareto-dominating allocation.

\begin{lemmaO}\label{lem:reasonable-po}
    Let $X=\{(x_{i,1}, x_{i,2})\}$ be an item allocation.
    If $X$ is not PO, then there exists a Pareto-dominating item allocation $Y=\{(y_{i,1},y_{i,2})\}$, such that for each $i\in[n], j\in [2]$

    $$\Delta_{i,1}\cdot \Delta_{i,2}\le 0,$$
    where $\Delta_{i,j}=y_{i,j}-x_{i,j}$.
\end{lemmaO}
\begin{proofO}
    The proof is by contradiction.
    Let $X$ be a given (non-Pareto-optimal) allocation and assume that no allocation $Y$ satisfies the lemma statement.
    Then let $Y$ be a allocation that Pareto-dominates $X$, with minimum possible $$\sum_{i\in[n],j\in[2]} |\Delta_{i,j}|.$$

    By our assumption, there exists $i^*\in [n]$ such that $\Delta_{i^*,1}\cdot \Delta_{i^*,2}>0.$
    If both $\Delta_{i^*,1},\Delta_{i^*,2}<0$, then utility of agent $i^*$ in $X$ is at least $\sum_{j\in[2]}|\Delta_{i^*,j}|\cdot v_{i^*,j}>0$ greater than its utility in $Y$, and this is impossible.

    Consequently, we have $\Delta_{i^*,1},\Delta_{i^*,2}>0$.
    Since $\sum_{i\in[n]} \Delta_{i,1}=0$, there exists $k^*\in [n]$ such that $\Delta_{k^*,1}<0$
    Now construct another allocation $Y'$ by transferring exactly one item of type $1$ from agent $i^*$ to agent $k^*$ in $Y$.

    Formally, we define $\Delta'_{i^*,1}=\Delta_{i^*,1}-1\ge 0$ and $\Delta'_{k^*,1}=\Delta_{k^*,1}+1$.
    For each other $i\in [n], j\in[2]$, we put $\Delta'_{i,j}=\Delta_{i,j}$.
    Then $Y'=\{(y'_{i,1},y'_{i,2})\}$ is given by $y'_{i,j}=x_{i,j}+\Delta'_{i,j}$ similarly to $Y$ and $\Delta$.
    Note that $Y'$ is valid since $y'_{i,j}\ge 0$ for every $i\in[n],j\in[2]$, and $\sum_{i\in[n]}\Delta'_{i,j}=0$ for each $j\in[2]$.

    We claim that $Y'$ Pareto-dominates $X$.
    To see this, note that the only agent that receives less utility in $Y'$ than in $Y$, is agent $i^*$.
    Then we only have to verify that $i^*$ still receives more utility in $Y'$ than in $X$.
    Indeed, the difference between these utilities is at least
    $$\sum_{j\in[2]} \Delta'_{i^*,j}\cdot v_{i^*,j}\ge \Delta'_{i^*,2}\cdot v_{i^*,2}=\Delta_{i^*,2}\cdot v_{i^*,2}>0.$$

    Therefore, $Y'$ Pareto-dominates $X$, but $\sum
    |\Delta'_{i,j}|=\sum|\Delta_{i,j}|-2.$
    This contradicts the initial choice of $Y$.
\end{proofO}

Armed up with \Cref{lem:reasonable-po}, we move on to proving that any \emph{proper} allocation is PO.

\begin{proof}[Proof of \Cref{thm:proper-is-po}]
    Let $X$ be a proper allocation, where agent $i$ receives $x_{i,1},x_{i,2}$ goods of types $1$ and $2$ respectively, for each $i\in [n]$.
    Let $t\in [n]$ be the integer guaranteed by \Cref{def:proper-distribution}: $t$ is such that for each $i\in [t-1]$ $x_{i,2}=0$, and, for each $i\in [t+1, n]$, it holds $x_{i,1}<v_{t,2}$.

    Targeting towards a contradiction, assume that $X$ is not PO, and there exists a allocation $Y=\{(y_{i,1},y_{i,2})\}$ that Pareto-dominates $X$ and satisfies \Cref{lem:reasonable-po}.
    We use $\Delta_{i,j}=y_{i,j}-x_{i,j}$.
    We have $\Delta_{i,1}\cdot \Delta_{i,2}\le 0$ for each $i\in [n]$.

    For each $j\in [2]$, define $I_j=\{i: \Delta_{i,j}>0\}$, that is, $I_j$ is a set of agents that receive extra items of type $j$ in $Y$ (when compared to $X$).
    Before showing a contradiction, we prove two useful claims on properties of $\Delta_{i,j}$.
    Both claims together show that the ratio between the number of exchanged goods of two types is tied to $v_{t,2}$.
\begin{claim}\label{claim:given-one-lb-two}
It holds
    $$\sum_{i\in I_1} \Delta_{i,1}\ge v_{t,2}\cdot \sum_{i\in I_1}(-\Delta_{i,2}).$$
     If utility of some agent $i\in I_1$ is strictly greater in $Y$ than in $X$, then the inequality is strict.
\end{claim}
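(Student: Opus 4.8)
The plan is to derive the inequality agent-by-agent from Pareto domination, exploiting both the non-decreasing order of the $v_{i,2}$ and the structure of a proper allocation. First I would record the starting facts: every $i\in I_1$ has $\Delta_{i,1}>0$ by definition of $I_1$, and hence $\Delta_{i,2}\le 0$ because $Y$ was chosen to satisfy \Cref{lem:reasonable-po} (so $\Delta_{i,1}\cdot\Delta_{i,2}\le 0$). Thus each summand $-\Delta_{i,2}$ on the right-hand side is nonnegative.

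Next I would split $I_1$ at the split point $t$ from \Cref{def:proper-distribution}. For $i\in I_1$ with $i<t$, properness gives $x_{i,2}=0$, so $\Delta_{i,2}=y_{i,2}\ge 0$; combined with $\Delta_{i,2}\le 0$ this forces $\Delta_{i,2}=0$. Hence $\sum_{i\in I_1}(-\Delta_{i,2})=\sum_{i\in I_1,\ i\ge t}(-\Delta_{i,2})$, i.e.\ only the agents at or to the right of $t$ matter on the right-hand side.

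The main step is the per-agent bound for $i\in I_1$ with $i\ge t$. Pareto domination gives $v_i(Y_i)\ge v_i(X_i)$, which (using $v_{i,1}=1$) reads $\Delta_{i,1}+v_{i,2}\,\Delta_{i,2}\ge 0$, i.e.\ $\Delta_{i,1}\ge v_{i,2}\,(-\Delta_{i,2})$. Since the agents are re-enumerated so that $v_{i,2}$ is non-decreasing and $i\ge t$, we have $v_{i,2}\ge v_{t,2}$, and as $-\Delta_{i,2}\ge 0$ this gives $\Delta_{i,1}\ge v_{t,2}\,(-\Delta_{i,2})$. Summing over $i\in I_1$ with $i\ge t$ and then adding the nonnegative leftover $\sum_{i\in I_1,\ i<t}\Delta_{i,1}\ge 0$ back into the left-hand side yields $\sum_{i\in I_1}\Delta_{i,1}\ge v_{t,2}\sum_{i\in I_1}(-\Delta_{i,2})$.

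For strictness, suppose some $i^{\ast}\in I_1$ has $v_{i^{\ast}}(Y_{i^{\ast}})>v_{i^{\ast}}(X_{i^{\ast}})$, i.e.\ $\Delta_{i^{\ast},1}>v_{i^{\ast},2}(-\Delta_{i^{\ast},2})$. If $i^{\ast}\ge t$, the per-agent bound for $i^{\ast}$ becomes strict; if $i^{\ast}<t$, then $\Delta_{i^{\ast},1}>0$ lies in the discarded nonnegative part of the left-hand side, making it strictly positive. In either case the final inequality is strict. There is no real obstacle here; the one point that must not be overlooked is that properness is exactly what kills the left-of-$t$ contributions to $\sum_{i\in I_1}(-\Delta_{i,2})$, which is what permits replacing every $v_{i,2}$ by the uniform lower bound $v_{t,2}$.
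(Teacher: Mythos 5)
Your proposal is correct and follows essentially the same route as the paper: the per-agent inequality $\Delta_{i,1}\ge v_{i,2}(-\Delta_{i,2})$ from Pareto domination, the use of properness to rule out $\Delta_{i,2}<0$ for $i<t$ (the paper phrases this contrapositively via $x_{i,2}>0\Rightarrow i\ge t$), the monotonicity $v_{i,2}\ge v_{t,2}$, and summation with the same strictness argument. No gaps.
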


\begin{claimproof}
   Consider arbitrary agent $i\in I_1$.
   Its utility in $Y$ is not less than its utility in $X$, hence $\Delta_{i,1}\cdot v_{i,1}+\Delta_{i,2}\cdot v_{i,2}\ge 0$, equivalently $\Delta_{i,1}+\Delta_{i,2}\cdot v_{i,2}\ge 0$.
   If its utility is strictly greater in $Y$, then this inequality is strict.

   If $\Delta_{i,2}=0$, then $\Delta_{i,1}\ge 0$, and $\Delta_{i,1}\ge -\Delta_{i,2}\cdot v_{t,2}$ holds.
   If $\Delta_{i,2}<0$, then $x_{i,2}>0$, and by definition of $t$ we have $i\ge t$.
   Consequently, $v_{i,2}\ge v_{t,2}$.
   Then obtain
   $$\Delta_{i,1}\ge -\Delta_{i,2}\cdot v_{i,2}\ge -\Delta_{i,2}\cdot v_{t,2}.$$
   The first part of the claim follows by summing over all $i\in I_1$.

   To see the second part of the claim, note that $\Delta_{i,1}>-\Delta_{i,2}\cdot v_{t,2}$ is strict for the choice of $i\in I_1$ with greater utility in $Y$.
   The resulting sum inequality also becomes strict.
   The proof is complete.
\end{claimproof}

The second claim gives a symmetrical lower bound via different arguments.

\begin{claimO}\label{claim:given-two-lb-one}
It holds
    $$\sum_{i\in I_2}\Delta_{i,2}\ge \frac{1}{v_{t,2}}\cdot \sum_{i\in I_2}(-\Delta_{i,1}).$$
    If utility of some agent $i\in I_2$ is strictly greater in $Y$ than in $X$, then the inequality is strict.
\end{claimO}

\begin{claimproofO}
    Consider an agent $i\in I_2$.
    Similarly to the proof of the previous claim, it is enough to show that $\Delta_{i,2}\ge -\Delta_{i,1}/v_{t,2}$.

    We consider two cases depending on whether $i\le t$.
    If $i\le t$, then $v_{i,2}\le v_{t,2}$.
    While utility of agent $i$ in $Y$ is not less than in $X$, we have $\sum_{j\in [2]}\Delta_{i,j}\cdot v_{i,j}\ge 0$, hence $$\Delta_{i,2}\cdot v_{t,2}\ge\Delta_{i,2}\cdot v_{i,2}\ge  -\Delta_{i,1}\cdot v_{i,1}=-\Delta_{i,1}$$ as desired.

    The remaining case is $i>t$.
    By definition of $t$, $x_{i,1}<v_{t,2}$.
    Since $\Delta_{i,2}>0$, it holds $\Delta_{i,1}\le 0$, and $0\le y_{i,1}\le x_{i,1}<v_{t,2}.$
    Then $-\Delta_{i,1}=x_{i,1}-y_{i,1}< v_{t,2}.$
    Finally, we know that $\Delta_{i,2}\ge 1$, and $\Delta_{i,2}\ge -\Delta_{i,1}/v_{t,2}$ follows.
    This finishes the proof of the first part of the claim.

    The second part is the same as in the proof of \Cref{claim:given-one-lb-two}.
    If for at least one $i\in I_2$ we have strictly greater utility, this choice of $i$ gives strict $\Delta_{i,2}>-\Delta_{i,1}/ v_{t,2}$, and the resulting sum is strict.
\end{claimproofO}

To combine \Cref{claim:given-one-lb-two} and \Cref{claim:given-two-lb-one}, note that $\sum_{i\in[n]}\Delta_{i,j}=0$ for each $j\in [2]$ by definition of $\Delta_{i,j}$.
By definition of $I_j$ and since $Y$ Pareto-dominates $X$, for each $i\in [n]\setminus (I_1\cup I_2)$ we have $\Delta_{i,1}=\Delta_{i,2}=0$.
Consequently, for each $j\in [2]$ we have
$$\sum_{i\in [n]}\Delta_{i,j}=\sum_{i\in I_1}\Delta_{i,j}+\sum_{i\in I_2}\Delta_{i,j},$$
and
$$-\sum_{i\in I_j} \Delta_{i,j}=\sum_{i\in I_{3-j}} \Delta_{i,j}.$$
Substituting this in the right parts of \Cref{claim:given-one-lb-two} (for $j=1$) and \Cref{claim:given-two-lb-one} (for $j=2$), we obtain
$$\sum_{i\in I_1}\Delta_{i,1}=v_{t,2}\cdot \sum_{i\in I_2}\Delta_{i,2},$$
and neither \Cref{claim:given-one-lb-two} nor \Cref{claim:given-two-lb-one} can give strict inequality.
This means that there is no agent $i\in I_1\cup I_2$ can have greater utility in $Y$ when compared to its utility in $X$.

But $Y$ Pareto-dominates $X$, so there is some agent $i^*\in [n]\setminus (I_1\cup I_2)$ whose utility is strictly greater.
But this agent has both $\Delta_{i^*,1}\le 0$ and $\Delta_{i^*,2}\le 0$, that is, he did not receive any new item in $Y$ when compared to $X$, and his utility in $Y$ cannot exceed his utility in $X$.
The obtained contradiction finishes the proof of the theorem.
\end{proof}

\section{Envy Directions in Split Allocations}

This section is devoted to properties of envy (up to any item) and envy directions in split allocations.
Our main goal is to prove \Cref{thm:envy-one-direction} and \Cref{thm:left-and-right-bounds}, that are fundamental to our algorithm.

Before proceeding, we fomulate split allocations numerically for referencing throughout the proof.

\begin{definition}[Numerical definition of split allocations]\label{def:split-distribution}
For $(t,k)\in\mathcal{T}$, the $(t,k)$-split-allocation $\{(x_{i,1},x_{i,2})\}$ is given by
$$
x_{i,2}=
\begin{lcases}
    0, & \; \text{if } i < t,\\
    k, & \; \text{if } i = t,\\
    q_2, & \; \text{if } t < i < {n-r_2},\\
    q_2+1, & \; \text{if } i > n-r_2,
\end{lcases}
$$
where $m_2-k=q_2(n-t)+r_2$ for $0 \le r_2 < n-t$,
and, for $p=\lceil k\cdot v_{t,2}\rceil$,
\begin{enumerate}
\item[\bf \large(a)]
    in case of $m_1< p(t-1)$,
    $$
    x_{i,1}=
    \begin{lcases}
        0, & \; \text{if } i \ge t,\\
        q_1, & \; \text{if } i \le t-1-r_1,\\
        q_1+1, & \; \text{if } i \in [t-r_1, t-1],
    \end{lcases}
    $$
    where $m_1=q_1\cdot (t-1)+r_1$ for $0 \le r_1 < (t-1)$,
\item[\bf \large (b)] 
    in case of $m_1\ge p(t-1)$,
    $$
    x_{i,1}=
    \begin{lcases}
        0, & \; \text{if } i > t,\\
        q_1, & \; \text{if } i = t \text{ and } r_1=0,\\    
        q_1+1, & \; \text{if } i = t \text{ and } r_1>0,\\
        p+q_1, & \; \text{if } i \le \min\{t-r_1,t-1\},\\
        p+q_1+1, & \; \text{if } i \in [t-r_1+1,t-1],
    \end{lcases}
    $$
    where $m_1-p(t-1)=q_1\cdot t+r_1$ for $0 \le r_1 < t$. 
\end{enumerate}
\end{definition}

We start with a series of lemmas.
The first simple lemma demonstrates that there can be no envy between agents that go before $t$ or between two agents that go after $t$ in a $(t,k)$-split-allocation.

\begin{lemmaO}\label{lem:no-envy-left-or-right}
    Let $X$ be a $(t,k)$-split-allocation for $(t,k)\in \mathcal{T}$.
    If there is envy (up to any item) between agents $i<j$ in $X$ for $i,j\in [n]$, then $i\le t$ and $j\ge t$.
\end{lemmaO}
\begin{proofO}
    The proof is by contradiction.
    Suppose that $X$ is not EFX and there is envy (up to one item) between agent $i$ and agent $j$ with $i<j$ in $X$, but $i>t$ or $j<t$ holds.

    Consider the case when $j<t$.
    By definition of split allocations, $|x_{i,1}-x_{j,1}|\le 1$ for each $i,j \in [t-1]$, while $x_{i,2}=x_{j,2}=0$.
    Hence, there cannot be envy (up to one item) between agent $i$ and agent $j$ since their bundles differ by at most one item.

    Then it should be the case when $i>t$.
    Similarly, in the $(t,k)$-allocation $X$ for any $i,j\in [t+1, n]$ we have $|x_{i,2}-x_{j,2}|\le 1$ and $x_{i,1}=x_{j,1}=0$.
    There cannot be envy between agent $i$ and agent $j$ if $i,j>t$.
    The proof is complete.
\end{proofO}

The next lemma restricts any left-envious (LE) split allocation to the case (b) from \Cref{def:split-distribution} of split allocations.

\begin{lemmaO}\label{lem:le-case-b}
    Let $X$ be the $(t,k)$-split-allocation for $(t,k)\in \mathcal{T}$.
    If $X$ is LE, then $m_1\ge p\cdot (t-1)$, where $p=\lceil k\cdot v_{t,2}\rceil$.
\end{lemmaO}
\begin{proofO}
    Targeting towards a contradiction, assume that $m_1<p(t-1)$, and $X$ is left-envous.
    That is, there is envy (up to any item) between agent $i$ and agent $j$ for $i<j$ in $X$, and agent $j$ envies agent $i$.
    By \Cref{lem:no-envy-left-or-right}, $i\le t$ and $j\ge t$.

    There are three possible cases.
    The first case is when $i<t$ and $j=t$.
    By definition of split allocations, we have $x_{i,1}\le \lfloor m_1/(t-1)\rfloor+1$ and $x_{i,2}=0$, hence $x_{i,1}\le p$.
    Agent $t$ received exactly $k$ items of the second type, so $x_{t,2}= k$.
    Then $p=\lceil x_{t,2} \cdot v_{t,2} \rceil$, so utility of $t$ in $X$ is at least $x_{t,2}v_{t,2}> p-1$.
    If we remove one arbitrary item (it's necessary a first-type item) from the bundle of $i$, the utility of the resulting bundle (in terms of agent $t$) is exactly $(x_{i,1}-1)\le p-1$.
    Hence, agent $t$ cannot envy agent $i<t$.
 
    The next case is when $i<t$ and $j> t$.
    By definition of the $(t,k)$-split-allocation, agent $j$ receives at least $\lfloor \frac{m_2-k}{n-t}\rfloor$ items of the second type, so 
    \begin{eqnarray*}
        x_{j,2}&&\ge \left\lfloor \frac{m_2-k}{n-t}\right\rfloor
        \\&&\ge\left\lfloor\frac{m_2-m_2/(n-t+1)}{n-t}\right\rfloor\\&&=\left\lfloor\frac{m_2}{n-t+1}\right\rfloor
        \ge k=x_{t,2},
    \end{eqnarray*}
 where the upper-bound on $k$ follows from $(t,k)\in\mathcal{T}$.
    That is, agent $j$ receives at least as many items of the second type as agent $t$.
    Then $j$ cannot envy $i$ similarly to the previous case.

    The remaining case is when $i=t$ and $j>t$.
    By definition of $X$, $x_{t,1}=0$.
    On the other hand, $x_{j,2}\ge x_{t,2}$ as described in the previous case.
    Hence, agent $j$ has as many items of each type as agent $t$ has.
    Agent $j$ cannot envy agent $t$.

    The list of the cases is exhausted.
    The obtained contradiction concludes the proof.
\end{proofO}

The final lemma in the series explains that $m_1\ge p(t-1)$ also rules out envies between any agent $i<t$ and agent $t$.

\begin{lemmaO}\label{lem:envy-case-b-right}
    Let $X$ be the $(t,k)$-split-allocation for $(t,k)\in \mathcal{T}$, and let $p=\lceil k\cdot v_{t,2}\rceil$.
    If $m_1\ge p\cdot (t-1)$, then for each $i\in [t-1]$ there is no envy (up to any item) between agent $i$ and agent $t$ in $X$.
\end{lemmaO}
\begin{proofO}
    We consider two cases of envy.
    In each of the cases, we come to a contradiction.

    \medskip\noindent\textbf{Agent $i$ envies agent $t$.} The first case is when agent $i$ envies (up to any item) agent $t$.
    If $x_{t,1}=0$, then the envy from $i$ to $t$ is equivalent to 
    \begin{equation}\label{eq:envy-xit-zero}
    x_{i,1}+(x_{i,2}+1)\cdot v_{i,2}<x_{t,2}\cdot v_{i,2}.
    \end{equation}
    From the definition of split allocations, we know that $x_{i,1}\ge p$.
    From definition of $p$, we know that $x_{t,2}\cdot v_{i,2}\le x_{t,2}\cdot v_{t,2}\le p$.
    Then from \eqref{eq:envy-xit-zero} we have $p<p$, which is a contradiction.
    Then it is necessary that $x_{t,1}>0$.

    Then the envy from $i$ to $t$ is equivalent to
    \begin{equation}\label{eq:envy-xit-nonzero}
    x_{i,1}+x_{i,2}\cdot v_{i,2}+\min\{1,v_{i,2}\}<x_{t,1}+x_{t,2}\cdot v_{i,2}.
    \end{equation}
    From definition of the $(t,k)$-split-allocation (\Cref{def:split-distribution} case (b)), we have that $x_{i,1}\ge p+q_1$ and $x_{i,2}=0$, while $x_{t,1}\le q_1+1$.
    Using these with \eqref{eq:envy-xit-nonzero}, obtain
    $$(p+q_1)+0\cdot v_{i,2}+\min\{1,v_{i,2}\}< (q_1+1)+x_{t,2}\cdot v_{i,2},$$
    and rewrite as
    \begin{equation}\label{eq:bound-on-p}
    p+\min\{1,v_{i,2}\}<x_{t,2}\cdot v_{i,2}+1.
    \end{equation}
    Use definition of $p$ and obtain 
    $$p+\min\{1,v_{i,2}\}<x_{t,2}\cdot v_{i,2}+1\le x_{t,2}\cdot v_{t,2} +1\le p+1.$$
    Consequently, $v_{i,2}\ge 1$ is not possible so we have $v_{i,2}<1$ and by \eqref{eq:bound-on-p} we have
    $$p+v_{i,2}<x_{t,2}\cdot v_{i,2}+1,$$
    so
    $$p<v_{i,2}\cdot (x_{t,2}-1)+1<(x_{t,2}-1)+1=x_{t,2}.$$
    From definition of $p$ we obtain $x_{t,2}\cdot v_{t,2}< x_{t,2}$, and this is a contradiction with $v_{t,2}\ge 1$ that comes from $(t,k)\in \mathcal{T}$.
    
    The case when agent $i$ envies agent $t$ up to any item is not possible.
    We move on to the remaining case.

    \medskip\noindent\textbf{Agent $t$ envies agent $i$.}
    In this case, the envy from agent $t$ to agent $i$ in $X$ is equivalent to
    \begin{equation}\label{eq:t-envies-i}
    (x_{t,1}+1)+x_{t,2}\cdot v_{t,2}<x_{i,1},
    \end{equation}
    since agent $i$ has no items of the second type.
    From definition of the $(t,k)$-split-allocation $X$, we have that $x_{i,1}-x_{t,1}\le p$ (we can get $p-1$ or $p$ depending on $i$ and $r_1$).
    Using this with \eqref{eq:t-envies-i}, obtain
    $$x_{t,2}\cdot v_{t,2}+1<p.$$
    This contradicts the definition of $p$, as $p=\lceil x_{t,2}\cdot v_{t,2}\rceil\le x_{t,2}\cdot v_{t,2}+1$.
    
    The proof of the case and the lemma is complete.
\end{proofO}

We are ready to prove our main result on envy directions in split allocations, \Cref{thm:envy-one-direction}.
For convenience of the reader, we restate it here first.

\envyOneDirection*

\begin{proof}
    Targeting towards a contradiction, assume that there exists $(t,k)\in\mathcal{T}$ such that the $(t,k)$-split-allocation $X=\{(x_{i,1},x_{i,2})\}$ is LE and RE simultaneously.

    By \Cref{lem:le-case-b}, we know that $m_1\ge p\cdot (t-1)$ holds for $p=\lceil x_{t,2}\cdot v_{t,2}\rceil$.
    In particular, $X$ comes from case (b) of \Cref{def:split-distribution}.
    Then, by \Cref{lem:envy-case-b-right} we have that there is no envy (up to any item) between any agent $i\in[t-1]$ and agent $t$ in $X$.
    Consequently, if there is an envy between agent $i$ and agent $j$ in $X$, where $i<j$, then $i\le t$ and $j>t$ necessarily holds.

    Under the initial assumption that $X$ is LE and RE simultaneously, we prove two contraversary claims.
    Recall that $q_1$ and $r_1$ from \Cref{def:split-distribution} of split allocations are non-negative integers satisfying $m_1=q_1\cdot (t-1)+r_1$.
    Similarly, $q_2$ and $r_2$ are two positive integers that satisfy $m_2-k=q_2\cdot(n-t)+r_2$.
    
    The first claim comes from $X$ being right-envious.
    \begin{claim}\label{claim:from-re}
    Both of the following is true:
    \begin{itemize}
        \item If $r_1>0$, then $p+q_1<q_2\cdot v_{t,2}$.
        \item If $r_1=0$, then $v_{t,2}\cdot x_{t,2}+q_1<q_2\cdot v_{t,2}$.
    \end{itemize}
    \end{claim}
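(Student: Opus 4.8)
The plan is to convert the hypothesis that $X$ is right-envious into an explicit envying/envied pair of agents, and then to obtain both inequalities by substituting the bundle sizes coming from case (b) of \Cref{def:split-distribution} into the envy condition.

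Recall that we are in case (b) (that is, $m_1\ge p\cdot(t-1)$) and, as already observed, every envy pair $i<j$ in $X$ satisfies $i\le t<j$. Since $X$ is RE, there are agents $a<b$ with $a\le t<b$ such that $a$ envies $b$ up to any item. As $b>t$, the bundle $X_b$ consists solely of goods of type $2$: $x_{b,1}=0$ and $x_{b,2}\in\{q_2,q_2+1\}$, and $x_{b,2}\ge 1$ because $X_b$ must be nonempty for the envy to occur. Hence $\min_{g\in X_b}v_a(g)=v_{a,2}$, and the envy condition $v_a(X_a)+\min_{g\in X_b}v_a(g)<v_a(X_b)$ reads
$$v_a(X_a)+v_{a,2}<x_{b,2}\cdot v_{a,2}\le(q_2+1)\cdot v_{a,2},$$
so $v_a(X_a)<q_2\cdot v_{a,2}$.

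It then remains to carry out a short case analysis on whether the envier satisfies $a<t$ or $a=t$ (sub-dividing according to $r_1=0$ or $r_1>0$). If $a<t$, then $x_{a,2}=0$, so $v_a(X_a)=x_{a,1}\ge p+q_1$ by case (b), and $v_{a,2}\le v_{t,2}$ since the agents are sorted by $v_{\cdot,2}$; together these give $p+q_1<q_2\cdot v_{t,2}$, which is exactly the first bullet when $r_1>0$, while for $r_1=0$ the second bullet follows from $v_{t,2}\cdot x_{t,2}=k\cdot v_{t,2}\le\lceil k\cdot v_{t,2}\rceil=p$. If $a=t$, then $v_t(X_t)=x_{t,1}+k\cdot v_{t,2}<q_2\cdot v_{t,2}$; when $r_1=0$ we have $x_{t,1}=q_1$, which is the second bullet directly (using $x_{t,2}=k$), and when $r_1>0$ we have $x_{t,1}=q_1+1$, hence $q_1+1+k\cdot v_{t,2}<q_2\cdot v_{t,2}$, and the first bullet follows from the strict inequality $p=\lceil k\cdot v_{t,2}\rceil<k\cdot v_{t,2}+1$.

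I do not anticipate a genuine obstacle; the substance is the reduction in the second paragraph. The mild points to keep straight are that in an RE pair the envier is the lower-indexed agent (so it must land in $\{1,\dots,t\}$), that in case (b) agent $t$ receives exactly $q_1$ goods of type $1$ when $r_1=0$ and $q_1+1$ when $r_1>0$, and that $\lceil x\rceil<x+1$ holds strictly for every real $x$, which is precisely what upgrades the $r_1>0$ estimates to the stated bound involving $p$.
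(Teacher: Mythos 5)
Your proposal is correct and follows essentially the same route as the paper's proof: locate an envying agent $a\le t$ and envied agent $b>t$, substitute the case (b) bundle sizes, and split on $a<t$ versus $a=t$ and on $r_1=0$ versus $r_1>0$, using $k\cdot v_{t,2}\le p<k\cdot v_{t,2}+1$ to pass between the two bullet forms. The only cosmetic difference is that you extract the uniform inequality $v_a(X_a)<q_2\cdot v_{a,2}$ before branching, whereas the paper performs the substitution inside each case.
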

    \begin{claimproof}
        Since $X$ is right-envious, there is an agent $i\in [t]$ that envies (up to any item) some agent $j\in[t+1,n]$.
        There are two cases depending on whether $i<t$ or $i=t$.
        
        If $i<t$, then agent $i$ has no items of the second type, and agent $j$ has no items of the first type.
        The envy from agent $i$ to agent $j$ is equivalent to $x_{i,1}+1\cdot v_{i,2}<x_{j,2}\cdot v_{i,2}$.
        From \Cref{def:split-distribution} we have that $x_{i,1}\ge p+q_1$ and $x_{j,2}\le q_2+1$.
        Consequently, $p+q_1+v_{i,2}<(q_2+1)\cdot v_{i,2}.$
        It follows $p+q_1<q_2\cdot v_{i,2}\le q_2\cdot v_{t,2}$, since $v_{i,2}\le v_{t,2}.$

        We have that if $i<t$, then $p+q_1<q_2\cdot v_{t,2}$.
        Since $p\ge k\cdot v_{t,2}=x_{t,2}\cdot v_{t,2}$, it also follows $v_{t,2}\cdot x_{t,2}+q_1 <q_2\cdot v_{t,2}$.
        Therefore, both parts of the claim follow if $i<t$.

        We now consider the case when agent $t$ envies (up to any item) agent $j>t$.
        The envy is equivalent to
        $$x_{t,1}+(x_{t,2}+1)\cdot v_{t,2}<x_{j,2}\cdot v_{t,2}.$$
        Recall that $x_{j,2}\le q_2+1$ and rewrite the above as
        \begin{equation}\label{eq:claim-left-envy}
            x_{t,1}+x_{t,2}\cdot v_{t,2}<q_2\cdot v_{t,2}.
        \end{equation}
               
        If $r_1=0$, then $x_{t,1}=q_1$, and \eqref{eq:claim-left-envy} becomes
        $$x_{t,2}\cdot v_{t,2} + q_1\le q_2\cdot v_{t,2},$$
        as required by the first part of the claim statement.

        If $r_1>0$, then $x_{t,1}=q_1+1$.
        Then, by $p< x_{t,2}\cdot v_{t,2}+1$ and \eqref{eq:claim-left-envy} obtain
        $$p+q_1<x_{t,2}\cdot v_{t,2}+1+q_1\le x_{t,1}+x_{t,2}\cdot v_{t,2}< q_2\cdot v_{t,2}.$$
        This proves the second part of the claim statement.
        This concludes the proof of the claim.
    \end{claimproof}

    The second claim comes from $X$ being left-envious.

    \begin{claim}\label{claim:from-le}
    Both of the following is true:
    \begin{itemize}
        \item If $r_1>0$, then $p+q_1>q_2\cdot v_{t,2}$.
        \item If $r_1=0$, then $v_{t,2}\cdot x_{t,2}+q_1>q_2\cdot v_{t,2}$.
    \end{itemize}
    \end{claim}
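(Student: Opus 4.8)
The plan is to mirror the proof of \Cref{claim:from-re}, but now squeeze the information out of $X$ being \emph{left}-envious instead of right-envious. First I would pin down the exact shape of the envy that witnesses LE. Since $X$ is LE there are $i<j$ with agent $j$ envying agent $i$ up to any item; by \Cref{lem:no-envy-left-or-right} we get $i\le t\le j$. The case $i=t$ (i.e.\ $j>t$ envies $t$) is impossible because, as already observed in the proof of \Cref{lem:le-case-b}, an agent $j>t$ holds at least as many goods of \emph{each} type as agent $t$; and the case $j=t$ (i.e.\ $i<t$ envies $t$, or $t$ envies $i<t$) is ruled out by \Cref{lem:envy-case-b-right}, which applies because \Cref{lem:le-case-b} already places us in case~(b) of \Cref{def:split-distribution}, i.e.\ $m_1\ge p(t-1)$. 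Hence necessarily $i<t<j$: the envying agent $j$ lies strictly to the right of $t$ and envies some agent $i$ strictly to the left.

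Next I would write out the envy inequality. Agent $i$'s bundle consists solely of first-type goods, so the cheapest good of $X_i$ as valued by $j$ is worth $v_{j,1}=1$; thus $j$ envying $i$ up to any good is exactly the condition $x_{j,2}\cdot v_{j,2}+1<x_{i,1}$ (using $x_{j,1}=0$ since $j>t$, and $x_{i,2}=0$ since $i<t$). Then I plug in the structural bounds from \Cref{def:split-distribution}: agent $j>t$ receives at least $q_2$ goods of type~$2$, so $x_{j,2}\ge q_2$, and since agents are sorted by $v_{\cdot,2}$ and $j>t$ we have $v_{j,2}\ge v_{t,2}$; hence the left-hand side is at least $q_2 v_{t,2}+1$, giving $q_2 v_{t,2}+1<x_{i,1}$.

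To finish I would split on $r_1$, matching the two bullets. If $r_1>0$, \Cref{def:split-distribution}(b) gives $x_{i,1}\le p+q_1+1$ for every $i\le t-1$, so $q_2 v_{t,2}+1<p+q_1+1$, i.e.\ $p+q_1>q_2 v_{t,2}$, the first bullet. If $r_1=0$, the same case of the definition gives the sharper identity $x_{i,1}=p+q_1$ for every $i\le t-1$, so $q_2 v_{t,2}+1<p+q_1$; combining with $p=\lceil x_{t,2} v_{t,2}\rceil<x_{t,2} v_{t,2}+1$ then yields $q_2 v_{t,2}<p+q_1-1<x_{t,2} v_{t,2}+q_1$, the second bullet.

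I do not expect a genuine obstacle, since the computation is essentially symmetric to \Cref{claim:from-re}; the one step that needs care is the very first one, namely arguing that the LE-witnessing envy must run from some $j>t$ to some $i<t$ (and in particular cannot involve agent $t$), because the whole derivation relies on agent $i$ holding only first-type goods and agent $j$ holding only second-type goods. Once \Cref{claim:from-le} is in hand, comparing it with \Cref{claim:from-re} in the matching $r_1$ case produces a chain $q_2 v_{t,2}<\cdots<q_2 v_{t,2}$ (or the analogous strict contradiction with $x_{t,2} v_{t,2}$), which completes the proof of \Cref{thm:envy-one-direction}.
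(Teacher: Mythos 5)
Your handling of the case $i<t<j$ is correct and matches the paper's computation line for line. The gap is in your very first step, where you dismiss the case that the LE-witnessing envy runs from some $j>t$ to agent $t$ itself. You justify this by citing the observation in the proof of \Cref{lem:le-case-b} that an agent $j>t$ holds at least as many goods of each type as agent $t$ — but that observation is made under the hypothesis $m_1<p(t-1)$ (case~(a) of \Cref{def:split-distribution}), where $x_{t,1}=0$. In the setting of \Cref{claim:from-le} you are, as you yourself note, forced into case~(b) by \Cref{lem:le-case-b}, and there agent $t$ receives $x_{t,1}\in\{q_1,q_1+1\}$ goods of the first type while every $j>t$ receives none. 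Since $x_{j,2}$ can equal $q_2=k=x_{t,2}$, agent $j$ can perfectly well envy agent $t$ up to any good (e.g.\ already when $x_{t,1}\ge 2$ and $x_{j,2}=x_{t,2}$), so this case cannot be discarded and \Cref{lem:envy-case-b-right} does not cover it (it only concerns pairs $i\in[t-1]$ versus $t$).

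The paper therefore treats $i=t$ as a genuine second case: the envy reads $1+x_{j,2}\cdot v_{j,2}<x_{t,1}+x_{t,2}\cdot v_{t,2}$, which after the same lower bounds $x_{j,2}\ge q_2$ and $v_{j,2}\ge v_{t,2}$ becomes $q_2\cdot v_{t,2}<(x_{t,1}-1)+x_{t,2}\cdot v_{t,2}$; plugging in $x_{t,1}=q_1+1$ together with $p\ge x_{t,2}\cdot v_{t,2}$ gives the first bullet, and $x_{t,1}=q_1$ gives the second directly. You need to add this branch; everything else in your argument, including the final combination with \Cref{claim:from-re}, goes through as you describe.
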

    \begin{claimproof}
        Since $X$ is left-envious, there is an agent $j>t$ that envies some agent $i\in[t]$.
        There are two cases depending on whether $i<t$ or $i=t$.

        Consider first the case when $i<t$.
        The envy from agent $j$ to agent $i$ is then equivalent to 
        $1+x_{j,2}\cdot v_{j,2}<x_{i,1}.$
        We have that $x_{j,2}\ge q_2$
        from definition of $X$, and $v_{j,2}\ge v_{t,2}$ since $j>t$.
        Combining the inequalities, obtain
        \begin{equation}\label{eq:claim-envy}
            q_2\cdot v_{t,2}<x_{i,1}-1.
        \end{equation}
        
        If $r_1> 0$, then $x_{i,1}\le p+q_1+1$.
        Then  \eqref{eq:claim-envy} gives $$q_2\cdot v_{t,2}< p+q_1+1-1,$$
        which is essentially the first part of the claim.
        If $r_1=0$, then $x_{i,1}=p+q_1$.
        Combining this, \eqref{eq:claim-envy} and  $v_{t,2}\cdot x_{t,2}>p-1$ gives
        $$q_2\cdot v_{t,2}<(p-1)+q_1<v_{t,2}\cdot x_{t,2}+q_1.$$
        The second part is also proved for case $i<t$.

        We move on to the remaining case $i=t$.
        The envy from agent $j$ to agent $t$ is then expressed as
        $1+x_{j,2}\cdot v_{j,2}<x_{t,1}+x_{t,2}\cdot v_{t,2}.$
        Similarly to the previous case, rewrite this as
        \begin{equation}\label{eq:claim-last}
            q_2\cdot v_{t,2}<(x_{t,1}-1)+x_{t,2}\cdot v_{t,2}.
        \end{equation}
        If $r_1>0$, $x_{t,1}=q_1+1$. Combining this, \eqref{eq:claim-last} and $p\ge x_{t,2}\cdot v_{t,2}$ gives the first part of the claim.
        If $r_1=0$, then $x_{t,1}=q_1$.
        Combining \eqref{eq:claim-last} with $x_{t,1}=q_1$ gives
        $$q_2\cdot v_{t,2}<q_1-1+x_{t,2}\cdot v_{t,2}<q_1+x_{t,2}\cdot v_{t,2}.$$\

        The proof of the case $i=t$ and the whole claim is complete.
    \end{claimproof}

    Clearly, \Cref{claim:from-re} and \Cref{claim:from-le} are controversial statements that are both true under the initial assumption.
    The obtained contradiction proves the theorem.
\end{proof}

While \Cref{thm:envy-one-direction} is fundamental to our split allocation approach, \Cref{thm:left-and-right-bounds} is pivotal to the binary search over $\mathcal{T}$.
The proof of \Cref{thm:left-and-right-bounds} is omitted due to the space constraints and can be found in the appendix.



\begin{proofO}
    By \Cref{thm:envy-one-direction} it is enough to prove that the $(t_L,k_L)$-split-allocation is not right-envious (first part), and that the $(t_R,k_R)$-split-allocation is not left-envious (second part).
    
    We start our proof with the second part of the theorem.
    Note that the maximal element in $\mathcal{T}$ with respect to $\prec$ is $(n,m_2)$ (by definition of $\mathcal{T}$).
    Denote by $X$ the $(n,m_2)$-split-distribuiton.
    Targeting towards a contradiction, assume that $X$ is left-envious.
    Then there are $i,j\in [n]$ with $i<j$ such that agent $i$ envies agent $j$ in $X$. 
    On the one hand, by \Cref{lem:no-envy-left-or-right}, we have that $i\le t$ and $j\ge t$.
    In our case $t=n$, consequently $j=n$ and $i<n$.
    On the other hand, by \Cref{lem:le-case-b} and \Cref{lem:envy-case-b-right} combined, there can be no envy between agent $i<n$ and $j=n$.
    This contradiction proves the second part of the theorem.

    We proceed to proving the first part of the theorem.
    For convenience, we denote the smallest element $(t_L,k_L)$ of $\mathcal{T}$ by just $(t,k)$.
    By definition of $\mathcal{T}$, $t=t_L$ is the smallest integer in $[n]$ such that
    \begin{itemize}
        \item $v_{t,2}\ge 1$, and
        \item $m_2\ge n-t+1$.
    \end{itemize}
    Then, $t$ is such that $m_2=n-t+1$ or $t-1\le n_1$ (since $n_1<n$ and $v_{n_1+1,2}\ge 1$ by definition of $n_1$).
    
    Then $k=k_L=\lfloor m_2/(n-t+1) \rfloor$.
    We denote the $(t,k)$-split-allocation by $X$.
    To prove the first part of the theorem, we assume that $X$ is right-envious, aiming to obtain a contradiction.

    By our assumption, there exist $i,j\in [n]$ such that $i<j$ and agent $i$ envies (up to any item) agent $j$ in $X$.
    By \Cref{lem:no-envy-left-or-right}, we have $i\le t$ and $j\ge t$.
    We consider several cases depending on whether $i=t$ or $j=t$, and in each case we come to a contradiction.

    \medskip\noindent\textbf{Agent $i<t$ envies agent $j=t$.}
    In this case, by \Cref{lem:envy-case-b-right}, we have that $m_1<p(t-1)$.
    Then by definition of split allocations, $x_{t,1}=0$.
    Then the envy (up to any item) from agent $i$ to agent $t$ is equivalent to
     $$x_{i,1}<(x_{t,2}-1)\cdot v_{i,2}.$$
     Consequently, by $v_{i,2}\le 1$, we have $x_{i,1}<x_{t,2}-1= k-1.$
     From definition of split allocations we know that $x_{s,1}\le x_{i,1}+1$ for each $s\in [t-1]$, and $m_1=\sum_{s\in[t-1]}x_{s,1}.$
     Then
     $$(k-1)(t-1)\ge m_1,$$
     so $k\ge 2$.
     This means that $m_2>n-t+1$, hence $t-1\le n_1$ holds.
    Finally obtain
     $$\frac{m_2}{n_2}=\frac{m_2}{n-n_1}\ge\frac{m_2}{n-t+1}\ge k> \frac{m_1}{t-1}\ge \frac{m_1}{n_1}.$$
     This contradicts the initial assumption $m_1/n_1\ge m_2/n_2$.
     The case $j=t$ is ruled out.

    In the remaining  two cases, we have $j>t$.
    The following claim is important for both cases.
    
    \begin{claimO}
        For each $j>t$, $x_{j,2}-1\le k$.
    \end{claimO}
    \begin{claimproofO}
  By definition of split allocations, we know that $x_{j,2}\le \lceil(m_2-k)/(n-t)\rceil.$

    Targeting towards a contradiction, suppose
    \begin{equation}\label{eq:divisible-nt}
    k < \left\lceil \frac{m_2-k}{n-t} \right\rceil - 1.
    \end{equation}

    If $m_2-k$ divisible by $n-t$, then $$k<\frac{m_2-k}{n-t}-1,$$
    and
    $$(n-t)\cdot k<(m_2-k)-(n-t),$$
    so
    $$(n-t+1)\cdot k+(n-t) < m_2.$$
    This contradicts the definition of $k$.

    Then $m_2-k$ has to be not divisible by $n-t$.
    Let $r>0$ be the remainder of division of $m_2-k$ by $n-t$.
    Then \eqref{eq:divisible-nt} becomes
    $$k<\left\lfloor \frac{m_2-k}{n-t}\right\rfloor=\frac{m_2-k-r}{n-t}.$$
    Multiply the left and the right by $(n-t)$ and obtain
    $$(n-t)\cdot k<m_2-k-r,$$
    so
    $$(n-t+1)\cdot k<m_2-r.$$
    Let $r'$ be the remainder of division of $m_2$ by $(n-t+1)$, then $(n-t+1)\cdot k+r'=m_2.$
    Consequently,
    $$m_2-r'<m_2-r,$$
    and $r'>r.$
    By definition of $r'$ and $r$ we have
    $$((n-t+1)\cdot k +r')-k=(n-t)\cdot q_2+r,$$
    so
    $$r'-r=(n-t)\cdot (q_2 -k).$$
    All parts are integers, so $r'-r$ is divisible by $n-t$.
    Consequently, $r'-r\ge n-t$.
    Since $r>0$, it has to be $r'\ge n-t+1$.
    This contradicts the definition of $r'$.
    The proof of the claim is complete.
    \end{claimproofO}

    We move on to the remaining two cases of envy between $i$ and $j$.

    \medskip\noindent\textbf{Agent $i<t$ envies agent $j>t$.}
    In this case, we have
    $$x_{i,1}<(x_{j,2}-1)\cdot v_{i,2}.$$
    Then $x_{i,1}<x_{j,2}-1\le k$,
    since $v_{i,2}\le 1$ by $i<t\le n_1+1$.
    Then note that $X$ is not case (b) of \Cref{def:split-distribution}, because in case (b) we would have $x_{i,1}\ge p=\lceil v_{t,2}\cdot k\rceil \ge k.$

    Then $X$ is case (a) of \Cref{def:split-distribution}.
    By definition, all items of the first type are distributed between agents in $[t-1]$, so
    $$\sum_{s=1}^{t-1}x_{s,1}= m_1.$$
    For each $s\in [t-1]$, $x_{s,1}\le x_{i,1}+1\le k$, while $x_{i,1}<k$.
    Consequently, $\sum_{s\in [t-1]}<k(t-1)$ and $k(t-1)>m_1$.
    Recall that it's either $t-1\le n_1$ or $m_2=n-t+1$.
    
    If $t-1\le n_1$, then, using $m_1/n_1\ge m_2/n_2$ and $n_1+n_2=n$,
    obtain
    $$k>\frac{m_1}{t-1}\ge \frac{m_1}{n_1}\ge \frac{m_2}{n_2}= \frac{m_2}{n-n_1}\ge \frac{m_2}{n-t+1}.$$
    This contradicts the definition of $k$.

    If $t-1>n_1$, then $m_2=n-t+1$.
    Then $k=1$ and $m_1<k(t-1)=t-1$.
    Consequently, $$m_1+m_2<(t-1)+(n-t+1)=n.$$
    This contradicts the initial setup.
    
    The analysis of the case $t\notin\{i,j\}$ is finished.
    
    \medskip\noindent\textbf{Agent $i=t$ envies agent $j>t$.}
    The envy is then expressed as
    $$x_{t,1}+x_{t,2}\cdot v_{t,2}<(x_{j,2}-1)\cdot v_{t,2},$$
    since agent $j$ receives no item of the first type in $X$.
    Consequently, $x_{t,2}<x_{j,2}-1\le k$.
    This contradicts the definition of $X$, since $x_{t,2}=k$.
    The case $i=t$ is also ruled out.

    The possible cases are exhausted, in each of the cases a contradiction is obtained.
    The proof of the theorem is complete.
\end{proofO}
\section{Reallocation}

\envyDirectionChange*

\begin{proof}

Let us denote by $Y$ the $(t',k')$-split-allocation. We will use $y_{i,1}$ and $y_{i,2}$ to denote the number of items of each type allocated to agent $i$ in $Y$.

We begin the proof with the right inequality.\\
Consider two cases: $t = t'$ and $t \neq t'$. \\
\textbf{Case 1:} $t = t'$. 

By the definition of the $(t,k)$-split-allocation, we have $k' = k - 1$. Assume the contrary:
\[ m_1 \ge \lceil x_{t+1,2}\, v_{t,2}\rceil\, t - \lceil (k - 1)\,v_{t,2}\rceil + 1. \]

By Lemma 3, envy can only occur between agents $i \leq t$ and $j \geq t$. Also, by the definition of a $(t, k)$-split-allocation: \\Since $x_{t+1,2} \ge x_{t,2} =k$, it follows that $\lceil x_{t+1,2}\, v_{t,2}\rceil \ge \lceil k\, v_{t,2}\rceil \ge \lceil k'\, v_{t,2}\rceil$. Therefore,
\[
m_1 \ge \lceil k'\, v_{t,2}\rceil\, (t - 1) + 1 = p\,(t - 1) + 1,
\]
where $p = \lceil k'\, v_{t,2}\rceil$.

From Lemma 5, it follows that there is no envy between  agents $i \leq t$ and $j \leq t$. However, by assumption, the $(t, k')$-allocation is right-envious, i.e., there exists an agent $i\le t$ who envies an agent $j>t$. Denote:
\[ m_1 - p\,(t - 1) = q_1\, t + r_1, \qquad 0 \le r_1 < t. \]

Then the number of items of the first type for agents is:
\[ 
y_{i,1} = 
\begin{cases}
q_1, & \text{if } i = t \text{ and } r_1 = 0,\\
q_1 + 1, & \text{if } i = t \text{ and } r_1 > 0,\\
p + q_1, & \text{if } i \le \min\{t - r_1, t - 1\},\\
p + q_1 + 1, & \text{if } i \in [t - r_1 + 1, t - 1]~.
\end{cases}
\]

Thus, for the agent $t$ we get:\\
\[ y_{t,1} = \left\lceil \frac{\,m_1 - p\,(t - 1)\,}{\,t\,} \right\rceil =\left\lceil \frac{\,m_1 - \lceil k'\, v_{t,2}\rceil\, (t - 1)\,}{\,t\,} \right\rceil. \]

By the contrary assumption, this is greater than or equal to:
\begin{multline*}
\left\lceil \frac{\,\lceil x_{t+1,2}\, v_{t,2}\rceil\, t - \lceil (k-1)\, v_{t,2}\rceil + 1 - \lceil (k-1)\, v_{t,2}\rceil\, (t - 1)\,}{\,t\,} \right\rceil 
\\
= \left\lceil \frac{\,\lceil x_{t+1,2}\, v_{t,2}\rceil\, t - \lceil (k-1)\, v_{t,2}\rceil\, t + 1\,}{\,t\,} \right\rceil 
\\
\ge \lceil x_{t+1,2}\, v_{t,2}\rceil - \lceil (k-1)\, v_{t,2}\rceil + 1.
\end{multline*}

For any agent $i < t$ we have:
\begin{multline*}
y_{i,1} \ge p + \left\lfloor \frac{\,m_1 - p\,(t - 1)\,}{\,t\,} \right\rfloor \ge p + (y_{t,1} - 1) 
\\
\ge p + (\lceil x_{t+1,2}\, v_{t,2}\rceil - \lceil (k-1)\, v_{t,2}\rceil) = \lceil x_{t+1,2}\, v_{t,2}\rceil~.
\end{multline*}

Note that for agents $j > t$, we have $y_{j,1} = 0$ and $y_{j,2} \leq x_{t+1,2} + 1$. Indeed, suppose for contradiction that there exists some agent $i > t$ such that $y_{i,2} \geq x_{t+1,2} + 2$.

From the properties of the $(t, k)$-split-allocation, we know that each agent $i > t$ receives at least $x_{t+1,2} + 1$ items of type 2. Therefore,
\[
m_2 \geq (x_{t+1,2} + 1)(n-t-1) + (x_{t+1,2} + 2) + (k-1),
\]
where $(k-1)$ accounts for the items allocated to agent $t$.

On the other hand, in the allocation $X$, we have $x_{i,2} \leq x_{t+1,2} + 1$ for all $i > t$. Summing over all such agents gives
\[
m_2 \leq k + (x_{t+1,2} + 1)(n-t-1) + x_{t+1,2}.
\]
This leads to a contradiction, since the lower bound on $m_2$ exceeds the upper bound. Therefore, $y_{j,2} \leq x_{t+1,2} + 1$ must hold for all $j > t$.
\\
\textit{Subcase 1.1:} Agent $t$ envies some $j >t$. 

Then 
\[ y_{t,1} + y_{t,2}\, v_{t,2} < (y_{j,2} - 1)\, v_{t,2} \le x_{t+1,2}\, v_{t,2}. \]
On the other hand:
\begin{multline*} y_{t,1} + y_{t,2}\, v_{t,2} \ge \\
(\lceil x_{t+1,2}\, v_{t,2}\rceil - \lceil (k-1)\, v_{t,2}\rceil + 1) + (k-1)\, v_{t,2}  >\\ 
\lceil x_{t+1,2}\, v_{t,2}\rceil - (\lceil (k-1)\, v_{t,2}\rceil -(k-1)\, v_{t,2} )+1 \ge \\ x_{t+1,2}\, v_{t_2,2}. \end{multline*}
Contradiction.\\
\textit{Subcase 1.2:} An agent $i <t$ envies some $j >t$.
Then 
\[ y_{i,1} < (y_{j,2} - 1)\, v_{i,2} \le x_{t+1,2}\, v_{t,2}. \]
We have $y_{i,1} \ge \lceil  x_{t+1,2}\, v_{t,2}\rceil$.  Then:
\[ \lceil x_{t+1,2}, v_{t_2,2}\rceil \le x_{i,1} < x_{t+1,2}\, v_{t_2,2}, \] 
which is impossible. \;Contradiction.\\
\textbf{Case 2:} $t \neq t'$.

\textit{Subcase 2.1: t' = n}\\
In this case, $k = 1$, $k' = n$, and $t = n-1$. It follows that $x_{t+1,2} = m_2 - 1$. We need to show that
\[
m_1 < \lceil (m_2 - 1)\, v_{t,2} \rceil\, (n-1).
\]
Suppose the contrary, that is,
\[
m_1 \geq \lceil (m_2 - 1)\, v_{t,2} \rceil\, (n-1).
\]
Now, consider the allocation $Y$. If there is any envy in $Y$, then $y_{n,1} = 0$; otherwise, by Lemma~5, if $m_1 \geq p(t-1)$, there can be no envy. Thus, $y_{n,1} = 0$, which implies that for any $i < n$,
\[
y_{i,1} \geq \lceil (m_2 - 1)\, v_{t,2} \rceil.
\]
But then agent $i$ cannot envy agent $n$, because otherwise
\[
\lceil (m_2 - 1)\, v_{t,2} \rceil \leq y_{i,1} < (y_{n,2} - 1)\, v_{i,2} \leq (m_2 - 1)\, v_{t,2},
\]
which is a contradiction. Therefore, this case is resolved.
\\
\textit{Subcase 2.2: $t' < n$}\\
In this case, by the definition of the $(t, k)$-ordering, $k = 1$ and $k' =\lfloor \frac{m_2}{n-t} \rfloor$ and $t' = t +1$. Let us again consider a possible envy situation. By assumption, there exist agents $i < j$ such that $i$ envies $j$. By Lemma~3, we have $i \le t'$ and $j \ge t'$. 

Now, observe that $i \neq t'$. Indeed, for any agent $j > t'$ we have
\[
y_{j,2} \le \left\lceil \frac{m_2}{n-t} \right\rceil \le \left\lfloor \frac{m_2}{n-t} \right\rfloor + 1 = y_{t',2} + 1,
\]
Therefore, agent $t'$ cannot be envious, and the envy must come from some agent $i < t'$.

Let us consider the maximum value of $y_{i,2}$ for $i > t$. We claim that
\[
\max_{i > t'} y_{i,2} \le x_{t+1,2} + 1.
\]
Indeed, suppose for contradiction that \[\max_{i > t'} y_{i,2} \ge x_{t+1,2} + 2.\] Then, for all agents $i \ge t'$, we would have $y_{i,2} \ge x_{t+1,2} + 1$. Thus,
\begin{multline*}
m_2 \ge (x_{t+1,2} + 1)\,(n-t' - 1) + (x_{t+1,2} + 2) =\\ (x_{t+1,2} + 1)\,(n-t') + 1 = (x_{t+1,2} + 1)\,(n-t - 1) + 1 .
\end{multline*}
On the other hand, for all $i > t$, it holds that $x_{i,2} \le x_{t+1,2} + 1$, and summing over all such agents yields
\begin{multline*}
m_2 \le (x_{t+1,2} + 1)\,(n-t-2) + x_{t+1,2} + x_{t,2} = \\
(x_{t+1,2} + 1)\,(n-t-2) + x_{t+1,2} + 1 = \\(x_{t+1,2} + 1)\,(n-t-1).
\end{multline*}
This is a contradiction, since the lower bound on $m_2$ exceeds the upper bound. Therefore, the maximum value of $y_{i,2}$ for $i > t'$ cannot exceed $x_{t+1,2} + 1$.

By the previous discussion, the only possible envy can arise from an agent $i < t'$ towards an agent $j \geq t'$. Note that if $i < t'$ envies $t'$, then by Lemma~5 it must hold that $m_1 < p (t'-1)$, where $p = \lceil k' v_{t',2}\rceil$. In this case, we have $y_{t',1} = 0$. Consequently, for any agent $j > t'$, it follows that $y_{j,2} \geq y_{t',2}$, and thus agent $i$ would also envy this agent $j > t'$.

Let us examine the implications of this. For $i < t'$, agent $i$ envies an agent $j > t'$, so:
\[
y_{i,1} < (y_{j,2} - 1) v_{i,2} \leq (y_{j,2} - 1) v_{t,2}.
\]
Thus, for all $i < t'$,
\[
y_{i,1} \leq (y_{j,2} - 1) v_{t,2}.
\]
Summing over all such agents, we obtain
\[
m_1 \leq (y_{j,2} - 1) v_{t,2} (t'-1).
\]

Now, upper-bounding $y_{j,2}$ by its maximum possible value $x_{t+1,2} +1$, we get
\[
m_1 \leq x_{t+1,2}\, v_{t,2}\, t.
\]

Note that we are considering the case $y_{t',1} = 0$, since otherwise $m_1 \geq p (t'-1)$, where $p = \lceil k' v_{t',2} \rceil$. In that case, for any $i < t'$, we have
\[
y_{i,1} \geq p \geq x_{t+1,2}\, v_{t',2},
\]
so agent $i$ cannot envy anyone, which contradicts the assumption of envy.

Therefore, the right inequality is established.

\medskip

Continue the proof with the left inequality.

Assume the opposite:
\[
m_1 \leq \lceil x_{t+1,2}\, v_{t,2}\rceil\, t - \lceil k\, v_{t,2}\rceil.
\]

From Lemma~4 and Lemma~5, under the presence of left-envious envy, agents $i \leq t$ and $j \leq t$ do not envy each other. Therefore, by Lemma 3 there must exist an agent $j > t$ who envies some agent $i \leq t$. Consider two possible cases:\\
\textbf{Case 1:} Agent $j > t$ envies an agent $i <t$.

Then:
\[ x_{t+1,2}\, v_{t,2} \le x_{t+1,2}, v_{j,2} \le x_{j,2}\, v_{j,2} < x_{i,1} - 1. \]

It follows that:
\[ x_{i,1} > x_{t+1,2}\, v_{t,2} + 1. \]

Since for all agents $i_1 <t$, the number of first-type items differs by at most one, we have 
\[
x_{i_1,1} \ge \lceil x_{t+1,2}\, v_{t,2} \rceil.
\]
Here we use the ceiling on the right-hand side, since $x_{i_1,1}$ is an integer.

In particular, for the agent $t$: 
\[ x_{t,1} > x_{t+1,2}\, v_{t,2} + 1 - \lceil k\, v_{t,2}\rceil. \]

Summing over all such agents, we obtain
\begin{multline*}
m_1 \ge \lceil x_{t+1,2}\, v_{t,2} \rceil (t-2) + (\lceil x_{t+1,2}\, v_{t,2} \rceil + 1) + \\ (\lceil x_{t+1,2}\, v_{t,2} \rceil + 1  - \lceil k\, v_{t,2}\rceil) = \lceil x_{t+1,2}\, v_{t,2} \rceil t  -  \lceil k\, v_{t,2}\rceil + 2,
\end{multline*}
which contradicts our assumption.\\
\textbf{Case 2:} Agent $j > t$ envies agent $t$.

In this case:
\[
x_{t+1,2}\, v_{t,2} \le x_{t+1,2}\, v_{j,2} \le x_{j,2}\, v_{j,2} < x_{t,1} - 1 + k\, v_{t,2}.
\]
Here, we subtract one because for any agent $j > t$ it holds that $v_{j,2} \ge 1$.

Then:
\[ x_{t,1} > ( x_{t+1,2} - k)\, v_{t,2} + 1. \]

From this, for any $i \in I1$:
\[ x_{i,1} > \Big\lceil (x_{t+1,2} - k)\, v_{t,2} + \lceil k\, v_{t,2}\rceil \Big\rceil \ge \lceil x_{t+1,2}\, v_{t,2}\rceil. \]

Summing over all such agents, we obtain
\begin{multline*}
m_1 \ge \lceil x_{t+1,2}\, v_{t,2} \rceil (t-1) +  \\ ( \lceil x_{t+1,2}\, v_{t,2} \rceil + 1  - \lceil k\, v_{t,2}\rceil) = \lceil x_{t+1,2}\, v_{t,2} \rceil t  -  \lceil k\, v_{t,2}\rceil + 1,
\end{multline*}
which contradicts our assumption.

Both possible cases lead to a contradiction, so the assumption was wrong. The left inequality is proved.

Thus, we have established both inequalities, and therefore the lemma is proved.

\end{proof}

We are now finally ready to proceed to the main theorem and complete the proof.
\reallocationefxpo*
\begin{proof}
Let us first prove Pareto optimality.
\begin{claim}
Allocation $X$ satisfies Pareto-optimality (PO).
\end{claim}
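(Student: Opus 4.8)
The plan is to prove \Cref{thm:reallocation-efx-po} in two parts: first Pareto-optimality (the claim already begun), then EFX. For Pareto-optimality, I would show that the $(t,k)$-reallocation is \emph{proper} in the sense of \Cref{def:proper-distribution}, and then invoke \Cref{thm:proper-is-po}. To verify properness with split point $t$: by construction every agent $i<t$ receives only goods of the first type (so $x_{i,2}=0$), which handles the first bullet of \Cref{def:proper-distribution}. For the second bullet I need $x_{i,1}<v_{t,2}$ for every $i>t$. But in a $(t,k)$-reallocation the agents $i>t$ receive only goods of the second type and no goods of the first type, so $x_{i,1}=0<v_{t,2}$ trivially (recall $v_{t,2}\ge 1$ since $(t,k)\in\mathcal{T}$). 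Hence the allocation is proper and therefore PO.

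For the EFX part, I would first recall the structure of the $(t,k)$-reallocation: agents $[t-1]$ get $\lceil d v_{t,2}\rceil$ (or one more) first-type goods, agent $t$ gets $\lceil d v_{t,2}\rceil - p$ first-type goods plus $k$ second-type goods, and agents $[t+1,n]$ get $d$ (or $d+1$) second-type goods, where $d=x_{t+1,2}=\lfloor (m_2-k)/(n-t)\rfloor$ and $p=\lceil k v_{t,2}\rceil$; additionally the ``extra'' first-type goods $m_1-\lceil d v_{t,2}\rceil t + p$ are distributed equitably prioritized among a prefix of agents. Since agents within each of the ``left block'' $[t-1]$, agent $t$, and the ``right block'' $[t+1,n]$ hold near-identical bundles (differing by at most one good of a single type), the argument from the constant-time checking claim applies: it suffices to rule out envy (i) within the left block, (ii) within the right block, (iii) from left block to $t$ and vice versa, (iv) from left block to right block and vice versa, and (v) involving agent $t$ and the right block. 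Cases (i) and (ii) are immediate since bundles differ by at most one item. The heart of the proof is cases (iii)--(v), where I would plug in the hypothesis $m_1\le \lceil x_{t+1,2} v_{t,2}\rceil t - \lceil (k-1) v_{t,2}\rceil$ to bound how many first-type goods agents in the left block (and agent $t$) actually receive, and combine with $v_{i,2}\le v_{t,2}$ for $i\le t$ and $v_{j,2}\ge v_{t,2}\ge 1$ for $j>t$.

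More concretely, for an agent $j>t$ envying an agent $i\le t$ (the ``left-envy'' direction), I would write the envy inequality $x_{i,1} > 1 + x_{j,2} v_{j,2} \ge 1 + d\,v_{t,2}$ and show that the upper bound on $x_{i,1}$ forced by the hypothesis on $m_1$ (together with the equitable-prioritized distribution putting the larger shares on the higher-indexed agents) contradicts this; the case distinction in step 5 of the definition — whether the leftover goods skip agent $t$ or not — is exactly what keeps agent $t$'s first-type count small enough when $\lceil d v_{t,2}\rceil - p$ is already ``large'' relative to $(d-k)v_{t,2}$. For the reverse direction, an agent $i\le t$ envying $j>t$, I would use $x_{i,1} + x_{i,2}v_{i,2} + \min\{1,v_{i,2}\} < x_{j,1} + x_{j,2} v_{i,2} = x_{j,2} v_{i,2} \le (d+1) v_{t,2}$, and show the lower bound $x_{i,1}\ge \lceil d v_{t,2}\rceil - \lceil k v_{t,2}\rceil$ (for $i=t$) or $x_{i,1}\ge \lceil d v_{t,2}\rceil$ (for $i<t$) makes this impossible, using $\lceil d v_{t,2}\rceil \ge d\,v_{t,2}$ and integrality.

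The main obstacle I anticipate is the bookkeeping around agent $t$ and the two subcases in step 5 of the $(t,k)$-reallocation: agent $t$ is the unique agent holding both good types, and its first-type count $\lceil d v_{t,2}\rceil - p$ (possibly plus one from the leftover distribution) must be simultaneously (a) small enough that no $j>t$ envies $t$ and (b) large enough, together with its $k$ second-type goods, that $t$ does not envy anyone. Getting the ceilings to line up — distinguishing $\lceil d v_{t,2}\rceil - p > (d-k)v_{t,2}$ from its negation, and tracking whether the floor/ceiling of the equitable split lands an extra good on $t$ — is where the delicate inequality manipulation lives, and it is precisely why the hypothesis is stated with $\lceil (k-1)v_{t,2}\rceil$ rather than $\lceil k v_{t,2}\rceil$ on the right-hand side.
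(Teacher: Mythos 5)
Your overall strategy for the PO claim --- show the $(t,k)$-reallocation is \emph{proper} with split point $t$ and invoke \Cref{thm:proper-is-po} --- is exactly the paper's, but your verification of properness has a genuine gap. You assert that agents $i>t$ ``receive only goods of the second type and no goods of the first type, so $x_{i,1}=0<v_{t,2}$ trivially.'' This misreads the construction: step 5 of the $(t,k)$-reallocation distributes the leftover $m_1-\lceil d v_{t,2}\rceil\, t+p$ goods of the first type equitably prioritized to agents $1,\ldots,\ell$ (or $1,\ldots,t-1,t+1,\ldots,\ell,t$), where $\ell$ is the largest index with $x_{\ell,2}=d$; since $x_{t+1,2}=d$ by the definition of $d$, we always have $\ell\ge t+1$, so agents $t+1,\ldots,\ell$ \emph{do} receive first-type goods --- and, sitting at the tail of the priority sequence, they receive the larger shares. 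Moving first-type goods to the right of $t$ is the entire point of the reallocation, as the paper states when introducing it. Consequently the properness condition $x_{i,1}<v_{t,2}$ for $i>t$ is not trivial; it is precisely where the hypothesis $m_1\le\lceil x_{t+1,2}\,v_{t,2}\rceil\, t-\lceil(k-1)\,v_{t,2}\rceil$ enters. The paper's proof bounds $x_{i,1}\le\bigl\lceil (m_1-\lceil d\, v_{t,2}\rceil\, t+\lceil k\, v_{t,2}\rceil)/(t+1)\bigr\rceil$ for $i>t$, substitutes the upper bound on $m_1$ so that the numerator is at most $\lceil k\, v_{t,2}\rceil-\lceil(k-1)\,v_{t,2}\rceil$, and then shows by a case analysis on whether $v_{t,2}$ is an integer that this quantity, divided by $t+1\ge 2$ and rounded up, is strictly below $v_{t,2}$. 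Your proposal skips all of this, so as written the PO claim is not established. (The same misreading propagates into your EFX sketch, where you set $x_{j,1}=0$ for $j>t$ in the envy inequalities.)
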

\begin{proof}
By Theorem~1, any proper allocation is Pareto optimal. We will show that $X$ is \emph{proper}. By the definition of a \emph{proper} allocation, it is enough to verify that
\begin{equation}\label{equat:le-case-b}
\max\{\,x_{i,2} : i > t\,\} < v_{t,2}.    
\end{equation}

Note that $x_{i,2} = 0$ for all $i < t$, so the condition above indeed captures all relevant cases.

First consider the case when $v_{t,2} \notin \mathbb{N}$. Assume the opposite: suppose there exists an agent $i > t$ such that 
\[ x_{i,2} \ge v_{t,2}. \]

From the construction of $X$, the values $x_{i,1}$ are determined as follows:
\[ 
x_{i,1} \le \left\lceil \frac{\,m_1 - \lceil d\, v_{t,2}\rceil\, t + \lceil k\, v_{t,2}\rceil\,}{\,1 + t\,} \right\rceil.
\]

Now, assuming $x_{i,2} \ge v_{t,2}$, we get:
\begin{equation}\label{equat:lebron}
\left\lceil \frac{\,m_1 - \lceil d\, v_{t,2}\rceil\, t + \lceil k\, v_{t,2}\rceil\,}{\,1 + t\,} \right\rceil \ge v_{t,2}.
\end{equation}

By Lemma 1, we have:
\[ m_1 \le \lceil d\, v_{t,2}\rceil\, t - \lceil (k - 1)\, v_{t,2}\rceil. \]

Substituting this into inequality see~\eqref{equat:lebron} and noting that $m_1$ is an integer, we obtain:
\begin{multline*}
\left\lceil \frac{\,\Big[\lceil d\, v_{t,2}\rceil\, t - \lceil (k - 1)\, v_{t,2}\rceil\Big] - \lceil d\, v_{t,2}\rceil\, t + \lceil k\, v_{t,2}\rceil\,}{\,1 + t\,} \right\rceil \ge v_{t,2},
\end{multline*}

which simplifies to 
\[
\left\lceil \frac{\,\lceil k\, v_{t,2}\rceil - \lceil (k - 1)\, v_{t,2}\rceil\,}{\,1 + t\,} \right\rceil \ge v_{t,2}.
\]

Now, estimate the numerator from above:
\[ \lceil k\, v_{t,2}\rceil - \lceil (k - 1)\, v_{t,2}\rceil < \lfloor v_{t,2} + 1 \rfloor . \]

Then

\[ \left\lceil \frac{\lfloor v_{t,2} + 1 \rfloor}{\,1 + t\,} \right\rceil > v_{t,2}. \]

Since $t \ge 1$, we have:

\[ \left\lceil \frac{\lfloor v_{t,2} + 1 \rfloor}{\,2\,} \right\rceil \ge v_{t,2}. \]

Then:

\[ \left\lceil \frac{ \lfloor v_{t,2} \rfloor  + 1}{2} \right\rceil \ge v_{t,2}. \]

Because $v_{t,2} \notin \mathbb{N}$. This implies 
\[ \left\lceil \frac{\lfloor v_{t,2} \rfloor + 1}{2} \right\rceil > \lfloor v_{t,2} \rfloor, \]

so 
\[ \frac{ \lfloor v_{t,2} \rfloor + 1}{2} > \lfloor v_{t,2} \rfloor, \] 
which in turn implies 
\[ \frac{\lfloor v_{t,2} \rfloor}{2} < \frac{1}{2}. \]

But this is impossible, since $v_{t,2} \ge 1$, hence $\lfloor v_{t,2} \rfloor \ge 1$, and then $\frac{ \lfloor v_{t,2 \rfloor}}{2} \ge \frac{1}{2}$ — a contradiction.

\medskip

Now consider the second case: $v_{t,2} \in \mathbb{N}$. Then:
\[ \lceil d\, v_{t,2}\rceil - \lceil k\, v_{t,2}\rceil = (d - k)\, v_{t,2}. \]

Therefore,
\[
x_{i,1} \le \left\lceil \frac{\,m_1 - d\, v_{t,2}\, t + k\, v_{t,2} - 1\,}{\,1 + t\,} \right\rceil.
\]
This is because, due to the equality
\[
\lceil d\, v_{t,2}\rceil - \lceil k\, v_{t,2}\rceil = (d - k)\, v_{t,2},
\]
the agents are prioritized:
\begin{itemize}
    \item Agents $1, 2, \ldots, t-1, t+1, \ldots, \ell, t$.
\end{itemize}

Substitute the bound on $m_1$ from Lemma 1:
\[ m_1 \le d\, v_{t,2}\, t - (k - 1)\, v_{t,2}. \]

Then:
\begin{multline*}
\left\lceil \frac{\,d\, v_{t,2}\, t - (k - 1)\, v_{t,2} - d\, v_{t,2}\, t + k\, v_{t,2} - 1\,}{\,1 + t\,} \right\rceil
=  \left\lceil \frac{\,v_{t,2} - 1\,}{\,1 + t\,} \right\rceil \ge v_{t,2}.
\end{multline*}

And since $t \ge 1$, it follows that
\[ \left\lceil \frac{\,v_{t,2} - 1\,}{\,2\,} \right\rceil \ge v_{t,2}. \]

This would imply 
\[ \frac{v_{t,2} - 1}{2} \ge v_{t,2} - \frac{1}{2}, \] 
which is a contradiction. 

Therefore, $X$ is a \emph{proper} allocation and, by Theorem 1, it is Pareto-optimal. Thus, the lemma is proved.
\end{proof}

Now let us prove that the allocation $X$ satisfies EFx. To this end, we will establish several statements about the absence of envy. Before we proceed, let us recall that $\ell \in [n]$ denotes the largest index such that $x_{\ell,2} = d$.

\begin{claim}
If there is envy between two agents $i$ and $j$, then $(i - t)(j - t) \leq 0$.
\end{claim}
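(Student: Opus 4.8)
The plan is to prove the reallocation analogue of \Cref{lem:no-envy-left-or-right}: it suffices to rule out envy (up to any item) between two agents that both lie in $[t-1]$ and between two agents that both lie in $[t+1,n]$, since any pair involving agent $t$ already satisfies $(i-t)(j-t)\le 0$. I would first fix notation: $d=x_{t+1,2}=\lfloor (m_2-k)/(n-t)\rfloor$, $p=\lceil k\,v_{t,2}\rceil$, and $E=m_1-\lceil d\,v_{t,2}\rceil\,t+p$, the number of type-$1$ goods handed out in Step~5 of the $(t,k)$-reallocation. The single quantitative fact I would extract at the outset is that the theorem's hypothesis $m_1\le\lceil d\,v_{t,2}\rceil\,t-\lceil (k-1)\,v_{t,2}\rceil$ forces $E\le\lceil k\,v_{t,2}\rceil-\lceil (k-1)\,v_{t,2}\rceil\le\lceil v_{t,2}\rceil$ by subadditivity of $\lceil\cdot\rceil$. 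I would also record that, with $r_2=(m_2-k)\bmod(n-t)$, one has $\ell=n-r_2$; hence agents $t+1,\dots,\ell$ each receive exactly $d$ goods of type $2$, agents $\ell+1,\dots,n$ each receive exactly $d+1$, and $\ell\ge t+1\ge 2$.

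Next I would dispose of the three cases in which the two bundles differ by at most one good of type $1$ and contain equally many goods of type $2$; two such bundles can never induce envy up to any good. For $i<j<t$ this is because Step~3 gives every agent of $[t-1]$ the same base amount $\lceil d\,v_{t,2}\rceil$ of type-$1$ goods and the segment $1,\dots,t-1$ is a prefix of the equitable Step~5 order in either branch of the definition. For $t<i<j\le\ell$ it is because both agents hold $d$ goods of type $2$ and no base type-$1$ goods, while $t+1,\dots,\ell$ is a contiguous block of the Step~5 order in either branch. For $\ell<i<j$ the two bundles are literally identical ($d+1$ goods of type $2$ and nothing else).

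The only case requiring real work is $t<i\le\ell<j$: agent $i$ holds $e_i\ge 0$ goods of type $1$ and $d$ of type $2$, agent $j$ holds $d+1$ goods of type $2$ and nothing else, and $v_{i,2},v_{j,2}\ge v_{t,2}\ge 1$ since $(t,k)\in\mathcal{T}$ and the agents are sorted. Agent $i$ cannot envy $j$ up to any good: $X_j$ consists only of type-$2$ goods, so deleting one brings its $i$-value down to $d\,v_{i,2}\le v_i(X_i)$. For the other direction, if $e_i=0$ then $v_j(X_j)=(d+1)v_{j,2}\ge d\,v_{j,2}=v_j(X_i)$; if $e_i\ge 1$ then $\min_{g\in X_i}v_j(g)=1$ (as $v_{j,2}\ge 1$), so I only need $e_i\le v_{j,2}+1$, which follows from $e_i\le\lceil E/\ell\rceil\le\lceil\lceil v_{t,2}\rceil/2\rceil\le v_{t,2}+1\le v_{j,2}+1$, using $E\le\lceil v_{t,2}\rceil$ and $\ell\ge 2$. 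This closes all cases.

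The main obstacle is exactly the inequality $e_i\le v_{j,2}+1$ in the last paragraph: this is the unique point at which the theorem's numerical hypothesis (equivalently, the right-hand bound of \Cref{lem:envy-direction-change}) is indispensable, because it is what keeps the Step~5 pool $E$ small enough that an agent immediately to the right of $t$ cannot accumulate enough goods of type $1$ to be envied by an agent still further to the right. Everything else reduces to the routine observation, already used for split allocations, that two bundles differing by a single good cannot create envy up to any good.
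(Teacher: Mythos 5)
Your proof is correct and follows essentially the same case decomposition as the paper's: pairs within $[t-1]$, within $(t,\ell]$, within $(\ell,n]$ (where only bundles differing by at most one type-$1$ good, or identical bundles, occur), and the mixed pair straddling $\ell$. The only difference is cosmetic: where the paper rules out envy from an agent $j>\ell$ toward $i\in(t,\ell]$ by citing the properness bound $x_{i,1}<v_{t,2}$ already established in its Pareto-optimality claim, you re-derive the sufficient bound $x_{i,1}\le \lceil k\,v_{t,2}\rceil-\lceil (k-1)\,v_{t,2}\rceil\le\lceil v_{t,2}\rceil\le v_{t,2}+1$ directly from the hypothesis on $m_1$; both routes rest on the same numerical assumption.
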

\begin{proof}
Let us proceed by contradiction. Suppose that $(i-t)(j-t) > 0$.

\noindent\textbf{Case 1:} $i < t$, $j < t$.

These agents receive only items of the first type, and their quantities differ by at most $1$. Since they receive no other types of goods and the deviation in quantity is limited, envy between them is impossible.

\noindent\textbf{Case 2:} $i > t$, $j > t$.
Let us consider the possible cases based on the positions of the agents relative to agent $\ell$.

\noindent\textit{Case 2.1:}  $i > l$, $j > l$.
By construction of allocation $X$, all agents in this group receive identical sets of items. Therefore, envy is impossible.

\noindent\textit{Case 2.2:}  $i < l$, $j < l$.
Suppose, that agent $i$ envies agent $j$. Then
\[
x_{i,1} + x_{i,2} < x_{j,2} + x_{j,1} - \min(1,v_{i,2}).
\]
However, by the definition of the reallocation, $x_{i,2} = x_{j,2}$, and $x_{i,1}$ and $x_{j,1}$ differ by at most $1$. Therefore, envy is impossible in this case because $v_{i,2} > 1$.

\noindent\textit{Case 2.3:} Agent $j > l$ envies agent $i \le l$.
Then by the definition of envy:
\[
x_{j,2}\, v_{j,2} < (x_{i,1} - 1) + x_{i,2}\, v_{j,2}.
\]

Substituting $x_{j,2} = d+1$ and $x_{i,2} = d$, we get:
\[
(d+1)\, v_{j,2} < (x_{i,1} - 1) + d\, v_{j,2}. 
\]

Rearranging:

\[
(d+1)\, v_{j,2} - d\, v_{j,2} < x_{i,1} - 1, 
\]

So,
\[
v_{j,2} < x_{i,1} - 1.
\]

By condition~\eqref{equat:le-case-b}, we have
\[
x_{i,1} < v_{t,2}.
\]

But
\[
v_{t,2} \le v_{j,2},
\]

So, we have
\[
v_{j,2} < x_{i,1} - 1 <  x_{i,1} < v_{j,2}
\]

This is a contradiction.

\noindent\textit{Case 2.4:} Agent $i < l$ envies agent $j > l$.
Then:
\[
x_{i,2}\, v_{i,2} + x_{i,1} < (x_{j,2} - 1)\, v_{i,2}. 
\]

Substituting $x_{j,2} = d+1$ and $x_{i,2} = d$, we get:
\[
d\, v_{i,2} + x_{i,1} < ((d+1) - 1)\, v_{i,2}, 
\]

hence
\[
d\, v_{i,2} + x_{i,1} < d\, v_{i,2}. 
\]

Simplifying:
\[
x_{i,1} < 0, 
\]
a contradiction.

All cases have been considered and lead to contradictions. Therefore, no envy can occur between  $i$ and $j$, and the lemma is proved.
\end{proof}

\begin{claim}
There can be no envy between agents $i < t$ and agent $j > t$.
\end{claim}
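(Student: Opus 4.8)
The plan is to assume for contradiction that some agent envies another up to any good with the two on strictly opposite sides of the split point $t$, and to derive a contradiction purely from the item counts of the $(t,k)$-reallocation $X$ together with the hypothesis $m_1\le\lceil x_{t+1,2}v_{t,2}\rceil\, t-\lceil(k-1)v_{t,2}\rceil$. If $t=1$ there is no agent $i<t$, so the claim is vacuous; hence assume $t\ge 2$. First I would record the structural facts. Write $d=x_{t+1,2}=\lfloor(m_2-k)/(n-t)\rfloor$ and $p=\lceil kv_{t,2}\rceil$; from $(t,k)\in\mathcal{T}$ one has $k\le m_2/(n-t+1)$, which yields $d\ge k\ge 1$. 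For $i<t$: $x_{i,2}=0$ and $x_{i,1}=\lceil dv_{t,2}\rceil+e_i$, where $e_i\ge 0$ is the number of type-$1$ goods $i$ receives in the final (``extra'') equitable step. For $j>t$: $x_{j,1}=0$ unless $j\le\ell$, in which case $x_{j,1}=e_j$; moreover $x_{j,2}=d$ for $t<j\le\ell$ and $x_{j,2}=d+1$ for $j>\ell$, so always $d\le x_{j,2}\le d+1$. In both variants of the extra step the agents $1,\ldots,t-1$ form the initial segment of the priority order, so every $i<t$ precedes every $j\in(t,\ell]$; hence $e_i\le e_j\le e_i+1$ for such pairs, while in general $e_i\le\lceil R/\ell\rceil$, where $R$ is the total distributed in the extra step and $\ell\ge t+1\ge 3$. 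Finally $R=m_1-t\lceil dv_{t,2}\rceil+p$, so by the hypothesis and subadditivity of the ceiling, $R\le p-\lceil(k-1)v_{t,2}\rceil=\lceil kv_{t,2}\rceil-\lceil(k-1)v_{t,2}\rceil\le\lceil v_{t,2}\rceil$. Throughout I would use $v_{i,2}\le v_{t,2}\le v_{j,2}$ and $v_{t,2}\ge 1$.

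Then I would split on the direction of envy. Case A, some $i<t$ envies $j>t$: if $x_{j,1}=0$ the cheapest good of $X_j$ for $i$ is a type-$2$ good, so the envy inequality is $x_{i,1}+v_{i,2}<x_{j,2}\, v_{i,2}$, i.e.\ $x_{i,1}<(x_{j,2}-1)v_{i,2}\le d\,v_{t,2}\le\lceil d\,v_{t,2}\rceil\le x_{i,1}$, a contradiction. If $x_{j,1}\ge 1$ then $j\le\ell$, so $x_{j,2}=d$, $x_{j,1}=e_j\le e_i+1$, and the cheapest good of $X_j$ costs $\min\{1,v_{i,2}\}$; substituting and cancelling $e_i$ reduces the envy inequality to $\lceil d\,v_{t,2}\rceil+\min\{1,v_{i,2}\}<1+d\,v_{i,2}$, which I would refute by splitting on $v_{i,2}\ge 1$ (it gives $\lceil d\,v_{t,2}\rceil<d\,v_{t,2}$) and on $v_{i,2}<1$ (it gives $\lceil d\,v_{t,2}\rceil<1+(d-1)v_{i,2}\le d\le\lceil d\,v_{t,2}\rceil$).

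Case B, some $j>t$ envies $i<t$: since $X_i$ is all type-$1$, the envy inequality reads $x_{j,1}+x_{j,2}\, v_{j,2}+1<x_{i,1}$. If $j\le\ell$ then $x_{j,2}=d$ and $x_{j,1}=e_j\ge e_i$, so after cancelling this becomes $d\,v_{j,2}+1<\lceil d\,v_{t,2}\rceil<d\,v_{t,2}+1\le d\,v_{j,2}+1$, a contradiction. If $j>\ell$ then $x_{j,2}=d+1$ and $x_{j,1}=0$, so the inequality is $(d+1)v_{j,2}+1<\lceil d\,v_{t,2}\rceil+e_i<d\,v_{t,2}+1+e_i\le d\,v_{j,2}+1+e_i$, which forces $v_{j,2}<e_i$. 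This subcase is the main obstacle, because here $e_i$ need not vanish; I would close it with the bound on the extra step: $e_i\le\lceil R/\ell\rceil\le\lceil\lceil v_{t,2}\rceil/(t+1)\rceil$, and a short estimate (using $t+1\ge 3$ and $v_{t,2}>\lceil v_{t,2}\rceil-1$) gives $\lceil\lceil v_{t,2}\rceil/(t+1)\rceil\le v_{t,2}\le v_{j,2}$, contradicting $v_{j,2}<e_i$. Since both envy directions lead to a contradiction, no envy can occur between an agent $i<t$ and an agent $j>t$.
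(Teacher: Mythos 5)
Your proof is correct and follows essentially the same route as the paper's: the same case split by envy direction and by whether $j$ lies in $(t,\ell]$ or beyond $\ell$, driven by the same structural facts $x_{i,1}=\lceil d\,v_{t,2}\rceil+e_i$ for $i<t$ and $e_i\le e_j\le e_i+1$ across the final equitable step. The only cosmetic difference is the subcase where $j>\ell$ envies $i<t$: you re-derive the bound $e_i\le\lceil\lceil v_{t,2}\rceil/(t+1)\rceil\le v_{t,2}$ directly from the hypothesis on $m_1$, whereas the paper imports the equivalent bound $x_{i,1}\le\lceil d\,v_{t,2}\rceil+v_{t,2}$ from its earlier properness computation.
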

\begin{proof}
Consider all possible cases of envy between an agent $i < t$ and an agent $j  > t$.

\noindent\textbf{Case 1:} Agent $j > l$ envies agent $i < t$.\\ 
Agent $j$ receives only items of the second type, and agent $i$ receives only items of the first type. If $j$ envied $i$, then:
\[ x_{j,2}\, v_{j,2} < x_{i,1} - 1. \]

However, by construction $x_{i,1} \le \lceil d\, v_{t,2}\rceil + v_{t,2}$ and $x_{j,2} = d+1$. Therefore,
\[ (d+1)\, v_{j,2} < \lceil d\, v_{t,2}\rceil + v_{t,2} - 1 \le (d+1)\, v_{t,2}. \]
And since $v_{j,2} \ge v_{t,2}$, this is impossible. Contradiction.

\noindent\textbf{Case 2:} Agent $t \ge j > t$ envies agent $i < t$.\\
Agent $j$ receives $x_{j,2} = d$ items of the second type and $x_{j,1} \ge 0$ items of the first type; agent $i$ receives only first-type items. If $j$ envied $i$, then:
\[ d\, v_{j,2} + x_{j,1} < x_{i,1} - 1. \]

However, by construction $x_{i,1} < \lceil d\, v_{t,2}\rceil + x_{j,1}$. Hence
\[ d\, v_{j,2} + x_{j,1} < \lceil d\, v_{t,2}\rceil + x_{j,1} - 1, \]

so
\[ d\, v_{j,2} < \lceil d\, v_{t,2}\rceil - 1. \]

But $v_{j,2} \ge v_{t,2}$, which implies $d\, v_{j,2} \ge d\, v_{t,2}$, whereas $\lceil d\, v_{t,2}\rceil - 1 < d\, v_{t,2}$ — again a contradiction.

\noindent\textbf{Case 3:} Agent $i < t $ envies agent $l \ge j >t$.\\
Agent $i$ receives only first-type items; agent $j$ receives $d$ items of the second type and $x_{j,1}$ items of the first type (with $x_{j,1} \le x_{i,1} - \lceil d\, v_{t,2}\rceil + 1$ by construction). Two subcases are possible:

\noindent\textit{Subcase 3.1: $1 \le  v_{i,2}$}\\ Then envy implies:
\[ x_{i,1} < d\, v_{i,2} + x_{j,1} - 1. \]
But $x_{j,1} \le x_{i,1} - \lceil d\, v_{t,2}\rceil + 1$, hence
\[ x_{i,1} < d\, v_{i,2} + x_{i,1} - \lceil d\, v_{t,2}\rceil, \]
so
\[ 0 < d\, v_{i,2} - \lceil d\, v_{t,2}\rceil, \]
which is impossible, since $v_{i,2} \le v_{t,2}$.

\noindent\textit{Subcase 3.2: $1 > v_{i,2}$}\\ Then envy implies:
\[ x_{i,1} < (d - 1)\, v_{i,2} + x_{j,1}. \]
And $x_{j,1} \le x_{i,1} - \lceil d\, v_{t,2}\rceil + 1$, therefore
\[ x_{i,1} < (d - 1)\, v_{i,2} + x_{i,1} - \lceil d\, v_{t,2}\rceil + 1, \]
which simplifies to
\[ 0 < (d - 1)\, v_{i,2} - \lceil d\, v_{t,2}\rceil + 1. \]
But since $v_{i,2} \le v_{t,2}$, this is impossible for $v_{t,2} \ge 1$.

\noindent\textbf{Case 4:} Agent $i < t$ envies agent $j > l$.
Agent $j$ receives only goods of the second type: $x_{j,2} = d+1$, $x_{j,1} = 0$. If $i$ envied $j$, then:
\[ x_{i,1} < (d+1 - 1)\, v_{i,2} = d\, v_{i,2}. \]

But by construction $x_{i,1} \ge \lceil d\, v_{t,2}\rceil$, and $v_{i,2} \le v_{t,2}$. Hence
\[ \lceil d\, v_{t,2}\rceil \le x_{i,1} < d\, v_{i,2}, \]
which is impossible, since the left side is a ceiling value.

Thus, in all cases envy is impossible, and the lemma is proved.
\end{proof}

\begin{claim}
There can be no envy between agent t and agent $l \ge i >t$.
\end{claim}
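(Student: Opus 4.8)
The plan is to refute both directions of possible envy between agent $t$ and an agent $i$ with $t<i\le\ell$, arguing directly from the explicit bundle sizes of the $(t,k)$-reallocation $X$. Write $p=\lceil k\,v_{t,2}\rceil$ and $d=x_{t+1,2}$; recall $d\ge k\ge 1$ and $v_{i,2}\ge v_{t,2}\ge 1$. For $t<i\le\ell$ the second-type allocation (unchanged from the split allocation) gives $x_{i,2}=d$, and all first-type goods of agent $i$ come from step~5 of the reallocation, since steps~3 and~4 touch only agents $1,\dots,t$; write $s_i:=x_{i,1}$ for this step-5 share. Agent $t$ has $x_{t,2}=k$ and $x_{t,1}=\bigl(\lceil d\,v_{t,2}\rceil-p\bigr)+s_t$, with $s_t\ge 0$ its own step-5 share. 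The quantity controlling everything is the integer $x_{t,1}-x_{i,1}=\lceil d\,v_{t,2}\rceil-p+(s_t-s_i)$. Equitable prioritization forces $|s_t-s_i|\le 1$, and the priority order of step~5 depends on the branch: in the first branch ($\lceil d\,v_{t,2}\rceil-p>(d-k)v_{t,2}$) agent $t$ sits in its natural position, so $t<i$ gives $s_t\le s_i$, while in the second branch agent $t$ is placed last, giving $s_t\ge s_i$. I will also use the ceiling estimates $\lceil d\,v_{t,2}\rceil-\lceil k\,v_{t,2}\rceil\le\lceil(d-k)v_{t,2}\rceil\le (d-k)v_{t,2}+1$ and $\lceil d\,v_{t,2}\rceil-\lceil k\,v_{t,2}\rceil>(d-k)v_{t,2}-1$ (the latter because $\lceil k\,v_{t,2}\rceil<k\,v_{t,2}+1$).

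Suppose first that agent $i$ envies agent $t$. If $x_{t,1}=0$ then $\lceil d\,v_{t,2}\rceil=\lceil k\,v_{t,2}\rceil$, which with $d\ge k$ and $v_{t,2}\ge 1$ forces $d=k$, whence $v_i(X_i)=x_{i,1}+k\,v_{i,2}\ge k\,v_{i,2}=v_i(X_t)$ and $i$ does not envy $t$ at all. If $x_{t,1}\ge 1$, the cheapest good of $X_t$ for $i$ is a first-type good of value $1\le v_{i,2}$, so envy up to any good needs $x_{t,1}-x_{i,1}>1+(d-k)v_{i,2}$. In the first branch, $s_t\le s_i$ gives $x_{t,1}-x_{i,1}\le\lceil d\,v_{t,2}\rceil-p\le (d-k)v_{t,2}+1\le 1+(d-k)v_{i,2}$; in the second branch, the branch condition $\lceil d\,v_{t,2}\rceil-p\le(d-k)v_{t,2}$ and $s_t-s_i\le 1$ give $x_{t,1}-x_{i,1}\le(d-k)v_{t,2}+1\le 1+(d-k)v_{i,2}$. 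Either way the envy inequality fails.

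Suppose now that agent $t$ envies agent $i$. If $x_{i,1}=0$, the only goods of $X_i$ are its $d\ge 1$ second-type goods, each worth $v_{t,2}$ to $t$, so envy up to any good needs $x_{t,1}<(d-k-1)v_{t,2}$; but $x_{t,1}\ge\lceil d\,v_{t,2}\rceil-p>(d-k)v_{t,2}-1\ge(d-k-1)v_{t,2}$ (using $v_{t,2}\ge 1$), a contradiction. If $x_{i,1}\ge 1$, the cheapest good of $X_i$ for $t$ is a first-type good of value $1\le v_{t,2}$, so envy up to any good needs $x_{t,1}-x_{i,1}<(d-k)v_{t,2}-1$. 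In the second branch, $s_t\ge s_i$ gives $x_{t,1}-x_{i,1}\ge\lceil d\,v_{t,2}\rceil-p>(d-k)v_{t,2}-1$; in the first branch, $s_t-s_i\ge-1$ together with the branch condition $\lceil d\,v_{t,2}\rceil-p>(d-k)v_{t,2}$ gives $x_{t,1}-x_{i,1}\ge\lceil d\,v_{t,2}\rceil-p-1>(d-k)v_{t,2}-1$. Both contradict the envy inequality, so no envy occurs in either direction.

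The step I expect to be the main obstacle is handling the two branches of step~5 in tandem. The branch split is calibrated exactly against ``$\lceil d\,v_{t,2}\rceil-p$ versus $(d-k)v_{t,2}$'', and it simultaneously flips the sign of $s_t-s_i$; so in each of the four direction/branch combinations one has to invoke the branch-defining inequality in the right direction and pair it with the correct sign of $s_t-s_i$, and a single mismatched pairing breaks the bound. The boundary cases $x_{t,1}=0$ and $x_{i,1}=0$ are short but must be isolated first, because there the ``remove the cheapest good'' reasoning selects a second-type good instead of a first-type one.
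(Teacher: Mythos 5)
Your proof is correct and follows essentially the same route as the paper's: a case split on the direction of envy combined with the two branches of the final prioritized equitable allocation, using the branch-defining comparison between $\lceil d\,v_{t,2}\rceil-\lceil k\,v_{t,2}\rceil$ and $(d-k)v_{t,2}$ together with the sign of $s_t-s_i$ to bound $x_{t,1}-x_{i,1}$ against the envy inequality. The only (welcome) difference is that you explicitly isolate the degenerate cases $x_{t,1}=0$ and $x_{i,1}=0$, where the cheapest good is of the second type, which the paper's argument passes over silently.
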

\begin{proof}
Let us consider two cases, depending on whether $t$ envies $i$ or $i$ envies $t$.

\noindent\textbf{Case 1:} Agent $i$ envies agent $t$.

Consider two subcases, depending on the relation between $\lceil d\, v_{t,2}\rceil - \lceil k\, v_{t,2}\rceil$ and $(d - k)\, v_{t,2}$:

\noindent\textit{Subcase 1.1} $\lceil d\, v_{t,2}\rceil - \lceil k\, v_{t,2}\rceil > (d - k)\, v_{t,2}$.
In this case, by the construction of the allocation:

\[ x_{i,1} \ge x_{t,1} - (\lceil d\, v_{t,2}\rceil - \lceil k\, v_{t,2}\rceil). \]
Because the agents are prioritized in the following order:
\begin{itemize}
    \item Agents $1, 2, \ldots, t-1,t, t+1, \ldots, \ell,$.
\end{itemize}

If envy were possible, then:
\begin{multline*}
x_{i,1} + d\, v_{i,2} = x_{i,1} + x_{i,2}\, v_{i,2} < x_{t,1} + x_{t,2}\, v_{i,2} - 1 = x_{t,1} - 1 + k\, v_{i,2},
\end{multline*}

which implies 
\[
x_{t,1} - (\lceil d\, v_{t,2}\rceil - \lceil k\, v_{t,2}\rceil) + d\, v_{i,2} < x_{t,1} - 1 + k\, v_{i,2},
\]
\[
\implies d\, v_{i,2} - (\lceil d\, v_{t,2}\rceil - \lceil k\, v_{t,2}\rceil) < k\, v_{i,2} - 1,
\] 
\[
\implies (d - k)\, v_{i,2} - (\lceil d\, v_{t,2}\rceil - \lceil k\, v_{t,2}\rceil) < -1,
\]
\[
\implies (d - k)\, v_{i,2} < (\lceil d\, v_{t,2}\rceil - \lceil k\, v_{t,2}\rceil) - 1,
\] 
\[
\implies (d - k)\, v_{i,2} < (d\, v_{t,2} + 1 - k\, v_{t,2}) - 1.
\]

This is a contradiction.

\noindent\textit{Subcase 1.2} $\lceil d\, v_{t,2}\rceil - \lceil k\, v_{t,2}\rceil < (d - k)\, v_{t,2}$.
Here, similarly:
\[ x_{i,1} \ge x_{t,1} - (\lceil d\, v_{t,2}\rceil - \lceil k\, v_{t,2}\rceil) - 1. \]

If envy were possible:
\[
x_{i,1} + d\, v_{i,2} < x_{t,1} - 1 + k\, v_{i,2},
\]

then 
\[
x_{t,1} - (\lceil d\, v_{t,2}\rceil - \lceil k\, v_{t,2}\rceil) - 1 + d\, v_{i,2} < x_{t,1} - 1 + k\, v_{i,2},
\] 

which simplifies to 
\[
(d - k)\, v_{i,2} < (\lceil d\, v_{t,2}\rceil - \lceil k\, v_{t,2}\rceil).
\]

But by assumption of this subcase:
\[ \lceil d\, v_{t,2}\rceil - \lceil k\, v_{t,2}\rceil > (d - k)\, v_{t,2}. \]

This is a contradiction.

\noindent\textbf{Case 2:} Agent $t$ envies agent $i$.
Consider two subcases, depending on the sign of $\lceil d\, v_{t,2}\rceil - \lceil k\, v_{t,2}\rceil$ versus $(d - k)\, v_{t,2}$:

\noindent\textit{Subcase 2.1} $\lceil d\, v_{t,2}\rceil - \lceil k\, v_{t,2}\rceil > (d - k)\, v_{t,2}$.
In this case, by construction:
\[ x_{t,1} \ge x_{i,1} + (\lceil d\, v_{t,2}\rceil - \lceil k\, v_{t,2}\rceil) - 1. \]

If envy were possible:
\[
x_{i,1} + d\, v_{t,2} - 1 > x_{t,1} + k\, v_{t,2},
\] 

then substituting the bound for $x_{t,1}$:
\[
x_{i,1} + d\, v_{t,2} - 1 > x_{i,1} + (\lceil d\, v_{t,2}\rceil - \lceil k\, v_{t,2}\rceil) - 1 + k\, v_{t,2},
\] 
which simplifies to
\[
d\, v_{t,2} > (\lceil d\, v_{t,2}\rceil - \lceil k\, v_{t,2}\rceil) + k\, v_{t,2},
\]

\[
d\, v_{t,2} - k\, v_{t,2} > \lceil d\, v_{t,2}\rceil - \lceil k\, v_{t,2}\rceil,
\] 
\[
(d - k)\, v_{t,2} > \lceil d\, v_{t,2}\rceil - \lceil k\, v_{t,2}\rceil.
\]

But by assumption of this subcase:
\[ \lceil d\, v_{t,2}\rceil - \lceil k\, v_{t,2}\rceil > (d - k)\, v_{t,2}. \]

This is a contradiction.

\noindent\textit{Subcase 2.2} $\lceil d\, v_{t,2}\rceil - \lceil k\, v_{t,2}\rceil < (d - k)\, v_{t,2}$.
In this case, by construction:
\[ x_{t,1} \ge x_{i,1} + (\lceil d\, v_{t,2}\rceil - \lceil k\, v_{t,2}\rceil). \]

Because the agents are prioritized in the following order: agents $1, 2, \ldots, t-1,t, t+1, \ldots, \ell$.

If envy were possible:
\[
x_{i,1} + d\, v_{t,2} - 1 > x_{t,1} + k\, v_{t,2},
\] 

then substituting the bound for $x_{t,1}$:
\[
x_{i,1} + d\, v_{t,2} - 1 > x_{i,1} + (\lceil d\, v_{t,2}\rceil - \lceil k\, v_{t,2}\rceil) + k\, v_{t,2},
\]

which simplifies to
\[
d\, v_{t,2} - 1 > (\lceil d\, v_{t,2}\rceil - \lceil k, v_{t,2}\rceil) + k\, v_{t,2},
\]
\[
(d - k)\, v_{t,2} - 1 > \lceil d\, v_{t,2}\rceil - \lceil k\, v_{t,2}\rceil,
\] 
\[
(d - k)\, v_{t,2} - 1 > d\, v_{t,2} - (k\, v_{t,2} + 1).
\]

This is a contradiction.

Thus, envy is impossible in this case as well. The lemma is proved.
\end{proof}

\begin{claim}
There can be no envy between agent t and agent $i >l$.
\end{claim}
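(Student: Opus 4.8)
The plan is to rule out envy in each direction between agent $t$ and an agent $i>\ell$; if $\ell=n$ there is no such agent and the claim is vacuous, so assume $\ell<n$. Recall from the definition of the $(t,k)$-reallocation $X$ that every agent $i>\ell$ gets $x_{i,1}=0$ and $x_{i,2}=d+1$ (where $d=x_{t+1,2}$), while agent $t$ gets $x_{t,1}=\lceil d\,v_{t,2}\rceil-p+e_t$ with $p=\lceil k\,v_{t,2}\rceil$, where $e_t\in\{\lfloor R/\ell\rfloor,\lceil R/\ell\rceil\}$ is $t$'s share in the equitable split of the $R:=m_1-\lceil d\,v_{t,2}\rceil\,t+p$ leftover type-$1$ items over $\ell$ agents in step~5. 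Four facts will be used repeatedly: (a) $\ell\ge t+1\ge2$; (b) $d\ge k$, since $(t,k)\in\mathcal T$ with $t<n$ forces $k\le m_2/(n-t+1)$, hence $k\le\lfloor(m_2-k)/(n-t)\rfloor=d$; (c) $0\le R\le\lceil v_{t,2}\rceil$, the upper bound coming from the hypothesis $m_1\le\lceil d\,v_{t,2}\rceil\,t-\lceil(k-1)v_{t,2}\rceil$ together with the elementary inequality $\lceil k v_{t,2}\rceil-\lceil(k-1)v_{t,2}\rceil\le\lceil v_{t,2}\rceil$; (d) when this reallocation is produced by the algorithm, $(t,k)$ is an envy-direction-change point, so \Cref{lem:envy-direction-change} also supplies $m_1\ge\lceil d\,v_{t,2}\rceil\,t-p+1$, i.e.\ $R\ge1$.

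\smallskip\noindent\emph{Agent $i>\ell$ does not envy agent $t$.} If $x_{t,1}=0$ then $v_i(X_t)=k\,v_{i,2}\le d\,v_{i,2}<(d+1)v_{i,2}=v_i(X_i)$ by (b), so $i$ does not envy $t$ even without removing a good. If $x_{t,1}\ge1$, the cheapest good of $X_t$ for $i$ is of type~$1$ (as $v_{i,2}\ge v_{t,2}\ge1$), so envy would mean $(d+1)v_{i,2}+1<x_{t,1}+k\,v_{i,2}$, i.e.\ $x_{t,1}>(d+1-k)v_{i,2}+1\ge(d+1-k)v_{t,2}+1$ (using $d+1-k\ge1$ by (b)). On the other hand $e_t\le\lceil R/\ell\rceil\le\lceil\lceil v_{t,2}\rceil/2\rceil\le v_{t,2}$, using (a), (c) and an elementary check valid for $v_{t,2}\ge1$; combined with $\lceil d\,v_{t,2}\rceil<d\,v_{t,2}+1$ and $p\ge k\,v_{t,2}$ this gives $x_{t,1}=\lceil d\,v_{t,2}\rceil-p+e_t<(d\,v_{t,2}+1)-k\,v_{t,2}+v_{t,2}=(d+1-k)v_{t,2}+1$, contradicting the previous line.

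\smallskip\noindent\emph{Agent $t$ does not envy agent $i>\ell$.} Here $X_i$ holds only type-$2$ goods, each worth $v_{t,2}$ to $t$, so envy would mean $x_{t,1}+k\,v_{t,2}+v_{t,2}<(d+1)v_{t,2}$, i.e.\ $x_{t,1}<(d-k)v_{t,2}$. If $\lceil d\,v_{t,2}\rceil-p\ge(d-k)v_{t,2}$ we are done at once, since $x_{t,1}\ge\lceil d\,v_{t,2}\rceil-p$. Otherwise $\lceil d\,v_{t,2}\rceil-p<(d-k)v_{t,2}$, which is exactly the case where step~5 uses the ordering $1,\dots,t-1,t+1,\dots,\ell,t$ that puts agent $t$ last, so $e_t=\lceil R/\ell\rceil\ge1$ by (d). Since moreover $\lceil d\,v_{t,2}\rceil-\lceil k\,v_{t,2}\rceil>d\,v_{t,2}-k\,v_{t,2}-1=(d-k)v_{t,2}-1$, we obtain $x_{t,1}=\lceil d\,v_{t,2}\rceil-p+e_t>(d-k)v_{t,2}-1+1=(d-k)v_{t,2}$, a contradiction again.

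\smallskip The step I expect to be the main obstacle is the second direction: it genuinely relies on at least one leftover type-$1$ item being redistributed, i.e.\ $R\ge1$ (when $R=0$ the reallocation coincides with the $(t,k)$-split-allocation, which at an envy-direction-change point is left-envious and thus not EFX), so one must invoke the lower bound of \Cref{lem:envy-direction-change} rather than only the displayed hypothesis of \Cref{thm:reallocation-efx-po}. Besides this, the care required is purely bookkeeping: identifying, for each of the two priority orders of step~5, whether agent $t$ lands in the last block of the equitable split, and tracking the ceilings and fractional parts, both of which are routine.
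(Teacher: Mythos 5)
Your argument is correct and follows the same overall route as the paper's: split by envy direction, then by which priority ordering step~5 of the reallocation uses, and derive a contradiction from the bounds $\lceil d v_{t,2}\rceil - p \gtrless (d-k)v_{t,2}$. The one genuine difference is in the direction where $i>\ell$ envies $t$: the paper bounds $x_{t,1}$ by comparing it to $x_{j,1}$ for an intermediate agent $j\in(t,\ell]$ and then invokes the properness bound $x_{j,1}<v_{t,2}$ established earlier in the proof of the theorem, whereas you bound agent $t$'s leftover share directly via $R\le\lceil k v_{t,2}\rceil-\lceil(k-1)v_{t,2}\rceil\le\lceil v_{t,2}\rceil$ and $\ell\ge 2$; your route is more self-contained since it uses only the displayed hypothesis on $m_1$ rather than a previously proved claim. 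You are also right to flag that the direction where $t$ envies $i>\ell$ needs $R\ge 1$, i.e.\ the \emph{lower} bound of \Cref{lem:envy-direction-change}, which is not part of the stated hypothesis of \Cref{thm:reallocation-efx-po} --- the paper's own proof quietly imports the same inequality by citing that lemma, so this is a shared (and, in the algorithm's context, harmless) assumption rather than a defect of your argument; your version is in fact slightly more careful here, since the paper asserts $x_{t,1}\ge\lceil d v_{t,2}\rceil-\lceil k v_{t,2}\rceil+1$ unconditionally while that lower bound on $e_t$ only holds when agent $t$ is last in the priority order, the other branch being saved (as you note) by $\lceil d v_{t,2}\rceil-p\ge(d-k)v_{t,2}$.
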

\begin{proof}
Let us consider two cases, depending on whether $t$ envies $i$ or $i$ envies $t$.\\
\textbf{Case 1:} Agent $t$ envies agent $i$.

By Lemma 1, 
\[\lceil x_{t+1,2}\cdot v_{t,2}\rceil t-\lceil kv_{t,2}\rceil + 1\le m_1 \]

Then we know that
\[ x_{t,1} \ge \lceil d\, v_{t,2}\rceil - \lceil k\, v_{t,2}\rceil + 1. \]

If envy exists, then:
\[ x_{t,1} + x_{t,2}\, v_{t,2} < (x_{i,2} - 1)\, v_{t,2} = d\, v_{t,2}. \]

But $x_{t,2} = k$ and $x_{i,2} = d+1$. Substituting:
\[ \lceil d\, v_{t,2}\rceil - \lceil k\, v_{t,2}\rceil + 1 + k\, v_{t,2} < d\, v_{t,2}. \]

Bringing everything to one side:
\[ -\,\lceil k\, v_{t,2}\rceil + 1 + k\, v_{t,2} < 0. \]

But $k\, v_{t,2} + 1 \ge \lceil k\, v_{t,2}\rceil$. Contradiction.
Thus, an agent $t$ cannot envy an agent $i > l$.

\noindent\textbf{Case 2:} Agent $i > l$ envies agent $t$.

Consider two subcases, depending on the relation between $\lceil d\, v_{t,2}\rceil - \lceil k\, v_{t,2}\rceil$ and $(d - k)\, v_{t,2}$:

\noindent\textit{Subcase 2.1} $\lceil d\, v_{t,2}\rceil - \lceil k\, v_{t,2}\rceil > (d - k)\, v_{t,2}$.

In this case, by construction:
\[ x_{t,1} \le x_{j,1} + (\lceil d\, v_{t,2}\rceil - \lceil k\, v_{t,2}\rceil) < v_{t,2} + (\lceil d\, v_{t,2}\rceil - \lceil k\, v_{t,2}\rceil), \]

where $j$ is an agent $l \ge j>t$ and by ~\eqref{equat:le-case-b}, $x_{j,1} < v_{t,2}$.

If envy were possible, then:
\begin{multline*}
(d+1)\, v_{i,2} = x_{i,2}\, v_{i,2} < x_{t,1} - 1 + x_{t,2}\, v_{i,2} \le  v_{t,2} + (\lceil d\, v_{t,2}\rceil - \lceil k\, v_{t,2}\rceil) + k\, v_{t,2} - 1.
\end{multline*}

Note that $k\, v_{t,2} + v_{t,2} = (k + 1)\, v_{t,2}$. Therefore:
\begin{multline*}
v_{t,2} + (\lceil d\, v_{t,2}\rceil - \lceil k\, v_{t,2}\rceil) + k\, v_{t,2} - 1 =
\lceil d\, v_{t,2}\rceil + (k\, v_{t,2} - \lceil k\, v_{t,2}\rceil) + v_{t,2} - 1. 
\end{multline*}

But $k\, v_{t,2} - \lceil k\, v_{t,2}\rceil \le 0$. Hence
\[ x_{i,2}\, v_{i,2} < \lceil d\, v_{t,2}\rceil + v_{t,2} - 1. \]

By definition $x_{i,2} = d+1$, and since $v_{i,2} \le v_{t,2}$, we have
\[ (d+1)\, v_{i,2} \le (d+1)\, v_{t,2}. \]

Thus,
\[ (d+1)\, v_{i,2} < \lceil d\, v_{t,2}\rceil + v_{t,2} - 1 \le (d+1)\, v_{t,2}, \]
which is impossible, as the left side cannot be strictly less than the right side for all $v_{i,2} \le v_{t,2}$. Contradiction.

\noindent\textit{Subcase 2.2} $\lceil d\, v_{t,2}\rceil - \lceil k\, v_{t,2}\rceil \le (d - k)\, v_{t,2}$.

Then by our construction:
\[ x_{t,1} \le x_{j,1} + 1 + (\lceil d\, v_{t,2}\rceil - \lceil k\, v_{t,2}\rceil) < v_{t,2} + 1 + (d - k)\, v_{t,2}, \]
where $j$ is an agent $l \ge j>t$, and by ~\eqref{equat:le-case-b}, $x_{j,1} < v_{t,2}$.

If envy were possible, then:
\[
(d+1)\, v_{i,2} = x_{i,2}\, v_{i,2} < x_{t,1} - 1 + x_{t,2}\, v_{i,2} \le v_{t,2} + 1 + (d - k)\, v_{t,2} + k\, v_{i,2} - 1 = (d+1)\, v_{i,2}.
\]

That is,
\[ (d+1)\, v_{i,2} < (d+1)\, v_{i,2}, \]
which is impossible.

Thus, all cases have been considered, and the lemma is proved.
\end{proof}

We have already shown that the allocation $X$ satisfies Pareto optimality by Claim~7. Moreover, possible cases of envy have been considered in Claims~8, 9, 10, and 11. It remains only to prove that there can be no envy between agent $t$ and any agent $i < t$. Once this is established, the theorem will be proved.

Note that, since $m_1 > p (t-1)$, and since in the allocation of the first type of items, agent $t$ is always prioritized above all agents $i < t$, the proof of absence of envy in this case proceeds analogously to Lemma~5.

Therefore, the theorem is proved.

\end{proof}

\section{Conclusion}

 In this work, we have made progress in understanding the existence and computability of EFX+PO allocations, establishing that such allocations always exist for two types of goods with positive utilities and developing an efficient $\mathcal{O}(n \log n + \log m)$-time algorithm. We are interested in whether these results can be extended to several simple independent cases of a more general open question.
 It represents critical next steps in the broader research aimed at characterizing when EFX and EFX+PO allocations are guaranteed to exist.

 \begin{question}
    Does an EFX allocation always exist? Does an EFX+PO allocation always exists when all utilities are positive? At least in two settings:
    \begin{enumerate}
        \item[(a)] utility matrix has rank $2$,
        \item[(b)] three types of goods.
    \end{enumerate}
\end{question}

\bibliography{division}
\bibliographystyle{apalike}

\end{document}